\DeclareMathOperator*{\Exp}{\ensuremath{{\mathbb{E}}}}
\DeclareMathOperator*{\Prob}{\ensuremath{\textnormal{Pr}}}
\renewcommand{\Pr}{\Prob}
\newcommand{\prob}[1]{\Pr\paren{#1}}
\newcommand{\card}[1]{|#1|}
\newcommand{\paren}[1]{\left ( #1 \right )}
\newcommand{\bracket}[1]{\left [ #1 \right ]}
\newcommand{\expect}[1]{\Exp\bracket{#1}}
\newcommand{\var}[1]{\textnormal{Var}\bracket{#1}}
\newcommand{\set}[1]{\ensuremath{\left\{ #1 \right\}}}
\newcommand{\poly}{\mbox{\rm poly}}
\newcommand{\eps}{\varepsilon}
\renewcommand{\algorithmicrequire}{\textbf{Input:}}
\renewcommand{\algorithmicensure}{\textbf{Output:}}
\theoremstyle{definition}
\newtheorem{definition2}{Definition}
\theoremstyle{definition}
\newtheorem{example2}[definition2]{Example}
\theoremstyle{definition}
\newtheorem{property}[definition2]{Property}
\theoremstyle{definition}
\newtheorem{theorem2}[definition2]{Theorem}
\theoremstyle{definition}
\newtheorem{lemma2}[definition2]{Lemma}
\theoremstyle{definition}
\newtheorem{corollary2}[definition2]{Corollary}
\theoremstyle{definition}
\newtheorem{observation2}[definition2]{Observation}
\theoremstyle{definition}
\newtheorem{claim2}[definition2]{Claim}
\theoremstyle{definition}
\newtheorem{note2}[definition2]{Note}
\theoremstyle{definition}
\newtheorem{remark2}[definition2]{Remark}
\theoremstyle{definition}
\newtheorem{research2}[definition2]{Research Direction}
\theoremstyle{definition}
\newtheorem{conjecture2}[definition2]{Conjecture}
\theoremstyle{definition}
\newtheorem{proposition}[definition2]{Proposition}
\theoremstyle{definition}
\newenvironment{tbox}{\begin{tcolorbox}[
		enlarge top by=3pt,
		enlarge bottom by=3pt,
		boxsep=0pt,
		left=4pt,
		right=4pt,
		top=10pt,
		arc=0pt,
		boxrule=1pt,toprule=1pt,
		colback=blue!2,
		colframe=blue,
		]
	}
{\end{tcolorbox}}
\newenvironment{tbox2}{\begin{tcolorbox}[
		enlarge top by=3pt,
		enlarge bottom by=3pt,
		breakable,
		boxsep=0pt,
		left=4pt,
		right=4pt,
		top=10pt,
		arc=0pt,
		boxrule=1pt,toprule=1pt,
		colback=blue!2,
		colframe=blue,
		]
	}
	{\end{tcolorbox}}
\newtheorem{mdalg}{Algorithm}
\title{Decremental Matching in General Graphs}
\author[1]{Sepehr Assadi\thanks{sepehr.assadi@rutgers.edu. Supported by NSF CAREER Grant CCF-2047061, and a gift from Google Research.}}{}
\author[1]{Aaron Bernstein\thanks{bernstei@gmail.com. Funded by NSF CAREER Grant 1942010.}}
\author[1]{Aditi Dudeja\thanks{aditi.dudeja@rutgers.edu}}
\affil{Rutgers University}
\date{}
\begin{document}
\maketitle
\begin{abstract}
We consider the problem of maintaining an approximate maximum integral matching in a dynamic graph $G$, while the adversary makes changes to the edges of the graph. The goal is to maintain a $(1+\eps)$-approximate maximum matching for constant $\eps>0$, while minimizing the update time. In the fully dynamic setting, where both edge insertion and deletions are allowed, Gupta and Peng (see \cite{GP13}) gave an algorithm for this problem with an update time of $O(\nicefrac{\sqrt{m}}{\eps^2})$.

Motivated by the fact that the $O_{\eps}(\sqrt{m})$ barrier is hard to overcome (see Henzinger, Krinninger, Nanongkai, and Saranurak \cite{HKNS15}; Kopelowitz, Pettie, and Porat \cite{KPP16}), we study this problem in the \emph{decremental} model, where the adversary is only allowed to delete edges. Recently, Bernstein, Probst-Gutenberg, and Saranurak (see \cite{BPT20}) gave an $O(\poly(\nicefrac{\log n}{\eps}))$ update time decremental algorithm for this problem in \emph{bipartite graphs}. However, beating $O(\sqrt{m})$ update time remained an open problem for \emph{general graphs}. 

In this paper, we bridge the gap between bipartite and general graphs, by giving an $O_{\eps}(\poly(\log n))$ update time algorithm that maintains a $(1+\eps)$-approximate maximum integral matching under adversarial deletions. Our algorithm is randomized, but works against an adaptive adversary. Together with the work of Grandoni, Leonardi, Sankowski, Schwiegelshohn, and Solomon \cite{GLSSS19} who give an $O_{\eps}(1)$ update time algorithm for general graphs in the \emph{incremental} (insertion-only) model, our result essentially completes the picture for partially dynamic matching. 
\end{abstract} 

\section{Introduction}

In dynamic graph algorithms, the main goal is to maintain a key property of the graph while an adversary makes changes to the edges of the graph. An algorithm is called \emph{incremental} if it handles only insertions, \emph{decremental} if it handles only deletions and \emph{fully dynamic} if it handles both insertions as well as deletions. The goal is to minimize the update time of the algorithm, which is the time taken by the algorithm to adapt to a single adversarial edge insertion or deletion and output accordingly. For incremental/decremental algorithms, one typically seeks to minimize the \emph{total update time}, which is the aggregate sum of update times over the \emph{entire} sequence of edge insertions/deletions.\\ \\
We consider the problem of maintaining a $(1+\eps)$-approximation to the maximum matching in a dynamic graph. In the fully dynamic setting, the best known update time for this problem is $O(\sqrt{m})$ (see \cite{GP13}), and the conditional lower bounds proved in the works of \cite{HKNS15} and \cite{KPP16} suggest that $O(\sqrt{m})$ is a hard barrier to break through. For this reason, several relaxations of this problem have been studied. For example, one line of research has shown that we can get considerably faster update times if we settle for large approximation factors (see for example \cite{BHI15,BHN16,BK21,BS16,Wajc2020,RSW22,BK22}). Another research direction has been to consider the more relaxed incremental or decremental models. In the incremental (insertion-only) setting, there have been a series of upper and lower bound results \cite{BLSZ14,Dahlgaard16,Gupta2014}, culminating in the result of \cite{GLSSS19}, who gave an optimal $O_{\eps}(m)$ \emph{total} update time (amortized $O_{\eps}(1)$) for $(1+\eps)$-approximate maximum matching. \\ \\
 The decremental (deletion-only) setting  requires an entirely different set of techniques. In fact, $O(\sqrt{m})$ update time ($O(m^{1.5})$ total time) remained the best known until recently, when \cite{BPT20} gave an $\poly(\nicefrac{\log n}{\eps})$ amortized update time algorithm for bipartite graphs. However, achieving a similar result for non-bipartite graphs remained an open problem. Our main theorem essentially closes the gap between bipartite and general graphs. 

\begin{restatable}{theorem2}{thmmain}\label{thm:mainintegral}	
		Let $G$ be an unweighted graph and let $\eps\in (0,\nicefrac{1}{2})$. Then, there exists a decremental algorithm with total update time $\Tilde{O}_{\eps}(m)$ (amortized $\Tilde{O}(1)$) that maintains an integral matching $M$ of value at least $(1-\eps)\cdot \mu(G)$, with high probability, where $G$ refers to the current version of the graph. The algorithm is randomized but works against an adaptive adversary. The dependence on $\eps$ is $2^{O(\nicefrac{1}{\eps^2})}$.
\end{restatable}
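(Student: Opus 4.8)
The plan is to lift the bipartite decremental algorithm of Bernstein--Probst-Gutenberg--Saranurak~\cite{BPT20} to general graphs by controlling the role of odd structure. The standard obstacle is that in general graphs, $(1+\eps)$-approximate matchings are characterized not just by augmenting paths but by the Tutte--Berge / odd-set structure, so one cannot directly run a bipartite "short augmenting path" routine. The first step is therefore to recall the classical fact that a matching $M$ with no augmenting path of length $\le 2k-1$ is already a $(1 - O(1/k))$-approximation, and to restrict attention to eliminating \emph{short} augmenting paths, of length $O(1/\eps)$. The core idea is to reduce the task of finding/destroying short augmenting paths in a general graph to a bounded number of \emph{bipartite} instances via a contraction scheme: one maintains a hierarchy of "blossoms" (odd subgraphs) of bounded depth $O(1/\eps)$, contracts them, and runs the bipartite decremental matching machinery on the contracted (bipartite-like) graph. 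Since augmenting paths of length $\le O(1/\eps)$ pass through at most $O(1/\eps)$ blossoms, the depth of the blossom hierarchy can be capped, which is where the $2^{O(1/\eps^2)}$ dependence will come from.

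Next I would set up the decremental maintenance itself. The key structural object is a "kernel" or sparse subgraph — analogous to the $\eps$-approximate matching-preserving sparsifiers used in~\cite{BPT20} — that has only $\tilde O_\eps(\mu(G))$ edges and contains a near-maximum matching of $G$, and that can be maintained under deletions. On top of this kernel one runs a static $O(m \cdot \mathrm{poly}(1/\eps))$ general-graph approximate matching algorithm (e.g.\ a Hopcroft--Karp-style bounded-phase blossom algorithm) and \emph{rebuilds} periodically: after a batch of $\Theta(\eps \cdot \mu)$ deletions, recompute from scratch on the current kernel. Because each rebuild costs $\tilde O_\eps(\mu)$ and is amortized over $\Omega(\eps\mu)$ deletions, and because $\mu$ only decreases, the total update time telescopes to $\tilde O_\eps(m)$. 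The delicate point is maintaining the kernel (and its blossom hierarchy) decrementally in $\tilde O_\eps(1)$ amortized time per deletion while guaranteeing it still witnesses a $(1-\eps)$-approximation; this is where the bipartite-reduction of the previous step is invoked as a black box, one blossom level at a time.

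The adaptivity of the adversary needs care: the randomization in~\cite{BPT20} (and in the sparsifier/kernel construction) must either already be adaptive or be made so. Here I would rely on the fact that the only randomized component is the decremental \emph{bipartite} subroutine of~\cite{BPT20}, which is already stated to work against an adaptive adversary; since our reduction is a deterministic contraction/rebuild wrapper around it, the adaptive guarantee is inherited, and a union bound over the $\mathrm{poly}(n)$ rebuild epochs gives the "with high probability" statement.

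\textbf{Main obstacle.} The hardest part will be the decremental maintenance of the blossom hierarchy: when an edge inside a contracted blossom is deleted, the blossom may "break", exposing vertices that were previously hidden inside an odd set, and one must re-contract and re-route without paying $\Omega(\sqrt m)$ per such event. Showing that the amortized cost of all such blossom-restructuring operations over the whole deletion sequence is $\tilde O_\eps(m)$ — essentially, that each vertex enters/leaves a blossom at each of the $O(1/\eps)$ levels only $\mathrm{poly}(1/\eps,\log n)$ times amortized — is the technical crux, and is precisely the step where general graphs genuinely differ from the bipartite case handled in~\cite{BPT20}.
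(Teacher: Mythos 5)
Your proposal takes a genuinely different route from the paper, and it has a gap at exactly the place you flag as the ``technical crux.'' The paper does \emph{not} maintain any blossom hierarchy decrementally, and does \emph{not} reduce to bipartite instances via contraction. Instead, it lifts the congestion-balancing framework of \cite{BPT20}: it maintains a capacity function $\kappa$ on edges, repeatedly calls a subroutine \textsc{M-or-E*}$()$ that either returns a fractional matching nearly obeying $\kappa$ or a small set $E^*$ of bottleneck edges along which to raise capacities, and rounds the fractional matching via \cite{Wajc2020}. The general-graph difficulties are handled by (i) a sampling theorem showing that if $G_s$ samples each edge with probability $\propto\kappa(e)$, then $\mu(G_s)$ estimates the odd-set-constrained $\mu(G,\kappa)$; (ii) a structural fact that a fractional matching putting either $\approx 1$ or $\le \eps$ on each (collapsed) edge already satisfies all small odd-set constraints and hence avoids the integrality gap, letting bipartite flow machinery be used on the low-capacity part; and (iii) reading off $E^*$ from the LP dual ($y,z$ variables, including odd-set duals) returned by the static general-graph algorithm of \cite{DP14} on $G_s$, rather than from a min-cut. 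Blossoms appear only implicitly through these static duals; nothing odd-structured is maintained across updates.

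The gap in your plan is two-fold, and both parts are not incidental but load-bearing. First, your rebuild strategy---``recompute on the kernel after $\Theta(\eps\mu)$ deletions''---is essentially the lazy algorithm the paper explicitly rules out against an adaptive adversary: the adversary simply keeps deleting matched edges of the current kernel, and nothing in your proposal makes the kernel robust to that. The paper's entire point of congestion balancing is to \emph{force} the maintained (fractional) matching to spread mass thinly over many edges via $\kappa$, so the adversary must delete $\Omega(\eps\mu/\kappa_{\max})$ edges per unit of lost matching value; you have no analogue of $\kappa$, $E^*$, or the potential $\Pi(G,\kappa)$ that bounds total capacity increase. Second, the decremental blossom-hierarchy maintenance you invoke ``as a black box'' is not a known primitive: nobody has shown that blossom restructuring under adversarial deletion can be charged at $\tilde O_\eps(1)$ amortized, and you give no argument for the amortization you assert (each vertex enters/leaves a blossom level $\mathrm{poly}(1/\eps,\log n)$ times). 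You are, in effect, conjecturing the hard part. Finally, as a smaller point, the $2^{O(1/\eps^2)}$ dependence in the paper does not come from a blossom-hierarchy depth; it comes from the sampling rate $\rho_\eps$ and the capacity scaling factor $\alpha_\eps$, both of the form $\log n\cdot 2^{\Theta(1/\eps^2)}$, needed so that a union bound over all Tutte--Berge partitions (respectively, over all discretized dual assignments $(\vec y,\vec z)$) succeeds with high probability.
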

Our result largely completes the picture for partially dynamic matching by showing that in general graphs one can achieve $\poly(\log n)$ update time in both incremental and decremental settings. But there are a few secondary considerations that remain. Firstly, our update time is $O(\poly(\log n))$, rather than the $O(1)$ for the incremental setting (see \cite{GLSSS19}). Secondly, both the incremental of \cite{GLSSS19} and our decremental result for general graphs have an exponential dependence on $\nicefrac{1}{\eps}$, whereas incremental/decremental algorithms for bipartite graphs have a  polynomial dependence on $\nicefrac{1}{\eps}$ (see \cite{GLSSS19,Gupta2014}).  \\ \\
In algorithms literature, it has been the case that efficient matching algorithms for bipartite graphs do not easily extend to general graphs. Existence of blossoms (among other things), poses a technical challenge to obtaining analogous results for the general case. Consider the polynomial time algorithms for maximum matching for bipartite graphs, the most efficient algorithm, using alternating BFS, was discovered by Hopcroft and Karp; Karzanov (see \cite{HK73,Karzanov1973}) in 1973. However, several new structural facts and algorithmic insights were used by Micali and Vazirani to get the same runtime for general graphs (see \cite{MV80}). This is also a feature of recent work in different models, such as the streaming (see \cite{AG11,EKMS11} and \cite{MMU22}), fully dynamic (see \cite{BS15,BS16} and \cite{BHN16,BLM20}), and parallel models (see \cite{FGT16,ST17} and \cite{MV00,NV20}). We refer the reader to Section 1.3 of \cite{AV20} for a detailed discussion of this phenomenon.

\section{High-Level Overview}
\label{sec:overview}
Our algorithm for \Cref{thm:mainintegral} follows the high-level framework of \emph{congestion balancing} introduced by Bernstein, Probst-Gutenberg, and Saranurak \cite{BPT20}. They used this framework to solve approximate decremental matching in bipartite graphs, and also to solve more general flow problems. But the framework as they used it was entirely limited to cut/flow problems. As we discuss below, extending this framework to non-bipartite graphs introduces significant technical challenges. Moreover, our result shows how the key subroutine of congestion balancing is naturally amenable to a primal-dual analysis, which we hope can pave the way for this technique to be applied to other decremental problems.

\subsection{Previous Techniques} 

A key observation of \cite{BPT20} is that it is sufficient to develop an $\Tilde{O}_{\eps}(m)$ algorithm that does the following: it either maintains a fractional matching of size at least $(1-2 \eps)\cdot \mu(G)$ or certifies that $\mu(G)$ has dropped by a $(1-\eps)$ factor because of adversarial deletions. Since a result of~\cite{Wajc2020} enables us to round any bipartite fractional matching to an integral matching of almost the same value, their observation gives an algorithm to maintain an integral matching of size $(1-3\eps)\cdot \mu(G)$ in $\Tilde{O}_{\eps}(m)$ time under adversarial deletions. To motivate why \cite{BPT20} consider computing a fractional matching, consider the following ``lazy'' algorithm that works with an integral matching: compute an $(1+\eps)$ approximate integral matching $M$ of $G$ using a static $O(\nicefrac{m}{\eps})$ algorithm, wait for $\eps\cdot \mu(G)$ edges of $M$ to be deleted and then recompute the matching. Since we assume an adaptive adversary, the update time could be as large as $\Omega(\nicefrac{m^2}{\eps\cdot n})$, this is because the adversary could proceed by only deleting edges of $M$. As a result, the goal should be to maintain a \emph{robust} matching that can survive many deletions. Thus, \cite{BPT20} aim to maintain a ``balanced'' fractional matching $\vec{x}$ that attempts to put a low value on every edge. In doing so, the adversary will have to delete a lot of edges of $G$ to reduce the value of $\vec{x}$ by $\eps\cdot \mu(G)$. 

\paragraph*{Balanced Fractional Matching in Bipartite Graphs}

In order to ensure that the fractional matching is spread out and robust, the authors of \cite{BPT20} impose a capacity function $\kappa$ on the edges of the graph (initially, all edges have low capacity) and compute a fractional matching obeying these capacities. The main ingredient of the algorithm is the subroutine $\textsc{M-or-E*}(G,\eps,\kappa)$ which returns one of the following in $\Tilde{O}_{\eps}(m)$ time:
\begin{enumerate}
	\item A fractional matching $\vec{x}$ such that $\sum_{e\in E}x(e)\geqslant (1-\eps)\cdot \mu(G)$ and $x(e)\leqslant \kappa(e)$ for all $e\in E$, or,
	\item\label{item:one1} A set of edges $E^*$ such that have the following two properties. 
	\begin{enumerate}
	\item\label{item:onea} The total capacity through $E^*$ must be small: $\kappa(E^*)=O(\mu(G)\log n)$ and,
	\item\label{item:oneb} For all $\card{M}\geqslant (1-3\eps)\cdot \mu(G)$, $\card{M\cap E^*}\geqslant \eps\cdot \mu(G)$. 
	\end{enumerate}
\end{enumerate}
Property \ref{item:onea} ensures that the total capacity increase is small, while Property \ref{item:oneb} ensures that we only increase capacity on important edges that are actually needed to form a large matching. The authors of \cite{BPT20} show that $\textsc{M-or-E*}()$ can be used as a black-box to solve decremental matching: at each step, \textsc{M-or-E*}() is used to find a large fractional matching $\vec{x}$ (this matching is then rounded using \cite{Wajc2020} to get an integral matching), or to output the set $E^*$ along which we increase capacities. They are able to show that because of Properties \ref{item:onea} and \ref{item:oneb}, the edge capacities remain small on average.\\ \\
The \emph{congestion balancing} framework of \cite{BPT20} thus, consists of an outer algorithm that uses \textsc{M-or-E*}() as a subroutine. The outer algorithm for bipartite graphs carries over to general graphs as well. But, \textsc{M-or-E*}() is significantly more challenging to implement for the case of general graphs, so this subroutine will be our focus for the rest of the high level review. For the case of bipartite graphs the algorithm \textsc{M-or-E*}() is easier to implement because maximum fractional matchings correspond to maximum flows in bipartite graphs. Hence, existing algorithms for approximate maximum flows can be used to find the approximate maximum fractional matching obeying capacity $\kappa$. Moreover, if such a fractional matching is not large, then in bipartite graphs, the set of bottleneck edges is exactly a minimum cut of the graph. For general graphs, due to the odd set constraints, max flow, which was the key analytic and algorithmic tool in \cite{BPT20}, no longer corresponds to a maximum fractional matching that avoids the integrality gap.

\subsection{Our Contribution: Implementing \textsc{M-or-E*()} in General Graphs}

At a high-level, there are several structural and computational challenges to implementing \textsc{M-or-E*}() in the case of general graphs. We explain what the potential impediments are, and detail how our techniques circumvent these. 
\subparagraph*{Fractional Matchings in General Graphs}
In general graphs, not all fractional matchings have a large integral matching in their support and therefore, cannot be rounded to give a large matching. While fractional matchings that obey odd set constraints do avoid the integrality gap, it seems hard to compute such a matching that also obeys capacity function $\kappa$. In order to get past this, we define a candidate fractional matching that is both easy to compute as well as contains a large integral matching in its support. More concretely, our fractional matching either puts flow one through an edge, or a flow of value at most $\eps$. It can be proved that such a fractional matching obeys all small odd set constraints, and avoids the integrality gap. Our main contribution are two structural lemmas which show that we can find our candidate matching efficiently.
\begin{enumerate}
	\item First, given a graph $G$ with capacity $\kappa$, we want to determine if the \emph{value} of the maximum fractional matching obeying $\kappa$ and odd set constraints (denoted $\mu(G,\kappa)$) is at least $(1-\eps)\cdot \mu(G)$. In general graphs, we do this by giving a sampling theorem: let $G_s$ be the graph created by sampling edge $e$ with probability proportional to $\kappa(e)$, then $\mu(G_s)\geqslant \mu(G,\kappa)-\eps\cdot n$ with high probability. Thus, $\mu(G_s)$ is a good proxy for $\mu(G,\kappa)$ and it can be estimated efficiently by running any integral matching algorithm on $G_s$. 
\item 	Suppose we have determined at some point that $\mu(G,\kappa)$ is large, we are still left with the task of finding a fractional matching. Our next contribution is a structural theorem that enables us to deploy existing flow algorithms to find such a matching. Let $M$ be the approximate maximum matching of $G_s$. Let $M_{L}=\set{e\in M\mid \kappa(e)\leqslant \beta}$ and $M_{H}=\set{e\in M\mid \kappa(e)> \beta}$, where $\beta=O(\poly(\log n))$, and $L$ and $H$ are for low and high respectively. Let $V_{L}=V(M_L)$ and $V_{H}=V(M_H)$. Intuitively, $M$ breaks up our vertex set into two parts: vertices matched by low capacity edges (denoted $V_L$) and those that are matched by high capacity edges (denoted $V_{H}$). By adding some slack to our capacity constraints (we show that some slack can be incorporated in congestion balancing framework), we are able to treat the high-capacity edges as integral and compute a matching on $V_H$ using a black box for integral matching in general graphs. Additionally, we show that the maximum fractional matching on low capacity edges of $G[V_L]$ has value at least as much as $\card{M_L}$ (the low capacity edges of $M$), up to an additive error of $\eps\cdot n$. To compute this fractional matching, we show that because we are only considering edges of small capacity, small odd set constraints are automatically satisfied, so we can transform $G[V_L]$ into a \emph{bipartite graph} and then use an existing flow algorithm.
\end{enumerate} 
The second obstacle is finding the set $E^*$. As mentioned before, for the case of bipartite graphs, the max flow-min cut theorem gives us an easy characterization of the bottleneck edges. However, for the case of general graphs, this characterization is not clear. To get around this, we consider the dual of the matching LP of $G_s$, and show that the bottleneck edges can be identified by considering the dual constraints associated with the edges. This generalizes the cut-or-matching approach of \cite{BPT20}. Since $G_s$ is integral, we can compute the approximate dual by using an existing primal-dual algorithm of \cite{DP14} for integral matching in general graphs.\\ \\
Additionally, there are some secondary technical challenges as well. As mentioned before, our structural theorems only guarantee preservation of matching sizes up to an additive error of $\eps\cdot n$. When $\mu(G)=o(n)$, then the results, applied directly are insufficient for us. To get around this, we use a vertex sparsification technique to get $O_{\eps}(\log n)$ multigraphs which preserve all matchings of $G$, but contain only $O(\nicefrac{\mu(G)}{\eps})$ vertices. However, we now have to show that all of our ideas work for multigraphs as well. Finally, the rounding scheme of \cite{Wajc2020} cannot be applied as a black-box to any fractional matching in a general graph. Thus, unlike in \cite{BPT20}, we cannot use \cite{Wajc2020} as a black box and instead have to embed its techniques into the congestion balancing framework.  

\subparagraph*{Assumption on Matching Size}
Let $G$ be the input graph with vertex set $V$, note that if $\mu(G)\leqslant 100\log \card{V}$ at any time, then we can maintain a $(1+\eps)$ approximate matching using Definition 3.2 and Lemma 3.3 from \cite{GP13} to solve the problem in $\Tilde{O}(\nicefrac{m}{\eps})$ time. Thus, we only run our algorithm while $\mu(G)\geqslant 100\log \card{V}$. As mentioned before, our structural theorems, \Cref{lem:matchingGs} and \Cref{lem:sampling2} are proved for multigraphs $H$ with matching size at least $\mu(H)=\Omega(\eps\cdot \card{V(H)})$. This is because \Cref{lem:mainvertexred} allows us to reduce to the problem of decremental matching in multigraphs with large matching. Now, suppose $H$ is the multigraph obtained by running the reduction of \Cref{lem:mainvertexred} on $G$, the input graph. Then, the structural theorems, \Cref{lem:matchingGs} and \Cref{lem:sampling2} hold with probability at least $1-\exp(-\mu(H))$. Since $\mu(H)= \Omega(\mu(G))$ and $\mu(G)\geqslant 100\log \card{V}$, these structural theorems hold with high probability.

\section{Preliminaries}
\label{sec:prelimsmain}
We consider the problem of maintaining an approximate maximum integral matching in a graph $G$ in the decremental setting. In this setting, we are given a graph (possibly non-bipartite) $G_0=(V_0,E_0)$ with $\card{V_0}=n$ and $\card{E_0}=m$, and the adversary deletes edges from the graph one at a time. The goal is to maintain an approximate maximum matching of the graph $G$ as edges are deleted, while minimizing the \emph{total update time}, that is the aggregate sum of update times over the \emph{entire} sequence of deletions. 

\paragraph*{Notation.} Throughout the paper, we will use $G$ to refer to the current version of the graph, and let $V$ and $E$ be the vertex and edge sets of $G$ respectively. Additionally, let $\mu(G)$ denote the size of the maximum integral matching of $G$ (the current graph). During the course of the algorithm, we will maintain a fractional matching which corresponds to a non-negative vector $\vec{x}\in \mathbb{R}_{\geqslant 0}$ satisfying fractional matching constraints: $\sum_{e\ni v} x(e)\leqslant 1$. For a set $S\subseteq E$, then we let $x(S)=\sum_{e\in S} x(e)$. Given a capacity function $\kappa(e)$, we say that $x$ obeys $\kappa$ if $x(e) \leqslant \kappa(e)$ for all $e \in E$. For a vector $\vec{x}$, we use $\text{supp}(\vec{x})$ to be set of edges that are in the support of $\vec{x}$. For a fractional matching $\vec{x}$, we say that $\vec{x}$ satisfies odd set constraints if for every odd-sized $B\subseteq V$, $\sum_{e\in G[B]}x(e)\leqslant \frac{\card{B}-1}{2}$. 

Throughout this paper, for any $\eps\in (0,1)$, we will let $\alpha_{\eps}= \log n\cdot 2^{\nicefrac{60}{\eps^2}}$ and $\rho_{\eps}=\log n\cdot 2^{\nicefrac{40}{\eps^2}}$.  We now restate our main result.

\thmmain*

In order to solve this problem, we reduce to the case where it is sufficient to solve the same problem in multigraphs which have a large matching. More concretely, we prove the following theorem, and then show how that implies an algorithm for \Cref{thm:mainintegral}.

\begin{restatable}{theorem2}{mainfracthm}\label{thm:mainfrac}
	We are given an unweighted multi-graph $G_0=(V,E_0)$. Let $\card{V}=n$, and $\eps\in (0,\nicefrac{1}{2})$. Suppose $\mu(G_0)\geqslant  \nicefrac{\eps\cdot n}{100}$ and $\mu\geqslant (1-\eps)\cdot \mu(G)$. Suppose $G$ is subject to adversarial deletions. There is an algorithm \textsc{Dec-Matching}($G,\mu,\eps$), which processes deletions in $\Tilde{O}_{\eps}(m)$ total update time and has the following guarantees.
	\begin{enumerate}[label=(\alph*)]
		\item \label{item:onethm} When the algorithm terminates, $\mu(G)<(1-2\eps)\cdot \mu$. Upon termination the algorithm outputs ``\textsc{no}''.
		\item \label{item:twothm}Until the algorithm terminates, it maintains an integral matching with size at least $(1-20\eps)\cdot \mu$. 
	\end{enumerate}
\end{restatable}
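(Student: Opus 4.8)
Write \textsc{Dec-Matching}$(G,\mu,\eps)$ as the \emph{congestion balancing} outer loop of \cite{BPT20}, but driven by the general-graph subroutine \textsc{M-or-E*}$()$ built from \Cref{lem:matchingGs}, \Cref{lem:sampling2} and \Cref{lem:mainvertexred}, with the \cite{Wajc2020} rounding embedded into the loop rather than used as a black box. The algorithm maintains a capacity vector $\kappa\colon E\to[\kappa_0,1]$, initially uniform with $\kappa_0=\Theta(\mu/m)$, and proceeds in \emph{phases}. A phase opens with a \emph{recompute loop}: repeatedly call \textsc{M-or-E*}$(G,\eps,\kappa,\mu)$; if it returns a set $E^*$ (with $\kappa(E^*)=O(\mu\log n)$ and $\abs{M\cap E^*}\geq\eps\mu$ for every integral matching $M$ with $\abs M\geq(1-3\eps)\mu$ by \ref{item:onea}--\ref{item:oneb}), set $\kappa(e)\leftarrow\min\{2\kappa(e),1\}$ for all $e\in E^*$ and call again; if it returns a fractional matching $\vec x$ with $x\leq\kappa$, $x(E)\geq(1-O(\eps))\mu$, in which every edge carries flow $1$ or flow at most $\eps$ (hence $\vec x$ obeys all odd-set constraints on sets of size $\leq 1/\eps$), round $\vec x$ to an integral matching $M$ with $\abs M\geq(1-20\eps)\mu$ and output $M$. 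After a successful recompute, apply deletions without touching $M$ until $x(E)$ has dropped by $\eps\mu$, then start a new phase with the \emph{same} $\kappa$. If a recompute loop ever performs more than $c_\eps\log n$ capacity increases (with $c_\eps=\Theta(1/\eps)$) without success, output ``\textsc{no}'' and halt.

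For \ref{item:twothm}: the ``$1$-or-$\leq\eps$'' shape of $\vec x$ makes it avoid the integrality gap, so the embedded rounding loses only an $O(\eps)$ factor and $\abs M\geq(1-20\eps)\mu$ at the moment of computation; within a phase $M$ loses at most one edge per deletion and we recompute after at most $\eps\mu$ support-affecting deletions, so $\abs M$ stays $\geq(1-20\eps)\mu$ throughout once the $\eps\mu$ slack is folded into the constant (equivalently, compute the matching slightly larger). The hypothesis $\mu\geq(1-\eps)\mu(G)$ is preserved since $\mu(G)$ is non-increasing. For \ref{item:onethm}: if a recompute loop does more than $c_\eps\log n$ increases without success, consider, toward a contradiction, a maximum matching $M^\ast$ of size $\mu(G)$; if $\mu(G)\geq(1-2\eps)\mu$ then each returned $E^*$ doubles $\kappa$ on at least $\eps\mu$ edges of $M^\ast$, and since each edge of $M^\ast$ can be doubled only $O(\log(m/\mu))=O(\log n)$ times before reaching $1$, after $c_\eps\log n=\Theta(\log n/\eps)$ increases all of $M^\ast$ is saturated at $\kappa\equiv 1$, whence $\mu(G,\kappa)\geq\card{M^\ast}-O(\eps\card{V(H)})\geq(1-2\eps)\mu-O(\eps\mu)$ after the vertex reduction of \Cref{lem:mainvertexred} absorbs the $O(\eps n)$ additive errors of \Cref{lem:sampling2} and \Cref{lem:matchingGs}; but an $E^*$-return certifies $\mu(G,\kappa)<(1-\Omega(\eps))\mu$, a contradiction, so $\mu(G)<(1-2\eps)\mu$ as required. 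This is where the generous $20\eps$ slack in the statement is spent.

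Each \textsc{M-or-E*}$()$ call runs in $\Tilde{O}_{\eps}(m)$ time, so it suffices to bound the number of calls by $\Tilde{O}_{\eps}(1)$. For calls returning $E^*$: a potential $\Phi(\kappa)$ in the style of \cite{BPT20} increases by at least a polylogarithmic fraction of its total range with each capacity increase — precisely because $\kappa(E^*)=O(\mu\log n)$ is small relative to $\mu$ by \ref{item:onea} — while $\Phi$ ranges over a $\Tilde{O}_{\eps}(1)$-width interval, so there are $\Tilde{O}_{\eps}(1)$ capacity increases over the entire deletion sequence; consequently the total $\kappa$-mass ever present in the graph is $\Tilde{O}_{\eps}(\mu)$ (the initial mass $\Theta(\mu)$ plus $\Tilde{O}_{\eps}(1)$ increases of mass $O(\mu\log n)$ each). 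For calls returning a matching (one per phase): each phase requires the deletion of edges of total $\kappa$-capacity at least $\eps\mu$ (since $x\leq\kappa$ and $x(E)$ must drop by $\eps\mu$), and these deleted-edge sets are disjoint across phases, so there are at most $\Tilde{O}_{\eps}(\mu)/(\eps\mu)=\Tilde{O}_{\eps}(1)$ phases. Hence $\Tilde{O}_{\eps}(1)$ calls in total; per-deletion bookkeeping (maintaining $M$, tracking the drop in $x(E)$) is $O(1)$ with a counter, giving total update time $\Tilde{O}_{\eps}(m)$.

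I expect the crux to be transplanting the \cite{BPT20} amortization and certificate structure to general graphs. There \textsc{M-or-E*}$()$ returns an exact max-flow/min-cut pair, so \ref{item:onea}--\ref{item:oneb} are automatic; here $E^*$ is read off an \emph{approximate} primal-dual solution on the \emph{sampled integral} graph $G_s$ produced by \Cref{lem:matchingGs} and \Cref{lem:sampling2}, so one must verify that the approximate dual still certifies a small-capacity edge set that meets every large matching, and that stacking the $O(\eps n)$ additive errors of the sampling and structural lemmas — after $n$ has been cut to $O(\mu/\eps)$ by \Cref{lem:mainvertexred} — costs only the $O(\eps\mu)$ budgeted by the $20\eps$ slack. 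A secondary obstacle is that \cite{Wajc2020} is not applicable as a black box on arbitrary fractional matchings in general graphs, so its dynamic scaffolding must be interleaved with the capacity updates; I would arrange that each successful recompute emits a freshly rounded matching and that the rounding's own polylogarithmic overhead folds into the $\Tilde{O}_{\eps}(m)$ cost of the enclosing \textsc{M-or-E*}$()$ call.
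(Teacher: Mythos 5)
Your outer architecture matches the paper's (congestion balancing with an \textsc{M-or-E*}() subroutine and an embedded rounding step), but two steps as you have written them do not work.

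First, your within-phase handling of the integral matching is exactly the ``lazy'' scheme that fails against an adaptive adversary. You output one rounded matching $M$ per successful recompute and ``apply deletions without touching $M$ until $x(E)$ has dropped by $\eps\mu$.'' But the edges of $M$ coming from the fractional part of $\vec x$ each carry value at most $\eps$ (and possibly as little as roughly $1/n$, since capacities start near $1/n$), so the adversary can delete $\eps\mu$ or even all $\approx\mu$ edges of $M$ while decreasing $x(E)$ by far less than $\eps\mu$; your phase-ending trigger never fires and the maintained matching drops below $(1-20\eps)\mu$, violating \Cref{thm:mainfrac}\ref{item:twothm}. Your sentence ``we recompute after at most $\eps\mu$ support-affecting deletions'' contradicts your own trigger and is not implemented by your algorithm. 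This is precisely why \Cref{alg:decmatching} keeps \emph{two} counters: \textsc{Counterx} (drop in fractional value, ending the phase) and \textsc{CounterM} (number of deleted $M$-edges), and why the rounding must be maintained dynamically: the sparsifier $S$ of \Cref{thm:sparse} is updated per deletion, so that whenever $\eps\mu$ edges of $M$ are lost, $M$ is rebuilt from $S$ by \textsc{Static-Match}$(S,\eps)$ in $\tilde O(\nicefrac{\mu}{\eps})$ time, charged to those $\eps\mu$ deletions.

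Second, your amortization of the number of \textsc{M-or-E*}() calls is circular because you dropped the paper's phase-start check (recompute $\mu'=\textsc{Static-Match}(H,\eps)$ and output \textsc{no} if $\mu'\leqslant(1-3\eps)\mu$) in favor of a per-loop cap of $c_\eps\log n$ capacity increases. The potential $\Pi(G,\kappa)=\min_{M\in\mathcal M}\sum_{e\in M}\log(n\kappa(e))$ of \Cref{lem:congestbal} is bounded by $O(\mu\log n)$ only while a matching of size $(1-3\eps)\mu$ exists; the paper's check guarantees every capacity increase happens in that regime, which is what yields $O(\eps\log n)$ $E^*$-returns overall and then the phase bound of \Cref{lem:phases}. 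In your scheme, when $\mu(G)$ sits in the window $[(1-6\eps)\mu,(1-3\eps)\mu)$, a recompute loop can legitimately return $E^*$ up to the cap many times and still succeed, phase after phase, and property \ref{item:oneb} gives no intersection guarantee against matchings of that size, so the potential argument does not apply. Your own phase-count bound (total $\kappa$-mass divided by $\eps\mu$) depends on the number of $E^*$-returns, so the two bounds $K\leqslant (P+1)\cdot\Theta(\nicefrac{\log n}{\eps})$ and $P\leqslant O(\nicefrac1\eps)+O(\nicefrac{K\log n}{\eps})$ do not close, and $\tilde O_\eps(m)$ total time is not established. (Your termination argument for \ref{item:onethm} via the doubling budget of a maximum matching is essentially sound, but it is simpler and safer to terminate via the phase-start estimate as the paper does, since that simultaneously repairs the amortization. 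A minor further point: requiring $x\leqslant\kappa$ exactly is inconsistent with your own ``flow $1$ or $\leqslant\eps$'' structure; the paper allows $x(e)\leqslant\alpha_\eps^2\kappa(e)$, which only changes the phase count to $O(\alpha_\eps^3\log n)$.)
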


The main contribution of our paper is to give an algorithm that proves \Cref{thm:mainfrac}. We first show how an algorithm for \Cref{thm:mainfrac} gives us an algorithm for \Cref{thm:mainintegral}. For this, we need the following reduction, which has been used previously in other settings such as for stochastic matchings (see for example \cite{AKL19} and \cite{CCEHMMV16}). Here, we use it in the decremental setting. 

\begin{restatable}{lemma2}{lemmreduc}\label{lem:reduction}\cite{AKL19}
	Let $\delta\in (0,1)$ and let $G$ be a graph with maximum matching size at least $\mu$ and at most $(1+\delta)\cdot \mu$. There is an algorithm \textsc{Vertex-Red}($G,\mu,\delta$) that takes as input $G$, $\mu$, and $\delta$, and in $O(m\cdot \nicefrac{\log n}{\delta^4})$ time returns $\lambda=\frac{100\cdot \log n}{\delta^4}$ multi-graphs $H_1,\cdots, H_{\lambda}$ that have the following properties, where the second property holds with probability at least $1-\exp\paren{\nicefrac{-\mu \cdot \log n}{\delta^4}}$.
	\begin{enumerate}[label=(\alph*)]
		\item \label{item:first} For all $i\in [\lambda]$, $\card{V(H_i)}= \frac{4\cdot (1+\delta)\cdot \mu}{\delta}$ and $E(H_i)\subseteq E(G)$. 
		\item\label{item:third} Suppose $G$ is subject to deletions, which we simulate on each of the $H_i$'s. More concretely, if an edge $e$ is deleted from $G$, then its copy in each of the $H_i$'s (if present) is also deleted. Suppose at the end of the deletion process, $\mu(G)\geqslant \tilde{\mu}$ for some $\tilde{\mu}\geqslant (1-2\delta)\cdot \mu$. Then, $\mu(H_i)\geqslant (1-\delta)\cdot \tilde{\mu}$ for some $i\in [\lambda]$. 
	\end{enumerate}
\end{restatable}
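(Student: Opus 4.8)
The plan is to build each $H_i$ by a random vertex contraction of $G$: fix $k := \frac{4(1+\delta)\mu}{\delta}$, pick a uniformly random function $f_i \colon V \to [k]$, and let $H_i$ be the multigraph on vertex set $[k]$ that contains, for each edge $\{u,v\} \in E(G)$ with $f_i(u)\neq f_i(v)$, a corresponding (multi-)edge $\{f_i(u),f_i(v)\}$; edges with $f_i(u)=f_i(v)$ are dropped. This immediately gives property~\ref{item:first}: $\card{V(H_i)}=k$ and the surviving edges are in natural bijection with a subset of $E(G)$, which is also what makes the deletion simulation in~\ref{item:third} well defined — deleting $e$ from $G$ just deletes its unique image (if it survived the contraction) from each $H_i$. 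Running over all $\lambda = \frac{100\log n}{\delta^4}$ choices of $f_i$ takes $O(m)$ time each, for $O\!\left(m\cdot\frac{\log n}{\delta^4}\right)$ total, as claimed. Note also that any matching in $H_i$ pulls back to a matching in $G$ of the same size (distinct $H_i$-edges have distinct $G$-preimages, and endpoints that are distinct in $[k]$ have disjoint preimages under $f_i$), so $\mu(H_i)\le \mu(G)$ always; the content is the reverse inequality in~\ref{item:third}.

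For property~\ref{item:third}, the first step is to reduce to a single fixed matching. Since we only care about the final graph $G$ (after all deletions) with $\mu(G)\ge \tilde\mu \ge (1-2\delta)\mu$, fix one maximum matching $M^\star$ of that final $G$, so $\card{M^\star}=\mu(G)\ge\tilde\mu$. A matching $M\subseteq M^\star$ survives intact in $H_i$ (i.e.\ maps to a matching of the same size) precisely when $f_i$ is injective on $V(M)$ and, in fact, it suffices that $f_i$ separates the two endpoints of every edge of $M$ and never maps endpoints of two different edges of $M$ to the same value — equivalently $f_i$ is injective on $V(M)$. So it is enough to show: for a random $f\colon V\to[k]$, with probability at least some constant $c>0$ there is a sub-matching $M\subseteq M^\star$ with $\card{M}\ge (1-\delta)\tilde\mu$ on which $f$ is injective. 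Then, since the $\lambda$ functions are independent, the probability that \emph{no} $H_i$ works is at most $(1-c)^\lambda \le \exp(-c\lambda) = \exp(-\Theta(\mu\log n/\delta^4))$, absorbing constants, which is the claimed failure bound (here we use $\card{M^\star}\ge\tilde\mu\ge(1-2\delta)\mu = \Theta(\mu)$ to turn $c\lambda$ into $\Theta(\mu\log n/\delta^4)$; more carefully one argues the existence of a good sub-matching has probability $1-\exp(-\Theta(\tilde\mu))$ per $H_i$ and then boosts).

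The heart of the argument is the single-function claim, and this is the step I expect to be the main obstacle. The idea is a balls-into-bins / birthday-bound computation: we have $\card{V(M^\star)}=2\card{M^\star}\le 2\mu$ endpoints being hashed into $k = \frac{4(1+\delta)\mu}{\delta} \ge \frac{4\card{V(M^\star)}}{\delta}\cdot\frac{1}{2}\cdot\frac{1}{?}$ — more simply, $k = \Theta(\mu/\delta) \gg \card{V(M^\star)}$ — so collisions are rare. Let $X$ be the number of edges $\{u,v\}\in M^\star$ that are "spoiled", meaning either $f(u)=f(v)$, or $u$ (or $v$) collides with an endpoint of another edge of $M^\star$. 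For a fixed edge, $\Pr[f(u)=f(v)] = 1/k$, and the expected number of other endpoints colliding with $u$ is $(\,2\card{M^\star}-2\,)/k \le 2\mu/k = \delta/(2(1+\delta)) \le \delta/2$; summing, $\Exp[X] \le \card{M^\star}\cdot O(\delta) $, and by choosing the constant in $k$ appropriately we get $\Exp[X]\le \frac{\delta}{4}\tilde\mu$, say. By Markov, $\Pr[X \le \frac{\delta}{2}\tilde\mu] \ge \frac12$. Now if $X\le\frac{\delta}{2}\tilde\mu$, delete from $M^\star$ one edge for each spoiled edge; the remaining sub-matching $M$ has $\card{M}\ge \card{M^\star} - X \ge \tilde\mu - \frac{\delta}{2}\tilde\mu \ge (1-\delta)\tilde\mu$, and $f$ is injective on $V(M)$ by construction, giving $\mu(H_i)\ge\card{M}\ge(1-\delta)\tilde\mu$. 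The delicate points to get right are: (i) pinning down the constant in $k$ versus the $\delta$ in the conclusion (the paper's $k=\frac{4(1+\delta)\mu}{\delta}$ has slack built in); (ii) deciding whether to push the single-function success probability from $\frac12$ up to $1-\exp(-\Theta(\tilde\mu))$ directly via a concentration inequality (e.g.\ a bounded-differences / Azuma argument on $X$, since each coordinate of $f$ changes $X$ by $O(1)$), or to get it purely from independence across the $\lambda$ copies — either route reaches the stated bound, but the bounded-differences version is cleaner since it makes each $H_i$ individually good whp and the union over $i$ trivial; and (iii) checking that this all goes through when $G$ is already a multigraph, which it does since the only structural fact used is that $M^\star$ is a set of vertex-disjoint edges.
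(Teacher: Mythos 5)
Your construction is the same as the paper's (\textsc{Vertex-Red-Basic} is exactly your random contraction: hash $V$ into $\Theta(\mu/\delta)$ bins, keep edges whose endpoints land in different bins), and your counting for property \ref{item:first} and the running time is fine. But there is a genuine gap in how you prove property \ref{item:third}: you ``fix one maximum matching $M^\star$ of the final $G$ (after all deletions)'' and then compute probabilities over the random functions $f_i$ as if $M^\star$ were independent of them. It is not. The $f_i$ are drawn \emph{before} the deletions, and the deletion sequence --- hence the surviving matching --- may depend on the realized $H_i$'s; this is unavoidable here because the whole point of the paper is an algorithm that works against an \emph{adaptive} adversary, whose deletions depend on the matchings output, which depend on the $H_i$'s. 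Your argument is only valid for an oblivious, pre-specified deletion sequence. The paper's proof closes exactly this hole with an idea your proposal is missing: it first bounds the number of matchings of $G$ by $2^{O(\mu\log n)}$ (Observation \ref{obs:nummatch}), proves the per-matching, per-copy preservation bound $1-\exp(-\Omega(|M|\delta^3))$ (\Cref{lem:subred2}), boosts it across the $\lambda=\Theta(\log n/\delta^4)$ independent copies to $1-\exp(-\Omega(\mu\log n/\delta))$, and then union-bounds over \emph{all} near-maximum matchings of the original graph (\Cref{lem:subred1}). This yields a single high-probability event, determined before any deletion occurs, under which \emph{every} large matching of $G$ has a $(1-\delta)$-fraction sub-matching in some $H_i$; the deletion step (\Cref{lem:mainvertexred}) is then deterministic, since whatever matching survives at time $t$ already existed at time $0$ and its preserved sub-matching $M'\subseteq M$ survives with it.

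There is also a quantitative slip even in your fixed-matching analysis: with only a constant (Markov) success probability per copy, independence across the $\lambda$ copies gives failure probability $(1-c)^{\lambda}=\exp(-\Theta(\log n/\delta^4))$, which has no factor of $\mu$ in the exponent; writing this as $\exp(-\Theta(\mu\log n/\delta^4))$ ``by absorbing constants'' is not correct. Your hedge --- pushing the per-copy success to $1-\exp(-\Theta(\delta^2\tilde\mu))$ via a bounded-differences inequality on the number of spoiled edges --- would repair the exponent for a fixed matching (and is quantitatively in the same regime as the paper's Chernoff-on-bad-groups argument in \Cref{lem:subred2}), but it still must be combined with the boost-then-union-bound over all $2^{O(\mu\log n)}$ matchings to handle adaptivity; without that step the proof does not establish the lemma as it is used in \Cref{alg:proofthm1}.
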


Additionally, we will need the following algorithm to compute an integral matching in the static setting in general graphs (see \cite{DP14}).

\begin{lemma2}\label{lem:static}\cite{DP14}
	There is an $O(\nicefrac{m}{\eps})$ time algorithm \textsc{Static-Match}() that takes as input a graph $G$, and returns an integral matching $M$ of $G$ with $\card{M}\geqslant (1-\eps)\cdot \mu(G)$. 
\end{lemma2}

We now show how \Cref{thm:mainfrac} in combination with \Cref{lem:reduction} and \Cref{lem:static} implies \Cref{thm:mainintegral}. We first give an informal description of the algorithm. 

\paragraph*{Description of Algorithm for Theorem \ref{thm:mainintegral} (\Cref{alg:proofthm1})} The algorithm takes as input a graph $G$, and a parameter $\eps>0$. It first estimates the size of the maximum matching of $G$ by running \textsc{Static-Match}($G,\eps$) on input $G$ and $\eps$. We denote this estimate by $\mu$. The algorithm then initiates a procedure to create multiple parallel instances of the algorithm \textsc{Dec-Matching}(). As mentioned in \Cref{thm:mainfrac}, the algorithm \textsc{Dec-Matching}() can maintain a $(1+\eps)$-approximate maximum matching in a multi-graph, provided the size of the matching in the multi-graph is large. So, we use the reduction mentioned in \Cref{lem:reduction} which creates multigraphs $H_1,H_2,\cdots, H_{\lambda}$, which have have $O(\nicefrac{\mu(G)}{\eps})$ vertices as opposed to $n$ vertices, and we run parallel instances of \textsc{Dec-Matching}($H_i,\mu\cdot (1-\eps),\eps$) for all $i\in [\lambda]$. This procedure also instantiates a pointer \textsc{cur}, which points to the least indexed $H_i$, with a matching of size at least $(1-\eps)\cdot \mu$. The algorithm will output the matching indexed by \textsc{cur} (denoted $M_{\textsc{cur}}$). The algorithm then moves on to a procedure for handling deletions of edges. If an edge $e\in G$ is deleted, then it is deleted from each of the $H_i$'s. If at this point, the size of the matching of any $H_\textsc{cur}$ by a large value, then we increment the value of \textsc{cur}. If the sizes of the maximum matchings of all $H_i$ have dropped by a significant amount, then we can conclude that with high probability, $\mu(G)$ has also dropped by a significant amount (by \Cref{lem:reduction}\ref{item:third}), and we restart the algorithm. 


\begin{algorithm}
	\algorithmicrequire{ Graph $G$ and a parameter $\eps>0$}\\
	\algorithmicensure{ A matching $M$ with $\card{M}\geqslant (1-8\eps)\cdot \mu(G)$}
	\caption{}
	\begin{algorithmic}[1]
		\State $M\leftarrow\textsc{Static-Match}(G,\eps)$. \label{item:line1}
		\State $\mu\leftarrow \card{M}$
		\Procedure{Instantiating \textsc{Dec-Matching}()}{}
			\State $i\leftarrow 1$.
			\State \textsc{cur}$\leftarrow \lambda +1$.
			\State Let $H_1,H_2,\cdots ,H_{\lambda}$ be the multi-graphs returned by $\textsc{Vertex-Red}(G,\mu,\eps)$. 
			\State Label $H_1,\cdots, H_{\lambda}$ as \textsc{active}.
			\For{$i\leqslant \lambda$}
			\If{$\card{\textsc{Static-Match}(H_i,\eps)}<(1-\eps)\cdot \mu$}
			\State Label $H_i$ as \textsc{inactive}. 
			\EndIf
			\EndFor
			\For{$H_i$ labelled \textsc{active}}
			\State Initialize $\textsc{Dec-Matching}(H_i,\mu\cdot (1-\eps),\eps)$ (denoted $\mathcal{A}_i$).
			\State Let $M_i$ be the matching maintained by $\mathcal{A}_i$. 
			\EndFor
			\State Let \textsc{cur} be the least index $i$ such that $H_i$ is \textsc{active}. 
			\If{\textsc{cur} $>\lambda$} 
			\State \textsc{terminate}.
			\EndIf
		\EndProcedure
		\Procedure{Handling deletion of edge $e$}{}
		\For{$H_i$ labelled \textsc{active}}
		\State Feed the deletion of $e$ to $\mathcal{A}_i$. \label{alg1:dec-matching}
		\If{$\mathcal{A}_i$ returns \textsc{no}}
		\State Label $H_i$ as \textsc{inactive}.
		\EndIf
		\EndFor
		\If{$H_{\textsc{cur}}$ labelled \textsc{active}}\label{line:curactive} \Comment{Label may changed in \Cref{alg1:dec-matching}}.
		\State Continue to output $M_{\textsc{cur}}$. 
		\Else
		\If{$\textsc{cur}>\lambda$}
			\label{line:restart}	\State Return to \Cref{item:line1}. \Comment{Start over with new estimate for $\mu$}.
		\Else
		\State $\textsc{cur}\leftarrow \textsc{cur}+1$
		\State Goto \Cref{line:curactive}.
		\EndIf
		\EndIf
		\EndProcedure	
	\end{algorithmic}
	\label{alg:proofthm1}
\end{algorithm}

\begin{proof}[Proof of \Cref{thm:mainintegral}]
	We show that \Cref{alg:proofthm1} proves \Cref{thm:mainintegral}. Suppose we run \Cref{alg:proofthm1} on $G,\eps$. We first argue that it returns an integral matching of $G$ of size at least $(1-10\eps)\cdot \mu(G)$, where $\mu\geqslant \mu(G)\cdot (1-\eps)$. To see this, first observe that the multi-graphs $H_1,\cdots, H_{\lambda}$ obtained as an output of \textsc{Vertex-Red}($G,\mu\cdot (1+\eps),\eps$) have $\frac{4\cdot (1+\eps)\cdot \mu}{\eps}\leqslant \nicefrac{8\cdot \mu}{\eps}$ vertices, and $\mu(H_i)\geqslant (1-\eps)\cdot \mu$. Thus, $H_i$ satisfy the requirements of \textsc{Dec-Matching}() mentioned in \Cref{thm:mainfrac}. Hence, each instance of $\textsc{Dec-Matching}(H_i,\mu,\eps)$, until it returns \textsc{no}, maintains a matching of size at least $(1-9\eps)\cdot \mu$, which is at least $(1-10\eps)\cdot \mu(G)$. \\ \\
	We now want to bound the runtime of the algorithm. To do this, we first make the following claim.
	\begin{claim2}\label{claim:restart}
		Each time \Cref{alg:proofthm1} executes \Cref{line:restart}, $\mu(G)$ has dropped by a factor of $(1-2\eps)$. 
	\end{claim2}
	\begin{proof}First observe that every time the algorithm returns to \Cref{item:line1}, it must be the case that all the \textsc{Dec-Matching}($H_i,\mu\cdot(1-\eps),\eps$) returned \textsc{no}, which implies by \Cref{thm:mainfrac}\ref{item:onethm} that for all $i\in [\lambda]$, $\mu(H_i) < (1-2\eps)\cdot (1-\eps)\cdot \mu$. This in turn implies that $\mu(G)<(1-2\eps)\cdot\mu$. Suppose for contradiction that this is not the case, then from \Cref{lem:reduction}\ref{item:third} we can conclude that there is a matching of size at least $(1-3\eps)\cdot \mu$ in some $H_i$. More specifically, if $\mu(G)\geqslant (1-2\eps)\cdot \mu$, then it satisfies the hypothesis of \Cref{lem:reduction}\ref{item:third}, and we know that $\mu(H_i)\geqslant (1-3\eps)\cdot\mu$ for some $i\in \bracket{\lambda}$. This implies that \textsc{Dec-Matching}($H_i,(1-\eps)\cdot \mu,\eps$) wouldn't have terminated for some $i\in \bracket{\lambda}$ (by \Cref{thm:mainfrac}\ref{item:onethm}). Consequently, \Cref{line:restart} wouldn't have been executed, which is a contradiction. Thus, each time we execute \Cref{line:restart}, $\mu(G)$ has dropped from at least $\mu$ to at most $(1-2\eps)\cdot \mu$. This proves our claim. 
	\end{proof}
	From Claim \ref{claim:restart}, we can conclude that we return to \Cref{line:recomp} at most $\log_{1+\eps} n$ times. Thus, to upper bound the runtime of \Cref{alg:proofthm1}, it is sufficient to upper bound the runtime of each of the procedures. The runtime of the first procedure, which instantiates $\lambda$ instances of \textsc{Dec-Matching}($H_i,\mu\cdot(1-\eps),\eps$) is dominated by the runtime of \textsc{Vertex-Red}($G,\mu,\eps$), which is run once in the procedure, and the runtime of \textsc{Static-Match}($H_i,\eps$) which is run $\lambda$ times (once per multi-graph $H_i$). Thus, the total time of this procedure is $O(\nicefrac{m}{\eps}\cdot \log\nicefrac{1}{\eps}\cdot \lambda)$, which is $O(\nicefrac{m}{\eps^5}\cdot \log n\cdot \log \nicefrac{1}{\eps})$.\\ \\
	Next, we upper bound the runtime of the second procedure. If an edge $e$ is deleted from $G$, then the time to delete it from each of the multigraphs $H_1,\cdots, H_{\lambda}$ is $\lambda$. Finally, we are also maintaining at most $\lambda$ active instances of $\mathcal{A}_i$, and their total update time is $O_{\eps}(m\cdot \poly(\log n))$. Consequently, we have that the total update time of \Cref{alg:proofthm1} is $O_{\eps}(m\cdot \poly(\log n))$. 
\end{proof}
\section{Algorithm for \Cref{thm:mainfrac}}

We use $\mu(G,\kappa)$ to denote the value of the maximum fractional matching of $G$ obeying the capacity function $\kappa$ and the odd set constraints. To give an algorithm for \Cref{thm:mainfrac}, we need two ingredients. We first maintain a balanced fractional matching of the multigraph. To do this, we will show an algorithm \textsc{M-or-E*}() that given as input a capacitated graph $G$ either outputs a fractional matching nearly obeying these capacities if $\mu(G,\kappa)\geqslant (1-\eps)\cdot \mu(G)$, or if $\mu(G,\kappa)<(1-\eps)\cdot \mu(G)$, it outputs a set of edges $E^*$ along which capacities must be increased. When we say that the capacities are nearly obeyed, we mean that the can be exceeded only by a small factor. Then, we round the fractional matching using a known algorithm. We start with some definitions, and then go on to state the guarantees of the two algorithms, and show how it gives us an algorithm for \Cref{thm:mainfrac}.

\begin{definition2}
	Given a multi-graph $G$, for a pair of vertices $u$ and $v$, define $D(u,v)$ to be the set of edges between $u$ and $v$. Similarly, if $e$ is an edge between $u,v$, then $D(e)\coloneqq D(u,v)$. \end{definition2}

 \begin{definition2}\label{def:dist}
 	Let $G$ be a multigraph with $n$ vertices and $m$ edges. Let $\kappa$ be a capacity function on the edges. Suppose $\vec{x}$ is a fractional matching of $G$ ($\vec{x}$ is a vector of length $m$). Then, we define $\vec{x}^C$ to be a vector of support size at most $\min\set{m,{n\choose 2}}$, where for a pair of vertices, $u$ and $v$, $x^C{(u,v)}\coloneqq\sum_{e\in D(u,v)}x(e)$. Essentially, if $\vec{x}$ is a fractional matching on a multigraph, then $\vec{x}^C$ is a fractional matching obtained by ``collapsing'' all edges together. Similarly, if $\vec{y}$ is a vector of support at most  $n\choose 2$, then we define $\vec{y}^{D}$ to be an $m$ length vector such that for every $e\in E$ between a pair of vertices $u$ and $v$, $y^D(e)\coloneqq\frac{y(u,v)\cdot \kappa(e)}{\kappa(D(e))}$ ($D$ is for distributed, and we distribute the flow among the edges in proportion to their capacity). 
 \end{definition2}
 
 \begin{remark2}
 	Note that in doing the transformations in \Cref{def:dist}, the support size of the transformed vector is always at most $m$. Thus, it doesn't negatively affect our runtime. 
 \end{remark2}
 
 We now state our core ingredient, which either finds a balanced fractional matching of the multigraph $G$, or gives a set of edges $E^*$ along which we can increase capacity.

\begin{restatable}{lemma2}{lemtrio}\label{lem:MorE}
	Let $G$ be a multi-graph with $\mu(G)\geqslant \nicefrac{\eps\cdot n}{16}$. Let $\kappa$ be a capacity function on the edges of $G$. There is an algorithm \textsc{M-or-E*}(), that takes as input $G,\kappa, \eps\in (0,\nicefrac{1}{2})$ and $\mu\geqslant (1-\eps)\cdot \mu(G)$ and in time $O(\nicefrac{m\cdot \log n}{\eps})$ returns one of the following. 
	\begin{enumerate}[label=(\alph*)]
		\item\label{item:aa} A fractional matching $\vec{x}$ of value at least $(1-10\eps)\cdot \mu$ with the following properties.
		\begin{enumerate}[label=(\roman*)]
			\item\label{item:ai} For any $e\in \text{supp}(\vec{x})$ with $\kappa(D(e)) > \nicefrac{1}{\alpha_{\eps}^2}$, $x(e)=\frac{\kappa(e)}{\kappa(D(e))}$, and $x(D(e))=1$.
			\item\label{item:aii} For any $e\in \text{supp}(\vec{x})$ with $\kappa(D(e)) \leqslant \nicefrac{1}{\alpha_{\eps}^2}$, $x(e)\leqslant \kappa(e)\cdot \alpha_{\eps}$ and $x(D(e))\leqslant \kappa(D(e))\cdot \alpha_{\eps}$. 
		\end{enumerate}
		\item\label{item:bi} \label{item:edgeE} A set $E^*$ of edges such that $\kappa(E^*)=O(\mu \log n)$ such that for any integral matching $M$ with $\card{M}\geqslant (1-3\eps)\cdot \mu$, we have $\card{M\cap E^*}\geqslant \eps \mu$. Moreover, $\kappa(e)<1$ for all $e\in E^*$. Additionally, for every pair of vertices $u,v\in V$, either $D(u,v)\cap E^*=\emptyset$ or $D(u,v)\subseteq E^*$.
	\end{enumerate}
\end{restatable}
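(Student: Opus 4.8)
**Proof proposal for Lemma \ref{lem:MorE} (the \textsc{M-or-E*}() subroutine in general multigraphs).**

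The plan is to reduce the capacitated fractional matching question on the multigraph $G$ to an \emph{integral} matching computation on a sampled simple graph $G_s$, run a static primal–dual approximate matching algorithm (Lemma \ref{lem:static}, or rather the primal–dual version of \cite{DP14}) on $G_s$, and then interpret its output — either the matching itself, or the dual it certifies — back in $G$. Concretely, first collapse $G$ to its simple ``quotient'' capacities $\kappa(D(u,v))$ as in Definition \ref{def:dist}, and form $G_s$ by sampling each pair $(u,v)$ roughly $\alpha_\eps^2 \cdot \kappa(D(u,v))$ times (capped appropriately), so that an edge of collapsed capacity $\ge 1/\alpha_\eps^2$ is present with probability $1$ and low-capacity edges appear with probability proportional to their capacity. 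By the sampling theorem alluded to in the overview (the multigraph versions of \Cref{lem:matchingGs}/\Cref{lem:sampling2}), $\mu(G_s) \ge \mu(G,\kappa) - \eps n$ with probability $1-\exp(-\Omega(\eps n))$, and since $\mu(G)\ge \eps n/16$ this is high probability; conversely any matching in $G_s$ scales down to a feasible fractional matching obeying $\kappa$ up to the $\alpha_\eps$ slack, so $\mu(G_s)$ is a two-sided proxy for $\mu(G,\kappa)$.

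Next, run \textsc{Static-Match} on $G_s$ with error parameter $\eps$ to get $M$ with $|M| \ge (1-\eps)\mu(G_s)$, together with its approximate dual $(\vec y, \{z_B\})$ (the odd-set packing) that \cite{DP14} produces. \textbf{Case (a):} if $|M|$ is large — concretely if the matching, rescaled, beats $(1-10\eps)\mu$ — we build $\vec x$ as follows. Split $M$ into $M_H$ (collapsed capacity $> 1/\alpha_\eps^2$) and $M_L$ (collapsed capacity $\le 1/\alpha_\eps^2$). On the vertices $V_H$ touched by $M_H$, declare the high-capacity edges essentially integral: for each matched pair $(u,v)$ set $x(D(u,v))=1$, distributed among parallel copies in proportion to $\kappa(e)$ via the $(\cdot)^D$ operator, giving property \ref{item:ai}. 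On $V_L$, we have only small-capacity edges, so small odd-set constraints are automatically satisfied; convert $G[V_L]$ to a bipartite graph (bidirect each edge / take the standard bipartite double cover), run an approximate bipartite flow obeying the scaled capacities $\alpha_\eps\cdot\kappa$, and map the resulting flow back, giving property \ref{item:aii} with the $\alpha_\eps$ slack. One checks the two pieces combine to a valid fractional matching (the vertex sets $V_H,V_L$ are disjoint) of value at least $(1-10\eps)\mu$, absorbing the $\eps n \le 16\mu\eps$ additive losses and the $(1-\eps)$ factors from the static solver into the constant.

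\textbf{Case (b):} if $|M|$ is small, i.e. $\mu(G_s) < (1-\Theta(\eps))\mu$ so $\mu(G,\kappa) < (1-\eps)\mu$, we extract $E^*$ from the dual. Let $\vec y$ be the approximate fractional vertex cover on $V(G_s)$ certifying that $\mu(G_s)$ is small; by LP duality / the approximate complementary slackness in \cite{DP14}, the total dual value is $\approx \mu(G_s)$, hence $O(\mu\log n)$ is not yet visible — the $\log n$ comes from the multiplicative-weights / scaling structure of the capacities across rounds of the outer congestion-balancing loop, exactly as in \cite{BPT20}. Define $E^*$ to be the set of edges $e$ of $G$ whose collapsed endpoints are ``tight'' for the dual, i.e. $y(u)+y(v)$ is small relative to what a large matching through $e$ would need, together with $\kappa(e)<1$; and make $E^*$ closed under $D(\cdot)$ by including all parallel copies whenever one is included. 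The key claim is that any integral matching $M'$ in $G$ with $|M'|\ge(1-3\eps)\mu$ must use $\ge \eps\mu$ edges of $E^*$: otherwise $M'\setminus E^*$ would be a large matching avoiding the bottleneck, which combined with the dual bound and the sampling theorem would force $\mu(G_s)\ge(1-\eps)\mu$, contradicting that we are in Case (b). The capacity bound $\kappa(E^*)=O(\mu\log n)$ follows by charging each edge of $E^*$ to the dual variables of its endpoints and using that the dual has value $O(\mu\log n)$ after accounting for the capacity scaling.

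The main obstacle, and where almost all the work sits, is \textbf{Case (a)}: proving that the "high-capacity edges can be treated as integral" step is legitimate inside congestion balancing — i.e. that the $\alpha_\eps$-slack version of the capacity constraints is preserved by the outer algorithm and that the resulting $\vec x$ genuinely avoids the integrality gap (obeys all small odd-set constraints) despite being assembled from an integral part on $V_H$ and a bipartite-flow part on $V_L$. The split of $M$ into $M_H,M_L$ and the bipartite reduction on $G[V_L]$ must be shown to lose only $\eps n$ in value; this is precisely the content of the structural theorems \Cref{lem:matchingGs} and \Cref{lem:sampling2}, which this lemma invokes, so the proof here is mostly the bookkeeping of combining them with the static solver of \cite{DP14} and verifying the stated slack bounds \ref{item:ai}–\ref{item:aii} and the properties of $E^*$. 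A secondary nuisance is doing everything for \emph{multigraphs} — handled throughout by the $(\cdot)^C/(\cdot)^D$ collapse/distribute transformations of Definition \ref{def:dist}, which keep all support sizes $\le m$ and hence do not hurt the $O(m\log n/\eps)$ running time.
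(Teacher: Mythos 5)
Your algorithm is the same as the paper's at every structural level: sample $G_s$ from $G$ using the capacities, run \textsc{Static-Match} on $G_s$ to get both a matching $M$ and approximate dual variables $(\vec y,\vec z)$, split $M$ into high-capacity edges (treated as integral, distributed over parallel copies via $(\cdot)^D$) and low-capacity edges (pushed to a bipartite-double-cover flow on $G[V_L^M]$ with the $\alpha_\eps$-slacked capacities), and use the approximately-covered/uncovered partition of edges under the dual to define $E^*$. The value accounting, the role of \Cref{lem:matchingGs}/\Cref{lem:sampling2}, the use of weak duality to argue $\card{M'\cap E^*}\geq\eps\mu$ for any large $M'$ — all agree with the paper.

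There is one genuine gap, and it is in the one part of Case~(b) you treated most lightly: the bound $\kappa(E^*)=O(\mu\log n)$. You propose to prove it ``by charging each edge of $E^*$ to the dual variables of its endpoints and using that the dual has value $O(\mu\log n)$''. This cannot work. By definition $E^*$ is exactly the set of edges \emph{not} approximately covered by $(\vec y,\vec z)$, so the dual makes no constraint at all on $\kappa(E^*)$; a deterministic dual with $f(y,z)\approx\mu(G_s)$ could in principle leave an arbitrarily large total capacity uncovered (all small dual values, lots of small-$\kappa$ edges missing from $G_s$). You also misplace the source of the $\log n$: you suggest it accrues over rounds of the outer congestion-balancing loop, whereas in the paper $\kappa(E^*)=O(\mu\log n)$ is a \emph{per-call} guarantee of \textsc{M-or-E*}() (Lemma~\ref{lem:propE*}), and the $\log n$ bound on the number of outer rounds is a downstream consequence via the potential function $\Pi(G,\kappa)$, not a cause.

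The missing idea is a \emph{probabilistic} union-bound argument over the discretized duals that \textsc{Static-Match} could return. Because the returned dual has $y(v)$ and $z(B)$ in multiples of $\eps$, odd sets bounded by $\nicefrac{1}{\eps}$, and a laminar $\Omega$, there are at most $2^{O(\nicefrac{n\log n}{\eps})}=2^{O(\nicefrac{\mu\log n}{\eps^2})}$ candidate duals. For any \emph{fixed} candidate dual whose uncovered edge set $E^*$ satisfies $\kappa(E^*)>17\mu\log n$, the only way \textsc{Static-Match} can return it is if \emph{none} of those uncovered edges was sampled into $G_s$ (since the returned dual must approximately cover every sampled edge); this has probability at most $\exp(-\rho_\eps\cdot\kappa(E^*))\leq \exp(-\Omega(2^{\nicefrac{1}{\eps^2}}\mu\log n))$, which beats the union bound with room to spare. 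That is the argument you need, and it is where the $\log n$ actually lives. Without it, Case~(b) does not close.
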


We give some intuition for \Cref{lem:MorE}. Recall that we need a balanced fractional matching that contains a large integral matching in its support. However, as mentioned before, in the case of general graphs, finding such a balanced fractional matching is not straightforward. In order to get past this obstacle, we define a balanced fractional matching that is easy to find and also avoids the integrality gap. We will explain how to find it in the subsequent sections. For now, we explain at a high-level, why the fractional matching $\vec{x}$ found by \Cref{lem:MorE} avoids the integrality gap. Consider $\vec{x}^C$. Observe from \Cref{lem:MorE}\ref{item:aa}, that for any pair of vertices $u,v\in G$, either $x^C((u,v))=1$ or $x^C((u,v))\leqslant \eps$. Thus, $\vec{x}$ satisfies odd-set constraints for all odd sets of size at most $\nicefrac{1}{\eps}$. By a folklore lemma, we can then argue that $\vec{x}$ contains an integral matching of size at least $(1+\eps)^{-1}\cdot \sum_{u\neq v}x^C((u,v))$. \\ \\
We will use \textsc{M-or-E*}() as a subroutine in the algorithm \textsc{Dec-Matching}(). The fractional matching output by \textsc{M-or-E*}() will have certain properties, we state these properties now, since it will be helpful in visualizing the fractional matching. We give a proof for these later on.  

\begin{property}
We will use \textsc{M-or-E*}($G,\mu,\kappa,\eps$) as a subroutine in \textsc{Dec-Matching}($G,\mu,\eps$) to get a matching $\vec{x}$ with the following properties.
\begin{enumerate}
\item Each time \textsc{M-or-E*}() returns the set $E^*$, we increase capacity along $E^*$ by multiplying $\kappa(e)$ for each $e\in E^*$ by the same factor. 
\item Consider $u,v\in V$ and let $e$ and $e'$ be edges in between $u,v$. Then, $\kappa(e')=\kappa(e)$ at all times during the run of the algorithm.
\end{enumerate} 
\end{property}

The next property follows immediately from \Cref{lem:MorE}\ref{item:a}.
\begin{property}\label{prop:capexceed}
	Let $\vec{x}$ be the matching output by \textsc{M-or-E*}($G,\mu,\kappa,\eps$), then $x(e)\leqslant \kappa(e)\cdot \alpha_{\eps}^2$ for all $e\in E$. 
\end{property}

\begin{definition2}
	Let $G$ be a multi-graph, let $\kappa$ be a capacity function on the edges of $G$ and let $\vec{x}$ be a fractional matching obeying $\kappa$. Let $\eps\in (0,1)$. Then, we split $\vec{x}$ into two parts, $\vec{x}^f$ and $\vec{x}^i$, where $\vec{x}=\vec{x}^f+\vec{x}^i$, and $\text{supp}(\vec{x}^f)=\set{e\in E\mid \kappa(D(e))< \nicefrac{1}{\alpha_{\eps}^2}}$ and $\text{supp}(\vec{x}^i)=\set{e\in E\mid \kappa(D(e))\geqslant \nicefrac{1}{\alpha_{\eps}^2}}$ (here, $\vec{x}^f$ stands for fractional and $\vec{x}^i$ stands for integral. Although the edges in $\vec{x}^i$ are not technically integral, they are large enough for us to round them).
\end{definition2}

We briefly state the implications of this definition. We do not use these properties in our proof, but it will be useful to state it nevertheless.

\begin{property}\label{prop:disjsupp}
	Let $G$ be any multigraph, and let $\vec{x}$ be a fractional matching of $G$. Then, for any pair of vertices $u,v$, either $D(u,v)\subseteq \text{supp}(\vec{x}^i)$ and $D(u,v)\cap \text{supp}(\vec{x}^f)=\emptyset$ or, $D(u,v)\subseteq \text{supp}(\vec{x}^f)$ and $D(u,v)\cap \text{supp}(\vec{x}^i)=\emptyset$.
\end{property}

\begin{observation2}\label{obs:suppmatch}
	Suppose $\vec{x}$ is a fractional matching returned by \textsc{M-or-E*}() and consider $\vec{z}=\vec{x}^i$. Then supp($\vec{z}^C$) is a matching. This is implied by \Cref{lem:MorE}\ref{item:a}\ref{item:ai}.
\end{observation2}

The second ingredient is a method to round fractional matchings. This theorem is implicit in \cite{Wajc2020}, however, we prove it in \Cref{sec:rounding} for completeness.

\begin{restatable}{lemma2}{lemuno}\label{thm:sparse}\cite{Wajc2020}
	Suppose $G$ is an unweighted simple graph, let $\eps\in(0,\nicefrac{1}{2})$, and let $\vec{x}$ be a fractional matching of $G$ such that $x(e)\leqslant \eps^6$. Then, there is a dynamic algorithm \textsc{Sparsification}($\vec{x},\eps$), that has the following properties.
	\begin{enumerate}[label=(\alph*)]
		\item \label{lem:sparseb}  The algorithm maintains a subgraph $H\subseteq\text{supp}(\vec{x})$ such that $\card{E(H)}=O_{\eps}(\mu(\text{supp}(\vec{x}))\cdot \poly(\log n))$, and with high probability $\mu(H)\geqslant (1-\eps)\cdot \sum_{e\in E}x(e)$.
		\item \label{lem:sparsea} The algorithm handles the following updates to $\vec{x}$: the adversary can either remove an edge from supp($\vec{x}$) or for any edge $e$, the adversary can reduce $x(e)$ to some new value $x(e)\geqslant 0$. 
		\item \label{lem:sparsec} The algorithm handles the above-mentioned updates in $\Tilde{O}_{\eps}(1)$ worst-case time. 
	\end{enumerate}
\end{restatable}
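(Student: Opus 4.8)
The plan is to realize $H$ as a random edge-subsample of $\text{supp}(\vec{x})$ that is ``thick enough'' to retain a near-maximum integral matching. Fix, once and for all, i.i.d.\ uniform random variables $r_e\in[0,1]$, one per potential edge, choose a sampling rate $\Lambda=\Theta(\eps^{-2}\log n)$ (any sufficiently large such value works), and set $H:=\set{e\in\text{supp}(\vec{x}):\ r_e\le\min\set{1,\Lambda\cdot x(e)}}$, so that $H\subseteq\text{supp}(\vec{x})$ automatically and $H$ never gains an edge when $\vec{x}$ decreases. Since every update either deletes an edge of $\text{supp}(\vec{x})$ or lowers some $x(e)$, processing it only requires re-evaluating the single threshold for $e$ and $O(1)$ bookkeeping to keep the vertex-indexed adjacency lists of $H$ current, which yields the claimed $\Tilde{O}_{\eps}(1)$ worst-case update time (parts~\ref{lem:sparsea}--\ref{lem:sparsec}). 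To make the remaining guarantees hold at \emph{every} point in time against an adaptive adversary, I would recast the two ``bad events'' below as events that depend only on the frozen randomness $(r_e)$, so that a union bound over the polynomially many distinct fractional matchings the update sequence can pass through certifies them for all times regardless of how the adversary correlates its updates with $H$; this is the device of \cite{Wajc2020}.

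For the sparsity bound in part~\ref{lem:sparseb}: $\expect{\card{E(H)}}=\sum_{e}\min\set{1,\Lambda x(e)}\le \Lambda\sum_e x(e)$, and $\sum_e x(e)$ is the objective value of a feasible point of the degree-constrained fractional matching polytope of $\text{supp}(\vec{x})$, hence at most $\tfrac32\,\mu(\text{supp}(\vec{x}))$ by the well-known bound $\nu_f\le\tfrac32\nu$ for that polytope. As the edges are included independently, a Chernoff bound gives $\card{E(H)}=O\!\left(\Lambda\,\mu(\text{supp}(\vec{x}))+\log n\right)=O_{\eps}\!\left(\mu(\text{supp}(\vec{x}))\cdot\poly(\log n)\right)$ with high probability.

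The heart of the argument is the lower bound $\mu(H)\ge(1-\eps)\sum_e x(e)$. Plain edge-sampling need not preserve any \emph{fixed} large matching of $\text{supp}(\vec{x})$ --- e.g.\ when $\vec{x}$ spreads mass $\tfrac1n$ over $\Theta(n^2)$ edges --- so instead I would exhibit a \emph{new} fractional matching living on $H$ by importance sampling: put $y_e:=\tfrac{1}{1+\eps}\cdot\tfrac{x(e)\cdot\mathbf{1}[e\in H]}{\min\set{1,\Lambda x(e)}}$, so each $y_e\in\set{0}\cup\set{x(e)}\cup\set{\tfrac{1}{(1+\eps)\Lambda}}$, hence $\bars{\vec{y}}_{\infty}\le\max\set{\eps^6,\,1/\Lambda}\le\eps/2$, while $\expect{y_e}=x(e)/(1+\eps)$ and therefore $\expect{\sum_{e\ni v}y_e}\le\tfrac{1}{1+\eps}$ and $\expect{\sum_e y_e}=\tfrac{1}{1+\eps}\sum_e x(e)$. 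Splitting each of these sums into its deterministic part (the at most $\Lambda\sum_e x(e)$ ``truncated'' edges, those with $\Lambda x(e)\ge1$) and its random part (independent Bernoullis scaled by $\tfrac{1}{(1+\eps)\Lambda}$, of the right mean), a Chernoff bound --- this is where $\Lambda=\Omega(\eps^{-2}\log n)$ is needed, to beat the union bound over the $n$ vertices --- shows that with high probability $\vec{y}$ is a genuine fractional matching, i.e.\ $\sum_{e\ni v}y_e\le1$ for all $v$, and $\sum_e y_e\ge(1-2\eps)\sum_e x(e)$. Finally, since $G$ is simple and $\bars{\vec{y}}_{\infty}\le\eps/2$, any odd set $B$ with $\card{B}\le1/\eps$ carries at most $\binom{\card{B}}{2}\cdot\tfrac{\eps}{2}\le\tfrac{\card{B}-1}{2}$ units of $\vec{y}$, so $\vec{y}$ obeys all odd-set constraints on sets of size at most $1/\eps$; by the folklore rounding lemma for such fractional matchings (the one invoked in the discussion after \Cref{lem:MorE}), $\text{supp}(\vec{y})\subseteq E(H)$ contains an integral matching of size at least $(1-O(\eps))\sum_e y_e\ge(1-O(\eps))\sum_e x(e)$. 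Rescaling $\eps$ by a constant completes part~\ref{lem:sparseb}.

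I expect the main obstacle to be precisely this last part: because edge-sampling destroys any particular witnessing matching, one must conjure a fresh fractional matching on the sparse random graph $H$ and re-round it, and the importance-sampled $\vec{y}$ has to meet three competing requirements at once --- vertex-wise feasibility, value $(1-\eps)\sum_e x(e)$, and an $\ell_\infty$-norm small enough for lossless rounding --- with the ``truncated'' edges (those with $\Lambda x(e)>1$, which are kept deterministically) threaded through all three estimates. A secondary but genuine difficulty is that the whole analysis must survive an adaptive adversary that sees $H$, which forces the bad events to be phrased as statements about the fixed randomness $(r_e)$ alone; the hypothesis $x(e)\le\eps^6$ is far stronger than the argument above requires and is simply inherited from the statement in \cite{Wajc2020}.
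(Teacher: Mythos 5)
Your proposal reaches the same conclusion but by a genuinely different route. The paper (following Wajc) buckets the edges of $\text{supp}(\vec{x})$ by dyadic scale of $x(e)$, maintains a dynamic $3\lceil(1+\eps)^i\rceil$-edge coloring $\Phi_i$ of each bucket $E_i$, and samples $O(\eps^{-2}\log(1/\eps))$ color classes per bucket \emph{without replacement}; the sparsifier $K$ is the union of the chosen color classes. You instead keep the i.i.d.\ one-shot sampler $H=\set{e:\ r_e\le\min\set{1,\Lambda x(e)}}$ with $\Lambda=\Theta(\eps^{-2}\log n)$. The edge coloring is precisely what gives the paper \emph{negative association} among the edge indicators incident to a vertex, which lets them run a Bernstein-type bound at rate $d=O(\eps^{-2}\log(1/\eps))$ with no $\log n$ factor: each per-vertex constraint then fails only with constant probability $\eps/2$, overflowing edges are zeroed rather than resampled, and the loss is charged in expectation, yielding $\expect{\sum_e y(e)}\geq(1-O(\eps))\sum_e x(e)$. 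Your sampler is simpler --- no dynamic coloring subroutine, every update is a single threshold re-evaluation --- and genuinely independent, but pays for that with the $\log n$ in the rate, because you want per-vertex feasibility with failure probability $1/\poly(n)$ so that a union bound over the $n$ vertices closes the argument; the upside is that $\mu(H)\geq(1-O(\eps))\sum_e x(e)$ then drops out directly as a high-probability statement rather than in expectation. Both give $\card{E(H)}=O_{\eps}(\mu(\text{supp}(\vec{x}))\cdot\poly(\log n))$ --- yours carries an extra $\log n$ where the paper has $\log(1/\eps)$ --- and both finish with the same folklore small-odd-set observation, since each produces a $\vec{y}$ with $\bars{\vec{y}}_\infty=O(\eps)$.

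One point to tighten: the assertion that an adaptive adversary is handled by a ``union bound over the polynomially many distinct fractional matchings the update sequence can pass through'' is not right as stated. Any single decremental sequence touches at most $m+1$ distinct $\vec{x}$'s, but the adversary can realize $2^{\Theta(m)}$ distinct sequences, and the particular $\vec{x}_t$ it steers toward is itself a function of the same fixed randomness $(r_e)$ through whatever it infers of $H$ from the outputted matching; the naive union bound over reachable $\vec{x}$'s is exponential, not polynomial. To be fair, the paper's own proof of \Cref{thm:sparse} is also only argued for a fixed $\vec{x}$ and delegates all adaptivity to the outer congestion-balancing loop, which rebuilds the sparsifier whenever $\eps\mu$ fractional value has been deleted; if you lean on that phase structure rather than a per-time union bound, your argument goes through, but the ``polynomially many'' framing should be dropped.
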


We now show how \textsc{M-or-E*}() and \textsc{Sparsification}() give us an algorithm for \Cref{thm:mainfrac}. This algorithm, called \textsc{Dec-Matching}() is stated in \Cref{alg:decmatching}. Before proving \Cref{thm:mainfrac}, we give a brief description of \Cref{alg:decmatching}.
\paragraph*{Description of \textsc{Dec-Matching}().} The algorithm takes as input a multigraph $H$, a parameter $\eps$, an estimate $\mu$ on the size of the maximum matching of $H$. It then instantiates the capacities of the graph $H$. The algorithm then starts a new phase by running  \textsc{M-or-E*}() on this capacitated graph. The algorithm returns either a large fractional matching or if the matching is small, then it outputs a set of edges $E^*$. The algorithm \textsc{Dec-Matching}() in the latter case, increases capacities along $E^*$. Suppose at some point the algorithm finds a large fractional matching $\vec{x}$. Recall that $\vec{x}$ output by \textsc{M-or-E*}() is a matching with the following property: consider $\vec{x}^C$, then either $x^C((u,v))=1$ or $x^C((u,v))\leqslant \nicefrac{1}{\alpha_\eps}$ (this is guaranteed by \Cref{lem:MorE}\ref{item:aa}). The algorithm extracts the latter part, and sparsifies it using \textsc{Sparsification}(). Since the fractional matching input to \textsc{Sparsification}() has flow at most $\nicefrac{1}{\alpha_{\eps}}$ between any pair of vertices, it satisfies the hypothesis of \Cref{thm:sparse} and obtain a graph $S$ containing $\Tilde{O}(\mu(H))$ edges and a large integral matching. Thus, \textsc{Stat-Match}($S,\eps$) is run and this matching is combined with the ``integral'' part of $\vec{x}$ and is output. 

\begin{algorithm}[H]
	\caption{\textsc{Dec-Matching}($H,\mu,\eps$)}
	\label{alg:decmatching}
	\begin{algorithmic}[1]
		\State \label{line:initcap}For all $e\in E(H)$, $\kappa(e)\leftarrow \nicefrac{1}{\alpha_{\eps}^{\lceil \log_{\alpha_{\eps}}n \rceil}}$
		\Procedure{Starting a new phase}{} 
		\State $\mu'\leftarrow$ \textsc{Static-Match}($H,\eps$)\label{line:recomp}\Comment{$\mu'\geqslant (1-\eps)\cdot \mu(H)$}
		\If{$\mu'\leqslant (1-3\eps)\cdot \mu$} 
		\State Return \textsc{no}, and terminate.\label{line:funcinfty}
		\Else 
		\While{\textsc{M-or-E*}($H,\kappa,\eps,\mu'$) returns $E^*$}
		\State \label{line:capchange} For all $e\in E^*$, $\kappa(e)\leftarrow \kappa(e)\cdot \alpha_{\eps}$
		\EndWhile
		\EndIf
		\EndProcedure
		\State Let $\vec{x}$ be the matching returned by \textsc{M-or-E*}(). \Comment{This is a matching in a multigraph.}
		\State $\vec{y}\leftarrow \vec{x}^f$, $\vec{z}\leftarrow \vec{x}^i$ \label{line:recomp2}\Comment{We will update $\vec{y},\vec{z}$ as edges are deleted.}\label{line:yz}
		\State $S\leftarrow \textsc{Sparsification}(\vec{y}^C,\eps)$ \Comment{Recall that Sparsification is dynamic} \label{line:sparsification}
		\State \label{line:rounding}$M\leftarrow \textsc{Static-Match}(S,\eps)\cup \text{supp}(\vec{z}^C)$. \Comment{The matching $M$ that is output.}
		\State $\textsc{CounterM}\leftarrow 0$ \Comment{Counter for deletions to the integral matching.}
		\State $\textsc{Counterx}\leftarrow 0$ \Comment{Counter for deletions to the fractional matching.}
		\Procedure{For deletion of edge $e$ from $H$}{}
		\If{$e\in \text{supp}(\vec{x})$} 
		\State Delete $e$ from $\text{supp}(\vec{x})$
		\State Update $\vec{z}^C$ accordingly. \Comment{See \Cref{line:yz}.}
		\State \label{line:sparsification-update}Update $\vec{y}^C$ accordingly; \textsc{Sparsification} from Line \ref{line:sparsification} then updates $S$.
		\State \textsc{Counterx} $\leftarrow$ \textsc{Counterx} $+\vec{x}(e)$.
		\EndIf
		\If{\textsc{Counterx} $>\eps\cdot \mu$}
		\State Goto \Cref{line:recomp}  \Comment{End current phase and start new one.}
		\EndIf
		\If{$e\in M$}
		\State Delete $e$ from $M$
		\State \textsc{CounterM} $\leftarrow$ \textsc{CounterM} $+1$ 
		\EndIf
		\If{$\textsc{CounterM}>\eps\cdot \mu$} \Comment{The matching $M$ is out of date}
		\State $\textsc{CounterM}\leftarrow 0$
		\State \label{line:rebuild} $M\leftarrow \textsc{Static-Match}(S,\eps)\cup \text{supp}(\vec{z}^C)$ \Comment{Recompute outputted matching $M$}
		\EndIf
		\EndProcedure
	\end{algorithmic}
\end{algorithm}
The algorithm then begins a procedure to handle deletions. If an edge $e$ is deleted, then it is removed from \text{supp}($\vec{x}$), and \textsc{Sparsification}() algorithm updates $S$ in $\Tilde{O}_{\eps}(1)$ worst case time (\Cref{thm:sparse}\ref{lem:sparsec}). The algorithm also maintains counters for deletions to $\vec{x}$ and $M$. If the total value deleted from $\vec{x}$ exceeds $\eps\cdot \mu$, then it ends the current phase, and starts a new one. On the other hand, if the number of edges deleted from $M$ exceeds $\eps\cdot \mu$, then, the algorithm recomputes $M$ by running \textsc{Static-Match}($S,\eps$) again (since we know that $\vec{x}$ still contains a large integral matching in its support). \\ \\
We also state the following lemma which we prove later in \Cref{sec:lemma7}.

\begin{lemma2}\label{lem:runtime}
	The subroutine \textsc{M-or-E*}() and \Cref{line:recomp} are called at most $O(\alpha_{\eps}^3\log n)$ times in \Cref{alg:decmatching} during the course of deletion of $m$ edges. 
\end{lemma2}

We restate our theorem and give a proof. 

\mainfracthm*

\begin{proof}[Proof of \Cref{thm:mainfrac}]
	We first argue that \Cref{alg:decmatching} satisfies the properties of \Cref{thm:mainfrac}. First observe that when the algorithm executes \Cref{line:funcinfty}, $\mu'\leqslant (1-3\eps)\cdot \mu$. Since $\mu'\geqslant (1-\eps)\cdot\mu(H)$, this implies that $\mu(H)\leqslant (1-2\eps)\cdot \mu$. Thus, when the algorithm terminates, $\mu(H)\leqslant (1-2\eps)\cdot \mu$, which proves \Cref{thm:mainfrac}\ref{item:onethm}. \\ \\
	Next, we want to argue that while the algorithm does not terminate, it outputs an integral matching of size at least $\mu\cdot(1-20\eps)$. This corresponds to \Cref{thm:mainfrac}\ref{item:twothm}. We first argue that the algorithm indeed outputs matching. Note that the output matching $M$ is the union of \text{supp}($\vec{z}^C$) and the output of \textsc{Sparsification}($\vec{y}^C,\eps$). Note that $M$ is a union of two matchings, is implied by Observation \ref{obs:suppmatch} and \Cref{thm:sparse}. Moreover, these matchings are vertex disjoint and this follows from Property \ref{prop:disjsupp}. Thus, $M$ is a matching. \\ \\
	Next,we want to argue that $\card{M}\geqslant (1-20\eps)\cdot \mu$. First, observe that the fractional matching output by \textsc{M-or-E*}() has value at least $(1-10\eps)\cdot\mu$. Next, observe that because of \Cref{lem:MorE}\ref{item:a}\ref{item:aii}, the matching output by \textsc{M-or-E*}() satisfies the conditions of \Cref{thm:sparse}: that is, $\vec{y}^C$ has the property that the flow through any edge in the support, $\vec{y}^C(e) \leqslant \nicefrac{1}{\alpha_{\eps}}\leqslant \eps^3$. Thus, $\mu(S)\geqslant (1-\eps)\cdot \sum_{e\in \text{supp}(\vec{y}^C)}y^C(e)$. Therefore, the matching output by \textsc{Sparsification}($\vec{y}^C,\eps$) is an integral matching of size at least $(1-2\eps)\cdot \sum_{e\in \text{supp}(\vec{y}^C)}\vec{y}^C(e)$. Thus, we can conclude,
	\begin{align*}
	\card{M}&	\geqslant \sum_{e\in\text{supp}(\vec{x}^i)}x(e)+(1-2\eps)\cdot\sum_{e\in\text{supp}(\vec{x}^f)}x(e)\geqslant (1-2\eps)\cdot \sum_{e\in \text{supp}(\vec{x})}x(e)\\
	&\geqslant (1-2\eps)\cdot (1-10\eps)\cdot \mu\geqslant (1-12\eps)\cdot \mu.
	\end{align*}
Thus, the algorithm maintains an integral matching of size at least $(1-20\eps)\cdot \mu$ at all times. \\ \\ 
	{\bf Running Time:} We now bound the runtime of the algorithm. Let us first consider the time to initialize a phase. From  \Cref{lem:runtime} we know that  \textsc{M-or-E*}() is called $O(\alpha_{\eps}^3\log n)$ times, so this is also an upper bound on the number of phases. Thus, \textsc{Static-Match}() in Lines \ref{line:recomp} and \ref{line:rounding} is also called at most $O(\alpha_{\eps}^3\log n)$ times, since these are only called once per phase. All of these subroutines have runtime $\tilde{O}_{\eps}(m)$, so since each is executed $\tilde{O}_{\eps}(1)$ times, the total runtime of all of them is $\tilde{O}_{\eps}(m)$.\\ \\
	We also need to bound the runtime of \textsc{Sparsification}($\vec{y}^C,\eps$) and the total contribution of \textsc{Static-Match}() in \Cref{line:rebuild}. Let us start by bounding the runtime of \textsc{Sparsification}(). The time to initialize this procedure in \Cref{line:sparsification} is $\tilde{O}(m)$, and this is only done once per phase, so again the total time spent on initialization is $\tilde{O}(m)$. We also spend time updating the output sparsifier $S$ in Line \ref{line:sparsification-update}.
	The worst-case runtime time of  \textsc{Sparsification}() is $O(\log n)$ per update to $\vec{x}$ (see \Cref{thm:sparse}\ref{lem:sparsea} and \ref{lem:sparsec}). But each update to $\vec{x}$ corresponds to an adversarial deletion of some edge $e$, so there at most $m$ adversarial deletions, so the total overall time (among all phases) to update \textsc{Sparsification} is $\tilde{O}(m)$. \\ \\
	Finally, we want to count the contribution of \textsc{Static-Match}() in \Cref{line:rebuild}. Observe that by  \Cref{thm:sparse}, the subgraph $S$ has size $\Tilde{O}(\mu)$. Thus, the runtime of \textsc{Static-Match}($S,\eps$) is $\Tilde{O}(\nicefrac{\mu}{\eps})$. Now, observe that \Cref{line:rebuild} is only executed when \textsc{CounterM} has increased from $0$ to $\eps\cdot \mu$, and the counter increases by at most $1$ per adversarial deletion, so there are at least $\eps\cdot \mu$ adversarial deletions per execution of \Cref{line:rebuild}. Thus, \textsc{Stat-Match}($S,\eps$) is called at most $\nicefrac{m}{\eps\cdot \mu}$ times in \Cref{line:rebuild}, and the total contribution here is $O(\nicefrac{m}{\eps^2})$.
	
	
\end{proof}
\section{Proof of \Cref{lem:runtime}}
\label{sec:lemma7}

To prove \Cref{lem:runtime}, which gives an upper bound on the number of times the subroutine \textsc{M-or-E*}() is called in \Cref{alg:decmatching}, we will follow the framework of \cite{BPT20}. We will define a potential function, and show that it increases as $\kappa$ is increased, and edges are deleted. While their framework applies to simple graphs, we verify that it is possible to extend it to multigraphs as well. Towards this, we give a few definitions.

\begin{definition2}\label{def:mathcalM}
	Let $G$ be a multigraph, and let $\kappa$ be the capacity function on the edges of the graph (if the adversary deletes an edge $e$, then let $\kappa(e)$ be the capacity of the edge right before it is deleted). Let $\mu$ be the input to \textsc{Dec-Matching}(), when it is run on $G,\eps$. Let $\mathcal{M}$ be the set of all integral matchings of $G$ of size at least $(1-3\eps)\cdot \mu$. Define the cost of an edge $e$ to be $c(e)=\log\paren{n\cdot \kappa(e)}$. For any integral matching $M$, define $c(M)=\sum_{e\in M}c(e)$. Define $\Pi(G,\kappa)=\min_{M\in \mathcal{M}}c(M)$. If $\mathcal{M}=\emptyset$, then $\Pi(G,\kappa)=\infty$. 
\end{definition2}

\begin{observation2}
	Initially, $\kappa(e)=\nicefrac{1}{\alpha_{\eps}^{\lceil \log_{\alpha_{\eps}}n \rceil}}$, so $\Pi(G,\kappa)=0$. Note that capacities $\kappa$ only change (increase) in \Cref{line:capchange}. Consequently, the capacity function $\kappa$ is monotonically increasing. Moreover, edges of $G$ are only deleted. Thus, $\Pi(G,\kappa)$ is monotonically increasing as well. 
\end{observation2}

\begin{observation2}
	When $\Pi(G,\kappa)=\infty$, then \Cref{line:funcinfty} of \Cref{alg:decmatching}
	causes the algorithm to terminate.
\end{observation2}
\begin{proof}
	Suppose \Cref{line:funcinfty} doesn't cause \Cref{alg:proofthm1} to terminate, then, $\mu'\geqslant (1-3\eps)\cdot \mu$. Thus, $\mu(G)\geqslant \mu'\geqslant (1-3\eps)\cdot \mu$, and $\mathcal{M}\neq \emptyset$. 
\end{proof}

\begin{lemma2}
	In \textsc{Dec-Matching}(), we say that we begin a new phase when $\eps\cdot \mu$ value of the fractional matching $\vec{x}$ has been deleted (that is, the value of \textsc{Counterx} has increased to $\eps\cdot \mu$). Suppose we are in a phase when the algorithm does not terminate. Let $\kappa$ be the final capacities right before we process deletions. Then, $\Pi(G,\kappa)=O(\mu\log n)$.
\end{lemma2}
\begin{proof}
	First observe that for any edge $e$, $\kappa(e)\leqslant 1$, so $c(e)=O(\log n)$. This is implied by the fact that $E^*$ only consists of edges that have $\kappa(e)<1$ and $\kappa(e)$ is a power of $\alpha_{\eps}$ (see \Cref{lem:MorE}\ref{item:bi}). This is because, we start with $\kappa(e)$ to be a power of $\alpha_{\eps}$ initially (see \Cref{line:initcap}), and each time we increase $\kappa(e)$, we multiply it by $\alpha_{\eps}$. Moreover, for any matching $M$, we know that $\card{M}\leqslant \mu\cdot (1+\eps)$, since, $\mu\geqslant \mu(G)\cdot (1-\eps)$. Thus, $c(M)\leqslant (1+\eps)\cdot\mu\cdot \log n$. This implies that $\Pi(G,\kappa)=O(\mu\log n)$. 
\end{proof}

\begin{definition2}
	Let $E_0$ be the initial set of edges and let $\kappa(E_0)=\sum_{e\in E_0}\kappa(e)$.
\end{definition2}

\begin{lemma2}\label{lem:congestbal}
	Suppose a call to \textsc{M-or-E*}($G,\mu,\kappa,\eps$) returns the set $E^*$ instead of a matching. Let $\kappa'$ denote the new edge capacities after increasing capacities along $E^*$. Then,
	\begin{enumerate}[label=(\alph*)]
		\item\label{item:congesta} $\kappa'(E_0)\leqslant \kappa(E_0)+\alpha_{\eps} \cdot \mu\cdot \log n$, and
		\item\label{item:congestb} $\Pi(G,\kappa')\geqslant \Pi(G,\kappa)+\nicefrac{\mu}{\eps}$.
	\end{enumerate}
\end{lemma2}
\begin{proof}
	We begin by recalling \Cref{lem:MorE}\ref{item:edgeE} that $\kappa(E^*)\leqslant \mu\log n$. Thus, $\kappa'(E^*)=O(\alpha_{\eps}\cdot\mu\cdot \log n)$, since it is obtained by scaling up $\kappa(E^*)$ by a $\alpha_{\eps}$. Thus, we have,
	\begin{align*}
	\kappa'(E_0)&=\kappa(E_0\setminus E^*)+\kappa'(E^*)
	\end{align*}
	This implies that:
	\begin{align*}
	\kappa'(E_0)-\kappa(E_0\setminus E^*)&= \kappa'(E^*) = \alpha_{\eps}\cdot \mu\cdot \log n
	\end{align*}
	The LHS is lower bounded by $\kappa'(E_0)-\kappa(E_0)$, since $\kappa(E_0)\geqslant \kappa(E_0\setminus E^*)$. This proves the first part of our claim. To prove the second part, first notice that $E^*$ contains at least $\eps\cdot\mu$ edges of every matching of size at least $(1-3\eps)\cdot \mu$ (implied by \Cref{lem:MorE}\ref{item:edgeE}). Let $M,M'\in \mathcal{M}$ be the matchings that minimize $\Pi(G,\kappa)$ and $\Pi(G,\kappa')$ respectively, and let $c$ and $c'$ be the cost functions associated with $\kappa$ and $\kappa'$ respectively. Observe that $\card{M}\geqslant (1-3\eps)\cdot \mu$ and $\card{M'}\geqslant (1-3\eps)\cdot \mu$ (by definition of $\mathcal{M}$, see \Cref{def:mathcalM}). Then, we have the following.
	\begin{align*}
	\Pi(G,\kappa')&=c'(M')=\sum_{e\in M'} \log\paren{n\cdot \kappa'(e)}=\sum_{e\in M'\setminus E^*}\log \paren{n\cdot \kappa(e)}+ \sum_{e\in M'\cap E^*} \log\paren{n\cdot \kappa(e)\cdot \alpha_{\eps}}\\
	&=\sum_{e\in M'}\log \paren{n\cdot \kappa(e)}+\card{M'\cap E^*}\cdot \log \alpha_{\eps}	\\
	&= c(M')+ \nicefrac{\mu}{\eps}\geqslant \Pi(G,\kappa)+ \nicefrac{\mu}{\eps}\\
	&\text{(Since }\card{M'\cap E^*}\geqslant \eps\cdot \mu\text{ and }\alpha_{\eps}=2^{\nicefrac{32}{\eps^2}}) \\
	\end{align*}
	This proves our claim.
\end{proof}

\begin{observation2}\label{obs:calls1}
	The total number of calls to \textsc{M-or-E*}() (till \textsc{Dec-Matching}() terminates) that return $E^*$ are upper bounded by $O(\eps\cdot \log n)$. This is because the potential function $\Pi(G,\kappa)$ is upper bounded by $O(\mu\log n)$, and each call to \textsc{M-or-E*}() that returns $E^*$, increments $\Pi(G,\kappa)$ by $\nicefrac{\mu}{\eps}$ (see \Cref{lem:congestbal}\ref{item:congestb}). Thus, we can deduce that $\kappa(E_0)\leqslant \alpha_{\eps} \cdot \mu \cdot \eps \cdot \log n$. This is because each call to \textsc{M-or-E*}() that returns $E^*$ increases $\kappa(E_0)$ by at most $\alpha_{\eps}\cdot \mu\cdot \log n$ (see \Cref{lem:congestbal}\ref{item:congesta}), and there are at most $\eps\cdot \log n$ such calls. 
\end{observation2}

We now want to upper bound the number of calls made to \textsc{M-or-E*}() that return a matching. This is upper bounded by the number of phases. 
\begin{lemma2}\label{lem:phases}
	The total number of phases is at most $O(\alpha_{\eps}^3\cdot \log n)$. 
\end{lemma2}
\begin{proof}
	We define $\Phi_{\textsf{del}}\coloneqq\sum_{e\in E_{\textsf{del}}}\kappa(e)$ to be the capacity of deleted edges. Observe that $\Phi_{\textsf{del}}\leqslant \kappa(E_0)\leqslant \alpha_{\eps}\cdot \eps\cdot \mu\cdot \log n$. This is implied by the definition of $\kappa(E_0)$ and Observation \ref{obs:calls1}. Moreover, each time we start a new phase, at least $\eps\cdot \mu$ value of the fractional matching has been deleted. That is $\sum_{e\in E}x(e)$ has dropped by $\eps\cdot \mu$. From Property \ref{prop:capexceed}, one can conclude that for any edge $e$, $x(e)\leqslant \alpha_{\eps}^2\cdot \kappa(e)$. This implies that a phase contributes at least $\nicefrac{\eps\cdot \mu}{\alpha^2_{\eps}}$ to $\Phi_{\textsf{del}}$. Using the fact that $\Phi_{\textsf{del}}$ is upper bounded by $\alpha_{\eps}\cdot \eps\cdot \mu\cdot \log n$, we can conclude that the total number of phases is at most $\alpha_{\eps}^3\cdot \log n$. 
\end{proof}

\begin{proof}[Proof of \Cref{lem:runtime}]
	From \Cref{lem:phases} and Observation \ref{obs:calls1}, we conclude that the total number of calls to \textsc{M-or-E*}() are upper bounded by $O(\alpha_{\eps}^3\cdot \log n)$. Finally, the number of calls to \textsc{Static-Match}() due to \Cref{line:recomp} are upper bounded by the number of phases, which are at most $O(\alpha_{\eps}^3\cdot \log n)$ by \Cref{lem:phases}.
\end{proof}
\section{Ingredients for Algorithm \textsc{M-or-E*}()}

Recall that we use $\mu(G,\kappa)$ to denote the value of the maximum fractional matching of $G$ obeying capacity function $\kappa$ and the odd set constraints. As in the congestion balancing setup of \cite{BPT20}, we want to check if $\mu(G,\kappa)\geqslant (1-\eps)\cdot \mu(G)$. However, unlike in bipartite graphs, where we can use flows to find fractional matching, there is no simple way to check if $\mu(G,\kappa)\geqslant (1-\eps)\cdot \mu(G)$ in general graphs. Our first structural result circumvents this issue. Let $G_s$ be the graph obtained by sampling every edge $e$ with probability $p(e)=\min\set{1,\kappa(e)\cdot \rho_{\eps}}$. We show that $\mu(G_s)\geqslant \mu(G,\kappa)-\eps\cdot \mu(G)$. Thus, we can run \textsc{Static-Match}($G_s,\eps$) to estimate $\mu(G,\kappa)$. \\ \\
At a high level, \textsc{M-or-E*}() proceeds in three phases. In Phase 1, it creates $G_s$ and computes $\mu(G_s)$. If this matching is large, it proceeds to Phase 2, where it finds a fractional matching $\vec{x}$ such that $\sum_{e\in E}x(e)\geqslant (1-\eps)\cdot \mu(G)$. On the other hand, if $\mu(G_s)$ is small, then it proceeds to Phase 3, where it finds the set of edges $E^*$ along which it increases capacity. In the subsequent sections, we will state the main structural properties we use in each of the phases. Finally, in \Cref{sec:lemma7}, we put together these ingredients to give \textsc{M-or-E*}(), and prove \Cref{lem:MorE}.
\subsection{Phase 1 of \textsc{M-or-E*()}}

Before we formally state the main guarantees of Phase 1, we will state some standard results in matching theory, that we will use in our main result for Phase 1. 

\subsubsection{Some Standard Ingredients For Phase 1}

The first ingredient we use is the Tutte-Berge formula. 

\begin{definition2}
	Let $G$ be any graph (possibly containing multiedges), and let $U\subseteq V$, then \text{odd}$_{G}(V\setminus U)$ refers to the number of odd components in $G[V\setminus U]$. 
\end{definition2}

\begin{lemma2}[Tutte-Berge Formula]\cite{Schrijver2003CombinatorialOP}\label{lem:tuttewitn}
	The size of a maximum matching in a graph $G=(V,E)$ is equal to $\frac{1}{2}\min\limits_{U\subseteq V}\paren{\card{U}+\card{V}-\text{odd}_{G}(V\setminus U)}$.
\end{lemma2}

Additionally, we will use some properties of the matching polytope. 

\begin{lemma2}\cite{Schrijver2003CombinatorialOP}\label{lem:folklore}
	Let $G$ be any graph, and let $\vec{x}$ be a fractional matching that in addition to the fractional matching constraints, also satisfies the following for all odd-sized $U\subseteq V$: $\sum\limits_{e\in G[U]} x(e)\leqslant \frac{\card{U}-1}{2}$. Then, there is an integral matching $M\subseteq \text{supp}(x)$ with $\card{M}=\sum_{e\in E}x(e)$.
\end{lemma2}


\begin{definition2}
	Let $G$ be any multigraph, and let $S,T\subseteq V$, then $\delta_{G}(S,T)$ is defined as the set of edges that have one endpoint in $S$ and the other in $T$. Additionally, for $S\subseteq V$, we define $\delta_G(S)$ to be the set of edges that have one end point in $S$, and the other in $V\setminus S$. 
\end{definition2}

\begin{figure}
	\centering
	\includegraphics[scale=0.25]{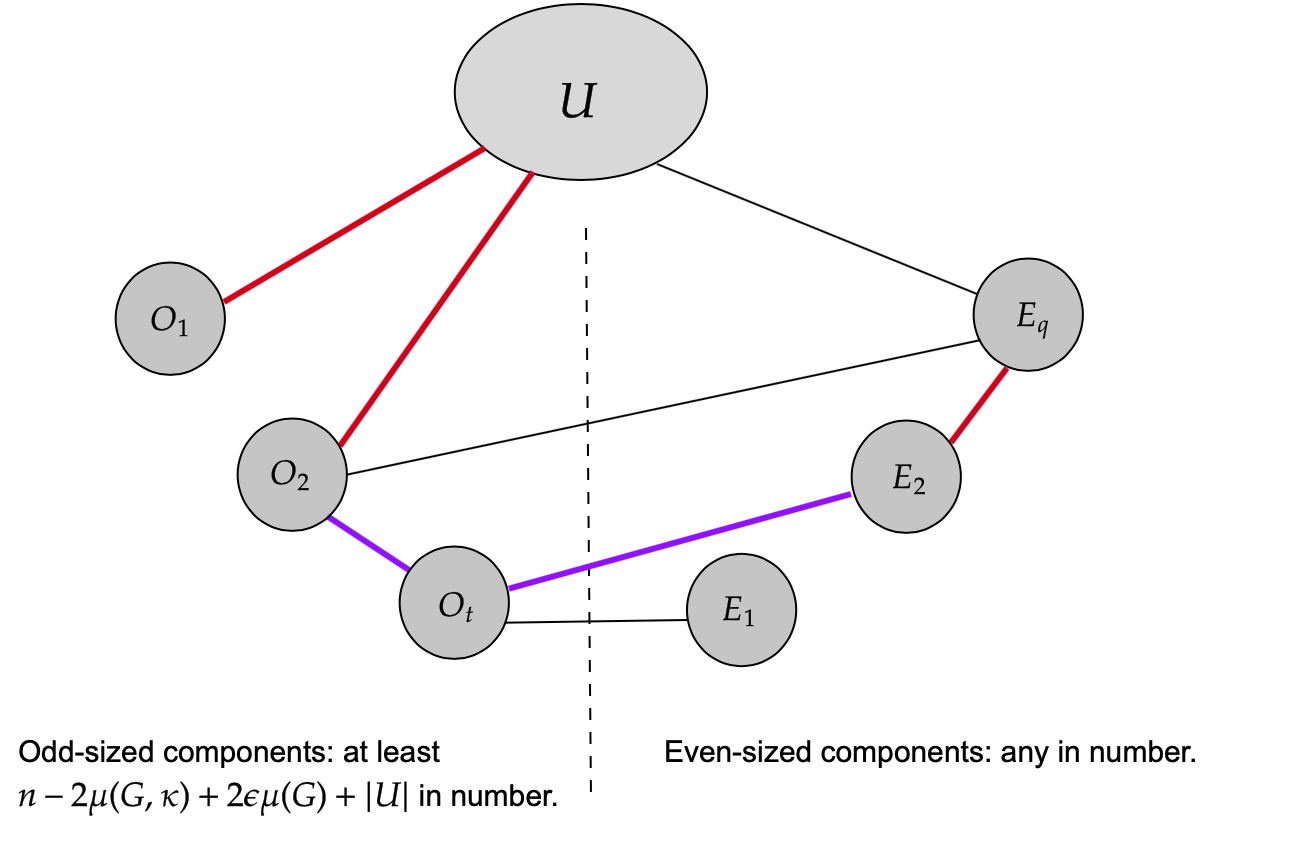}
	\caption{The figure shows the graph $G$, and a partition $\mathcal{P}=\set{U,E_1,\cdots, E_{q},O_1,\cdots, O_{t}}$ satisfying \ref{item:part1} and \ref{item:part2}. The thick edges (in red and purple), are the edges in supp($\vec{x}$), where $\vec{x}$ is the fractional matching realizing $\mu(G,\kappa)$. The purple edges (edges between the odd components, or between odd and even components) correspond to $E_{\textsf{miss}}^{\mathcal{P}}$, and $\sum_{e\in E _{\textsf{miss}}^{\mathcal{P}}}x(e)\geqslant 2\cdot \eps\cdot \mu(G)$. }
	\label{fig:tutte}
\end{figure}

\subsubsection{Main Lemma for Phase 1}

As mentioned earlier, in Phase 1 of \textsc{M-or-E*}(), we first create a sampled graph $G_s$. In the following lemma, we show that $\mu(G_s)$ is a good estimate for $\mu(G,\kappa)$ with high probability.

\begin{lemma2}\label{lem:matchingGs}
	Let $G$ be a multigraph with $\mu(G)\geqslant \nicefrac{\eps\cdot n}{16}$ where $\eps\in (0,\nicefrac{1}{2})$. Let $\kappa$ be a capacity function on the edges of $G$, and let $G_{s}$ be obtained by sampling every edge $e\in G$ with probability $p(e)=\min\set{\kappa(e)\cdot \rho_{\eps},1}$. Let $\mu(G,\kappa)$ be the value of the maximum fractional matching of $G$ obeying the capacities $\kappa$, and the odd set constraints. Then, with high probability, $\mu(G_s)\geqslant \mu(G,\kappa)-\eps\cdot\mu(G)$.
\end{lemma2}
\begin{proof} 
	We want to show that with probability at least $1-\nicefrac{1}{n^2}$, $\mu(G_s)\geqslant \mu(G,\kappa)-\eps\cdot \mu(G)$. In order to do this, by \Cref{lem:folklore}, it is sufficient to show that with probability at least $1-\nicefrac{1}{n^2}$, $\frac{1}{2}\min\limits_{U\subseteq V}\paren{\card{U}+\card{V}-\text{odd}_{G_s}(V-U)}\geqslant \mu(G,\kappa)-\eps\cdot \mu(G)$. \\ \\
	Towards this, we consider a fixed partition $\mathcal{P}$ of $V$ into sets $U,O_1,\cdots,O_{t},E_1,\cdots,E_{q}$ with the following properties (see \Cref{fig:tutte}).
	\begin{enumerate}[label=(\alph*)]
		\item \label{item:part1} We have, $q\geqslant 0$ and $t>n-2\cdot \mu(G,\kappa)+2\eps\cdot \mu(G)+\card{U}$. 
		\item \label{item:part2} Sets $O_i$ for $i\in \bracket{t}$ are odd-sized sets and sets $E_l$ for $l\in [q]$ are even-sized sets. 
	\end{enumerate}
Note that if $\mu(G_s)< \mu(G,\kappa)-\eps\cdot \mu(G)$, then there is a partition $\mathcal{P}=\set{U,O_1,\cdots,O_t,E_1,\cdots,E_{q}}$ of $G_s$, satisfying \ref{item:part1} and \ref{item:part2} such that $V\setminus U$ is the union of components $O_1,\cdots, O_t,E_1,\cdots, E_{q}$. Note that if $V\setminus U$ is the union of disjoint components $O_1,\cdots, O_{t},E_1,\cdots,E_{q}$ then $\delta_{G_s}(O_i,O_l)=\emptyset$ for all $i\neq l$ and $\delta_{G_s}(O_i,E_l)=\emptyset$ for all $i\neq l$ (this is evident from \Cref{lem:tuttewitn}). Thus, to upper bound the probability that $\mu(G_s)<\mu(G,\kappa)-\eps\cdot \mu(G)$, it is sufficient to upper bound the probability that for all partitions $\mathcal{P}=\set{U,O_1,\cdots,O_{t},E_1,\cdots,E_{q}}$ satisfying \ref{item:part1} and \ref{item:part2}, \emph{none} of the edges $E^{\mathcal{P}}_{\textsf{miss}}=\set{e\mid e\in \delta_{G}(O_i,O_l)\text{ for }i\neq l}\cup\set{e\mid e\in \delta_{G}(O_i,E_l)\text{ for }i\in [t],l\in[q]}$ are sampled in $G_s$. In order to bound this probability, we make the following claim. 
\begin{claim2} \label{claim:emiss}
For a partition $\mathcal{P}$ satisfying \ref{item:part1} and \ref{item:part2},  $\kappa(E^{\mathcal{P}}_\textsf{miss})\geqslant 2\cdot \eps\cdot \mu(G)$. 
\end{claim2}
\begin{proof}
Let $\vec{x}$ be a fractional matching obeying odd set constraints and capacity function $\kappa$ such that $\sum_{e\in E}x(e)=\mu(G,\kappa)$. In order to prove this claim, we show that if $\kappa(E_{\textsf{miss}}^{\mathcal{P}})<2\cdot \eps\cdot \mu(G)$, then, $x(E_{\textsf{miss}}^{\mathcal{P}})>\kappa(E_{\textsf{miss}}^{\mathcal{P}})$, which will contradict the fact that $\vec{x}$ is a fractional matching obeying $\kappa$.\\ \\
With this proof strategy in mind, for contradiction assume that $\kappa(E^{\mathcal{P}}_{\textsf{miss}})< 2\cdot \eps\cdot \mu(G)\leqslant n-2\cdot \mu(G,\kappa)+2\cdot \eps\cdot \mu(G)$. The last inequality follows from the fact that $\mu(G,\kappa)$ corresponds to the value of the maximum fractional matching, so, $\mu(G,\kappa)\leqslant \frac{n}{2}$. Since $\vec{x}$ obeys odd set constraints, $\sum_{l\leqslant t}x(\delta_G(O_l))\geqslant t$ (from \Cref{lem:folklore}). Note that $\sum_{l\leqslant t}x(\delta_G(O_{l},U))\leqslant \card{U}$, otherwise for some $v\in U$, $\sum_{e\ni v}x(e)>1$, violating the fact that $\vec{x}$ is a fractional matching. Next, we observe that $\sum_{l\leqslant t}x(\delta_G(O_l))=x(E^{\mathcal{P}}_{\textsf{miss}})+\sum_{l\leqslant t}x(\delta_G(O_l,U))$. This follows from the fact that all edges emanating out of $O_i$ in $G$, are either incident on other $O_j$, or $E_k$ or $U$. We have the following set of inequalities. 
\begin{align*}
x(E_{\textsf{miss}}^{\mathcal{P}})&=\sum_{l\leqslant t} x(\delta_{G}(O_l))-\sum_{l\leqslant t}x(\delta_{G}(O_l,U))\\
&\geqslant t-\card{U}\\
&> n-2\cdot \mu(G,\kappa)+2\cdot \eps\cdot \mu(G)+\card{U}-\card{U}
\end{align*}
Thus, that $x(E^{\mathcal{P}}_{\textsf{miss}})\geqslant n-2\mu(G,\kappa)+2\eps\mu(G)>\kappa(E^{\mathcal{P}}_{\textsf{miss}})$. This is a contradiction. This concludes our proof, and we know that $\kappa(E^{\mathcal{P}}_{\textsf{miss}})\leqslant 2\cdot \eps\cdot \mu(G)$.
\end{proof}
Additionally, we have the following claim, which allows us to only focus on $\mathcal{P}$ for which all $e\in E^{\mathcal{P}}_{\textsf{miss}}$ have $\kappa(e)<\nicefrac{1}{\rho_{\eps}}$.

\begin{observation2}\label{claim:emiss2}
	Suppose $\kappa(e)\geqslant \nicefrac{1}{\rho_{\eps}}$ for any $e\in E^{\mathcal{P}}_{\textsf{miss}}$, then,
	$$\prob{\text{None of the edges in }E^{\mathcal{P}}_{\textsf{miss}}\text{ are sampled in }G_s}=0.$$
\end{observation2}
The above observation follows from the fact that an edge $e$ is sampled with probability $\min\set{1,\kappa(e)\cdot \rho_{\eps}}$. From the above claim, we can deduce that if for any partition $\mathcal{P}$, if $E_{\textsf{miss}}^{\mathcal{P}}$ has an edge $e$ with $\kappa(e)\geqslant \nicefrac{1}{\rho_{\eps}}$, then the contribution of $E_{\textsf{miss}}^{\mathcal{P}}$ to our probability bound will be $0$. Hence, it is sufficient to focus on $E_{\textsf{miss}}^{\mathcal{P}}$ where all edges $e$ have $\kappa(e)< \nicefrac{1}{\rho_{\eps}}$. We now bound the following probability.
	\begin{align*}
	\Prob\paren{\text{None of the edges in }{E^{\mathcal{P}}_{\textsf{miss}}\text{ are sampled in }G_s}}&\leqslant \prod_{e\in E^{\mathcal{P}}_{\textsf{miss}}}(1-p(e))\\
	&\leqslant \exp\paren{-\sum_{e\in E^{\mathcal{P}}_{\textsf{miss}}}p(e)}\\
	&=\exp\paren{-\sum_{e\in E^{\mathcal{P}}_{\textsf{miss}}}\kappa(e)\cdot \rho_{\eps}}\\
	&\text{(Since }p(e)=\kappa(e)\cdot \rho_{\eps}\text{ (from Claim \ref{claim:emiss2} and discussion above))}\\
	&=\exp\paren{-\eps\cdot 2^{\nicefrac{16}{\eps^2}}\cdot \mu(G)\cdot \log n}\\
	&\text{(From Claim \ref{claim:emiss} and the fact that }\rho_{\eps}=2^{\nicefrac{16}{\eps^2}}\cdot \log n)\\
	&\leqslant \exp\paren{- 2^{\nicefrac{15}{\eps^2}}\cdot \mu(G)\cdot \log n}.
	\end{align*}
	Note that the total number of partitions of graph $G$ are upper bounded by $2^{n\cdot \log n}$. Hence, this also upper bounds the total number of partitions of $G$ satisfying \ref{item:part1} and \ref{item:part2}. Since $n\leqslant \nicefrac{16\cdot\mu(G)}{\eps}$ and $\eps<\nicefrac{1}{2}$, the bound on the number of partitions is at most $2^{\nicefrac{16\mu}{\eps}\cdot \log n}$ (by assumption), taking a union bound over all the partitions, we know that the with probability at least $1-\exp\paren{-\nicefrac{\mu(G)\cdot \log n}{\eps}}$, in $G_s$, we have no partition $\mathcal{P}=\set{U,O_1,\cdots, O_t,E_{1},\cdots, E_{q}}$ satisfying \ref{item:part1} and \ref{item:part2}. Thus, by \Cref{lem:tuttewitn}, we have that with high probability, $\mu(G_s)\geqslant \mu(G,\kappa)-\eps\cdot \mu(G)$. 
\end{proof}

\subsection{Phase 2 of \textsc{M-or-E*}()}

The algorithm proceeds to Phase 2 only if the integral matching $M_s$ found in $G_s$ is close to $\mu(G,\kappa)$. Recall that our goal is to compute a fractional matching so that we can apply the congestion balancing framework. However, as mentioned before there is no straightforward way of computing a maximum fractional matching in a general graph obeying capacity $\kappa$. To overcome this, we give a candidate fractional matching which is easy to compute, and is sufficient for our purposes (that is, it avoids the integrality gap). At a high level, this is what Phase 2 does, and in this section we describe our candidate fractional matching and show that it is close in value to $\mu(G_s)$ with high probability, and therefore it is close to $\mu(G,\kappa)$ as well (by \Cref{lem:matchingGs}). 
\subsubsection{Preliminaries for Phase 2}

Phase 2 starts by computing $M_s = \textsc{Static-Match}(G_s, \eps)$ and then uses $M_s$ to compute the desired fractional matching. We will split $M_s$ into low capacity edges and high capacity edges, and as a result split $V$ into vertices matched using high capacity edges, and low capacity edges. We begin by giving a formal definition of low capacity edges. 

\begin{definition2}\label{def:lowedges}
Let $G$ be any multigraph, and let $\kappa$ be a capacity function on the edges of $G$. Let $\eps\in (0,\nicefrac{1}{2})$. Define $E_{L}(G,\kappa)=\set{e\in E\mid e\in D(u,v)\text{ and }\kappa(D(u,v))\leqslant \nicefrac{1}{\alpha^2_{\eps}}}$. Intuitively, $E_{L}(G,\kappa)$ is the set of low total capacity edges. 
\end{definition2}

As mentioned in the high-level overview, in order to prove our probabilistic claims, we will give some slack to the capacities. This motivates our next definition.

\begin{definition2}\label{def:kappaplus}
	Let $G$ be a multi-graph, and let $\kappa$ be a capacity function on the edges of $G$. Let $\eps\in (0,\nicefrac{1}{2})$. We define the capacity function $\kappa^+$ as follows.
	\begin{enumerate}[label=(\alph*)]
		\item For all $e\in E_{L}(G,\kappa)$, $\kappa^{+}(e)=\kappa(e)\cdot \alpha_{\eps}$.
		\item For all $e\in E\setminus E_{L}(G,\kappa)$, $\kappa^+(e)=\kappa(e)$. 
	\end{enumerate}
\end{definition2}

To make our analysis easier to follow, we need the following definition of a bipartite double cover of $G$.

\begin{definition2}
	Let $G$ be a multi-graph and let $\kappa$ be a capacity function on the edges of $G$. We define the bipartite double cover $\textsc{bc}(G)$ to be a bipartite graph with capacity function $\kappa_{\textsc{bc}}$ as follows.
	\begin{enumerate}[label=(\alph*)]
		\item For every vertex $v\in V(G)$, make two copies $v$ and $v'$ in $V(\textsc{bc}(G))$. 
		\item If $e$ is an edge between $u,v\in V(G)$, then for each such $e$ we add two edges $e'$ and $e''$, one between $u$ and $v'$ and the other between $v$ and $u'$. We let $\kappa_{\textsc{bc}}(e')=\kappa_{\textsc{bc}}(e'')=\kappa(e)$.
	\end{enumerate}
\end{definition2}
We have the following standard claim relating $\mu(G)$ and $\mu(\textsc{bc}(G))$. \begin{claim2}\label{claim:matchbc}
	For any multi-graph $G$, $\mu(\textsc{bc}(G))\geqslant 2\cdot \mu(G)$.
\end{claim2}

Next, we state the following lemma, which follows from standard techniques, and we give a formal proof of it in \Cref{sec:missingproof}. The lemma essentially states that a fractional matching which has low flow on all edges has a very small integrality gap. 

\begin{restatable}{lemma2}{lemnono}\label{lem:folklore2}
	Let $G$ be a multigraph, and let $\eps\in (0,1)$. Let $\kappa$ be a capacity function on the edges of $G$, with $\kappa(D(e))\leqslant \nicefrac{1}{\alpha_{\eps}}$ for all $e\in E(G)$. Then, $\mu(\textsc{bc}(G),\kappa_{\textsc{bc}})\leqslant 2\cdot\paren{1+\eps}\cdot \mu(G,\kappa)$, where $\mu(G,\kappa)$ is the maximum fractional matching of $G$ obeying $\kappa$ and the odd set constraints, and $\mu(\textsc{bc}(G),\kappa_{\textsc{bc}})$ is the maximum fractional matching of $\textsc{bc}(G)$ obeying $\kappa_{\textsc{bc}}$. 
\end{restatable}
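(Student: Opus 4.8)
The plan is to prove \Cref{lem:folklore2} by passing through the odd-set/blossom structure of the matching polytope of $G$ and using the hypothesis that every (collapsed) edge has capacity at most $\nicefrac{1}{\alpha_{\eps}}$ to show that all \emph{small} odd-set constraints of $G$ are automatically satisfied by any $\kappa_{\textsc{bc}}$-feasible fractional matching of $\textsc{bc}(G)$, while \emph{large} odd sets contribute only a lower-order loss. Concretely, let $\vec{w}$ be an optimal fractional matching of $\textsc{bc}(G)$ obeying $\kappa_{\textsc{bc}}$, with value $\mu(\textsc{bc}(G),\kappa_{\textsc{bc}})$. The natural first step is to ``fold'' $\vec{w}$ back to $G$: for an edge $e$ of $G$ giving rise to $e',e''$ in $\textsc{bc}(G)$, set $x(e) = \tfrac12\bigl(w(e') + w(e'')\bigr)$. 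Since each vertex $v\in V(G)$ has two copies $v,v'$ in $\textsc{bc}(G)$, summing the fractional-matching constraints at $v$ and at $v'$ and dividing by $2$ shows $\sum_{e\ni v} x(e)\leqslant 1$, so $\vec{x}$ is a fractional matching of $G$; moreover $\sum_e x(e) = \tfrac12\sum_f w(f) = \tfrac12\,\mu(\textsc{bc}(G),\kappa_{\textsc{bc}})$, and $x$ obeys $\kappa$ (indeed $x(e)\leqslant\kappa(e)$ edgewise, since both $w(e'),w(e'')\leqslant\kappa(e)$), hence $x^C(D(u,v))\leqslant\kappa(D(u,v))\leqslant\nicefrac{1}{\alpha_{\eps}}$.

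The second step is to repair $\vec{x}$ into something obeying the odd-set constraints while losing only a $(1+\eps)$ factor, so that its value lower-bounds $\mu(G,\kappa)$. Here I would invoke the standard fact (e.g.\ via \Cref{lem:folklore} applied after a small uniform scaling, or the classical rounding of fractional matchings through blossom inequalities) that if a fractional matching has $x^C((u,v))\leqslant\eps'$ for all pairs, then it violates no odd-set constraint of size $\leqslant 1/\eps'$, and large odd sets can be handled by scaling down by $(1+\eps)^{-1}$: for an odd set $B$ with $\card{B}\geqslant 1/\eps'$ we have $\sum_{e\in G[B]} x(e)\leqslant \card{B}/2 \leqslant \frac{\card{B}-1}{2}\cdot\frac{\card{B}}{\card{B}-1}\leqslant\frac{\card{B}-1}{2}(1+2\eps')$, so scaling $\vec{x}$ by $(1+2\eps')^{-1}$ (and using $\alpha_{\eps}$ large enough, i.e.\ $1/\alpha_{\eps}\ll\eps$, so that $\eps'$ can be taken $\ll\eps$) yields a fractional matching obeying both $\kappa$ and \emph{all} odd-set constraints, of value at least $(1+\eps)^{-1}\cdot\tfrac12\,\mu(\textsc{bc}(G),\kappa_{\textsc{bc}})$. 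Since this is a feasible competitor in the program defining $\mu(G,\kappa)$, we get $\mu(G,\kappa)\geqslant (1+\eps)^{-1}\tfrac12\,\mu(\textsc{bc}(G),\kappa_{\textsc{bc}})$, which rearranges to exactly $\mu(\textsc{bc}(G),\kappa_{\textsc{bc}})\leqslant 2(1+\eps)\mu(G,\kappa)$.

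The main obstacle I anticipate is the odd-set repair step: it must be carried out carefully for \emph{multigraphs} (the odd-set constraint is on $\sum_{e\in G[B]}x(e)$, summed over all parallel copies, but the relevant quantity for the size bound is $\sum_{\{u,v\}\subseteq B} x^C((u,v))$, and one has to confirm these match up through the $D/C$ transformations of \Cref{def:dist}), and one must track the precise relationship between $\alpha_{\eps}$ (hence $\eps'\approx 1/\alpha_{\eps}$) and $\eps$ to make sure the cumulative loss from scaling down large odd sets is genuinely at most a $(1+\eps)$ factor and not something like $(1+\eps')\cdot(1+\eps')$ compounding badly; given $\alpha_{\eps}=\log n\cdot 2^{\nicefrac{60}{\eps^2}}$ this is comfortably true but needs to be said. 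A secondary subtlety is that $\textsc{bc}(G)$ may have a \emph{larger} maximum fractional matching than $2\mu(G,\kappa)$ precisely because bipartite graphs have no odd-set constraints, so the inequality direction is the ``interesting'' one and the $(1+\eps)$ slack is genuinely necessary — the proof cannot hope for equality, and the argument above exploits exactly the place where odd sets bite (sets of size $\leqslant 1/\eps'$ are safe for free, larger ones cost a controlled multiplicative factor).
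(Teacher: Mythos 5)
Your proposal is correct and follows essentially the same route as the paper's proof: fold the optimal fractional matching of $\textsc{bc}(G)$ back to $G$ by averaging the two copies of each edge, use the hypothesis $\kappa(D(e))\leqslant\nicefrac{1}{\alpha_{\eps}}$ to get all small odd-set constraints for free, and scale down slightly so that large odd sets are also satisfied, then conclude by feasibility in the program defining $\mu(G,\kappa)$. The only (immaterial) difference is bookkeeping: the paper takes the small/large threshold at $\nicefrac{3}{\eps}$ and scales by $(1+\eps)^{-1}$, while you take the threshold at $\alpha_{\eps}$ and scale by $(1+2/\alpha_{\eps})^{-1}$, which is an even smaller loss and still yields the stated $2(1+\eps)$ factor.
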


Additionally, we will need the following version of Hall's theorem, which follows as a corollary of \Cref{lem:tuttewitn}.

\begin{proposition}[Extended Hall's Theorem]\label{prop:halls}
	Let $G=(L\cup R, E)$ be a bipartite graph with $n=\card{L}=\card{R}$, then $\mu(G)=n-\max_{S\subseteq L}\paren{\card{S}-\card{N_G(S)}}$, where $N_G(S)$ refers to the neighbourhood of $S$ in $G$. 
\end{proposition}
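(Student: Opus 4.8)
\textbf{Proof plan for Proposition~\ref{prop:halls} (Extended Hall's Theorem).}
The plan is to derive this directly from the Tutte--Berge formula (\Cref{lem:tuttewitn}) applied to the bipartite graph $G=(L\cup R,E)$ with $\card{L}=\card{R}=n$, so $\card{V}=2n$. Tutte--Berge gives $\mu(G)=\frac12\min_{U\subseteq V}\paren{\card{U}+\card{V}-\text{odd}_G(V\setminus U)}$. The key structural observation is that in a bipartite graph the odd components of $G[V\setminus U]$ can be controlled by how $U$ splits across the two sides. Write $U = U_L \cup U_R$ with $U_L\subseteq L$, $U_R\subseteq R$, and set $S\coloneqq L\setminus U_L$ (the part of $L$ left outside $U$). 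Every vertex of $S$ has all its $G$-neighbours in $R$, so in $G[V\setminus U]$ the set $S$ sees only vertices of $R\setminus U_R$; moreover any component of $G[V\setminus U]$ meeting $S$ must be contained in $S \cup N_G(S)$ (restricted to $V\setminus U$).

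The main step is to choose $U$ cleverly to get the upper bound $\mu(G)\le n-\max_{S\subseteq L}(\card{S}-\card{N_G(S)})$ and then, separately, the matching lower bound so the two coincide. For the upper bound: fix the set $S^\ast\subseteq L$ achieving the max, and take $U\coloneqq (L\setminus S^\ast)\cup N_G(S^\ast)$. Then $V\setminus U = S^\ast \cup (R\setminus N_G(S^\ast))$, and since $S^\ast$ has no neighbours outside $N_G(S^\ast)$, the graph $G[V\setminus U]$ is edgeless: it has $\card{S^\ast}+\card{R\setminus N_G(S^\ast)} = \card{S^\ast} + n - \card{N_G(S^\ast)}$ isolated vertices, hence that many odd components. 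Plugging in, $\card{U}+\card{V}-\text{odd}_G(V\setminus U) = \prn{n-\card{S^\ast}+\card{N_G(S^\ast)}} + 2n - \prn{\card{S^\ast}+n-\card{N_G(S^\ast)}} = 2n - 2\prn{\card{S^\ast}-\card{N_G(S^\ast)}}$, so $\mu(G)\le n - \max_{S}(\card{S}-\card{N_G(S)})$. For the reverse inequality one shows that no choice of $U$ can do better: given arbitrary $U=U_L\cup U_R$, let $S = L\setminus U_L$; every component of $G[V\setminus U]$ is either contained in $R\setminus U_R$ (an isolated vertex) or meets $S$, and the components meeting $S$ together contain all of $S$ plus some subset of $N_G(S)\setminus U_R$, so the number of components of $G[V\setminus U]$ is at most $\card{R\setminus U_R} \;+\; \card{S}$ is too crude; instead bound $\text{odd}_G(V\setminus U)\le \card{V\setminus U} - (\text{number of non-isolated... })$ — more cleanly, use that each odd component has at least one vertex, and that the components touching $S$ have total size $\ge \card{S} + (\text{at least }1\text{ vertex of }N_G(S)\text{ per component touching it})$, giving $\text{odd}_G(V\setminus U)\le \card{R\setminus U_R} + \card{S} - \card{N_G(S)\setminus U_R} \le n - \card{U_R} + \card{S} - (\card{N_G(S)}-\card{U_R}) = n + \card{S} - \card{N_G(S)}$, which when substituted yields $\card{U}+\card{V}-\text{odd}_G(V\setminus U)\ge 2n - 2(\card{S}-\card{N_G(S)})$, completing the match.

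The step I expect to be the main obstacle is the careful bookkeeping in the reverse direction: correctly accounting for how vertices of $N_G(S)$ lying inside $U$ versus outside $U$ affect the count of odd components, and making sure the inequality $\text{odd}_G(V\setminus U)\le n + \card{S}-\card{N_G(S)}$ holds for \emph{every} $U$ (not just the extremal one). An alternative, and perhaps cleaner, route avoids re-deriving everything from Tutte--Berge: one can instead invoke König's theorem (max matching $=$ min vertex cover in bipartite graphs) together with the standard deficiency form of Hall's theorem, $\mu(G) = \card{L} - \max_{S\subseteq L}(\card{S}-\card{N_G(S)})$; since $\card{L}=n$ this is exactly the claimed formula. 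Given that the paper explicitly says the proposition "follows as a corollary of \Cref{lem:tuttewitn}," I would present the Tutte--Berge derivation as the main argument, isolating the two inequalities as above, but I would double-check the component-counting bound on a couple of small examples (e.g. $G$ a perfect matching, and $G$ a star) before committing to the exact phrasing.
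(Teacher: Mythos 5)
Your forward inequality (choosing $U=(L\setminus S^\ast)\cup N_G(S^\ast)$) is correct. The issue is in the reverse direction, where the chain of inequalities you set up does not close. With $S=L\setminus U_L$, your bound $\text{odd}_G(V\setminus U)\le n+\card{S}-\card{N_G(S)}$ is valid, but substituting it into Tutte--Berge gives
\[
\card{U}+\card{V}-\text{odd}_G(V\setminus U)\;\ge\;\card{U_L}+\card{U_R}+2n-\prn{n+\card{S}-\card{N_G(S)}}
\;=\;2n-2\card{S}+\card{N_G(S)}+\card{U_R},
\]
since $\card{U_L}=n-\card{S}$. For this to be $\ge 2n-2\prn{\card{S}-\card{N_G(S)}}$ you would need $\card{U_R}\ge\card{N_G(S)}$, which is false in general. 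Concrete failure: $L=\set{\ell_1,\ell_2}$, $R=\set{r_1,r_2}$, edges $\ell_1 r_1,\ell_2 r_1,\ell_2 r_2$, and $U=\set{r_1}$. Then $S=L$, $N_G(S)=R$, $\card{U_R}=1<2=\card{N_G(S)}$, your intermediate bound evaluates to $\text{odd}\le 2$ (actual value is $1$), and $\card{U}+\card{V}-2=3<4=2n-2\prn{\card{S}-\card{N_G(S)}}$. The chain breaks; the choice $S=L\setminus U_L$ is simply not strong enough.

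The underlying difficulty is that in a bipartite graph the odd components of $G[V\setminus U]$ split into those with more $L$-vertices than $R$-vertices and vice versa, and a single $S\subseteq L$ only ``sees'' the first kind. The clean Tutte--Berge proof is: for a minimizing $U$, let $S$ be the union of $L$-sides of odd components with $L$-surplus and $T$ the union of $R$-sides of those with $R$-surplus; then $N_G(S)\subseteq U_R\cup\prn{\text{companion }R\text{-sides}}$ gives $\card{S}-\card{N_G(S)}\ge\card{\mathcal{C}^+}-\card{U_R}$, symmetrically $\card{T}-\card{N_G(T)}\ge\card{\mathcal{C}^-}-\card{U_L}$, so the two defects sum to at least $\text{odd}-\card{U}=2\prn{n-\mu(G)}$. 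Combined with your upper bound applied to \emph{both} sides (each defect is $\le n-\mu(G)$), this forces both to equal $n-\mu(G)$. Alternatively, just use König's theorem (itself a corollary of Tutte--Berge for bipartite graphs): take a minimum vertex cover $C$ with $\card{C}=\mu(G)$ and set $S=L\setminus C$; then $N_G(S)\subseteq C\cap R$, so $\card{S}-\card{N_G(S)}\ge n-\card{C}=n-\mu(G)$, which is the reverse inequality in one line. The paper itself supplies no proof of this proposition, so there is nothing to compare against directly, but either of these is the argument the authors presumably have in mind; your upper-bound half can be kept as is.
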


In order to prove \Cref{lem:sampling2}, we will use the following version of Chernoff bound.

\begin{lemma2}[Chernoff Bound]\label{lem:chernoff}
	Let $X_1,\cdots, X_k$ be negtively correlated random variables, and let $X$ denote their sum, and let $\mu=\expect{X}$. Suppose $\mu_{\text{min}}\leqslant \mu\leqslant \mu_{\text{max}}$, then for all $\delta>0$,
	$\Prob\paren{X\geqslant (1+\delta)\mu_{\text{max}}}\leqslant \paren{\frac{e^{\delta}}{(1+\delta)^{\delta}}}^{\mu_{\text{max}}}$.
\end{lemma2}

Additionally, we state the following observation, we refer the readers to \cite{BPT20} for a proof of this observation. The proof follows from a standard application of the max-flow min-cut theorem, and we refer the reader to \cite{BPT20} for a proof sketch.

\begin{observation2}\label{obs:maxflow}
	Let $G$ be any bipartite multigraph, with vertex bipartitions $S$ and $T$. Let $\kappa$ be the capacity function on the edges of the graph. Then, for any $C\subseteq S$, and $D\subseteq T$, we have, $\card{S}-\card{C}+\card{D}+\kappa(C,T\setminus D)\geqslant\mu(G,\kappa)$. Moreover, there are sets $C\subseteq S$ and $D\subseteq T$ such that equality holds. 
\end{observation2}

\subsubsection{Main Result for Phase 2}

We briefly give some intuition about the statement of our claim. Recall in the high level review, we mentioned that $M$, the integral matching of $G_s$ can split into two parts $M_H$, which is the high capacity part, and $M_L$, the low capacity part, and we defined $V_H=V(M_H)$ and $V_L=V(M_L)$. We said that congestion balancing allows us to give slack to capacities, and therefore, we can round up the capacities of $M_H$ to $1$. However, we still want to compute a fractional matching $G[V_L]$. In order to do this, we observe that if the fractional matching in $G[V_L]$ is only on low capacity edges, then we can use flow algorithms to compute such a matching. Therefore, our main structural result for Phase 2 states that if $\vec{y}$ is a fractional matching on the low capacity edges of $G[V_L]$, then with high probability $\sum_{e\in E}y(e)\geqslant \card{M_L}-\eps\cdot \mu(G)$. We now state this result formally.

\begin{figure}
	\centering
	\includegraphics[scale=0.25]{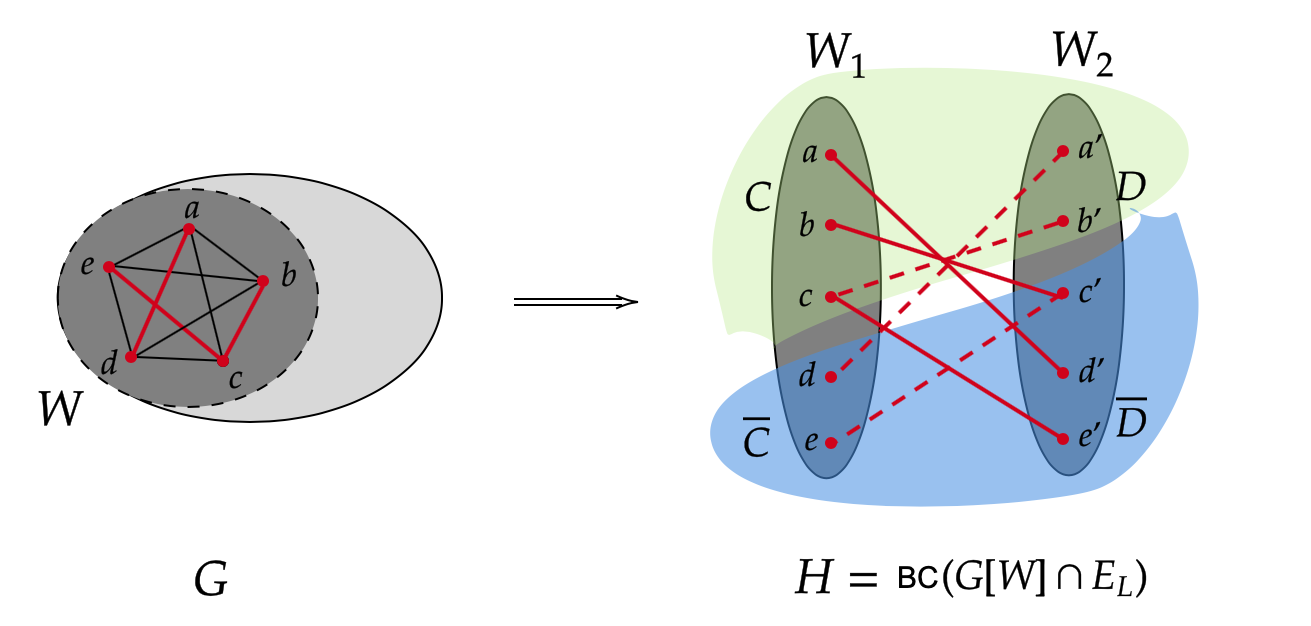}
	\caption{In the left panel, we consider the graph $G$, and $W=\set{a,b,c,d,e}$. The red edges correspond to the low capacity edges of $G[W]$, denoted $G[W]\cap E_{L}$. On the right panel, we have the bipartite graph $\textsc{bc}(G[W]\cap E_L)$, which has bipartitions $W_1$ and $W_2$. The solid red edges are the edges going between $C$ and $\bar{D}$, and these edges have capacity $\kappa(C,\bar{D})$. }
	\label{fig:sampling}
\end{figure}

\begin{lemma2}\label{lem:sampling2}
	Let $G$ be a multigraph and let $\eps\in (0,\nicefrac{1}{2})$ and suppose $\mu(G)\geqslant \nicefrac{\eps\cdot n}{16}$. Let $\kappa$ be a capacity function on the edges, and let $G_s$ be the graph obtained from $G$ by sampling each edge $e$ with probability $p(e)=\kappa(e)\cdot \rho_{\eps}$. Let $E_{L}\coloneqq E_{L}(G,\kappa)$. Then, for all $W\subseteq V$, we have, with high probability,
	\begin{align*}
	\mu(\textsc{bc}\paren{G_{s}[W]\cap E_{L}})\leqslant \mu(\textsc{bc}\paren{G[W]\cap E_{L}},\kappa^{+}_{\textsc{bc}})+\eps\mu(G).
	\end{align*}
\end{lemma2}

\begin{remark2}
	Note that by definition of $E_{L}$, all edges $e\in G[W]\cap E_{L}$ have $\kappa(e)\leqslant \nicefrac{1}{\alpha_{\eps}^2}$. Thus, $\kappa^+_{\textsc{bc}}(e)=\kappa_{\textsc{bc}}(e)\cdot \alpha_{\eps}$ for all $e\in \textsc{bc}(G[W]\cap E_{L})$.
\end{remark2}

Before we prove it, we have the following statement as a corollary of \Cref{lem:sampling2}.

\begin{corollary2}\label{cor:sampling}
Let $G$ be a multi-graph, and let $\eps\in (0,1)$. Let $\kappa$ be a capacity function on the edges of $G$ with $\kappa(e)\leqslant \nicefrac{1}{\alpha_{\eps}}$ for all $e\in E(G)$. Then, with high probability, \textbf{for all} $W\subseteq V$, $\mu(G_{s}[W]\cap E_{L})\leqslant \mu(G[W]\cap E_{L},\kappa^+)+2\eps\mu(G)$.
\end{corollary2}
\begin{proof}
	The inequality in the statement follows due to the following line of reasoning.
	\begin{equation*}
	\begin{split}
	2\cdot \mu(G_{s}[W]\cap E_{L})&\leqslant \mu(\textsc{bc}\paren{G_{s}[W]\cap E_{L}})\\
	&\text{(since $2\cdot \mu(H)\leqslant \mu(\textsc{bc}(H))$, see Claim \ref{claim:matchbc})}\\
	&\leqslant \mu(\textsc{bc}(G[W]\cap E_{L}),\kappa^+_{\textsc{bc}})+\eps\mu(G)\\
	&\text{(by \Cref{lem:sampling2})}\\
	&\leqslant 2\cdot(1+\eps)\cdot \mu(G[W]\cap E_{L},\kappa^+)+\eps\mu(G)\\
	&\text{(by \Cref{lem:folklore2})}\\
	&\leqslant 2\cdot \mu(G[W]\cap E_{L},\kappa^+) +3\eps\mu(G).
	\end{split}
	\end{equation*}
	The last inequality follows from the fact that any fractional matching in $G$ obeying $\kappa^+$ and odd set constraints is upper bounded by $\mu(G)$ (by \Cref{lem:folklore}). Dividing by two on both sides, we have our claim. 
\end{proof}

\begin{proof}[Proof of \Cref{lem:sampling2}]
	Consider a fixed $W\subseteq V$. From now, we will use $H$ to denote $\textsc{bc}(G[W]\cap E_{L})$ and let $H_s$ denote $\textsc{bc}(G_s[W]\cap E_{L})$. For the bipartite graph $H$, we will use $W_1$ and $W_2$ to denote the two bipartitions of $H$ corresponding to $W$ (see \Cref{fig:sampling} for an illustration). We now want to prove that $\mu(H_s)\leqslant \mu(H,\kappa^+_{\textsc{bc}})+\eps\mu(G)$. To prove this, it is sufficient to show a set $C\subseteq W_1$ such that $\card{C}-\card{N_{H_s}(C)}\geqslant \card{W_1}-\mu(H,\kappa^+_{\textsc{bc}})-\eps\mu(G)$ with high probability. Then, by \Cref{prop:halls}, we have the following inequality.
	\begin{align*}
	\card{W_1}-\mu(H,\kappa^+_{\textsc{bc}})-\eps\mu(G) \leqslant \card{C}-\card{N_{H_s}(C)}\leqslant \card{W_1}-\mu(H_{s}).
	\end{align*}
	This would prove our claim. Towards this, we consider the set $C\subseteq W$ satisfying the following equation (applying Observation \ref{obs:maxflow} to $H$ with $\kappa_{\textsc{bc}}^+$ on edges), and show that this is the required set.
	\begin{align}\label{eqn:upper}
	\card{W_1}-\card{C}+\card{D}+\kappa_{\textsc{bc}}^+(C,W_2\setminus D)=\mu(H,\kappa^+_{\textsc{bc}}).
	\end{align}
	We want to show that $\card{C}-\card{N_{H_s}(C)}\geqslant\card{W_1}-\mu(H,\kappa^+_{\textsc{bc}})-\eps\mu(G)$ with high probability. Let $L$ be the set of vertices in $N_{H_s}(C)\cap W_2\setminus D$. We know that $\card{N_{H_s}(C)}\leqslant \card{D}+\card{L}$. \\ \\
	Let $E_{H}(C,W_2\setminus D)$ be the set of edges in $H$ between $C$ and $W_2\setminus D$. Let $X$ be the random variable that denotes the number of edges in $E_{H}(C,W_2\setminus D)$ that are sampled in $H_s$. Note that $X$ is not a sum of independent random variables. Recall that $H$ is a subgraph of $\textsc{bc}(G)$, and suppose $e\in G[W]\cap E_{L}$ is included in $G_s$, then $e'$ and $e''$ (recall $e'$ and $e''$ are copies of $e$ in \textsc{bc}($G$)) are both included in $H_s$ else both are excluded. Thus, the random variables associated with $e'$ and $e''$ are correlated with each other. We instead consider an arbitrary subset of $E^*_{H}(C,W_2\setminus D)$ of $E_{H}(C,W_2\setminus D)$ that satisfies the following properties.
	\begin{enumerate}[label=(\alph*)]
		\item For $e\in G[W]$, if $\set{e',e''}\subset E_{H}(C,W_2\setminus D)$, then exactly one of $e'$ or $e''$ is included in $E^*_{H}(C,W_2\setminus D)$.
		\item For $e\in G[W]$, if $\set{e',e''}\cap E_{H}(C,W_2\setminus D)=\set{e'}$, then only $e'$ is included in $E^*_{H}(C,W_2\setminus D)$.
		\item For $e\in G[W]$, if $\set{e',e''}\cap E_{H}(C,W_2\setminus D)=\set{e''}$, then only $e''$ is included in $E^*_{H}(C,W_2\setminus D)$.
	\end{enumerate}
	We consider the random variable $Y$ that denotes the number of edges in $E^*_{H}(C,W_2\setminus D)$ that are included in $H_s$. Observe that $Y$ is a sum of independent random variables satisfying the condition of \Cref{lem:chernoff}. Moreover, $X\leqslant 2Y$. Thus, it is sufficient to upper bound the value $Y$ can take with high probability. Note that for any $e\in E_{H}(C,W_2\setminus D)$, using the definition of $H$, $\kappa_{\textsc{bc}}(e)<\nicefrac{1}{\alpha^2_{\eps}}$. Since $\rho_{\eps}<\alpha_{\eps}$, $p(e)=\rho_{\eps}\cdot \kappa_{\textsc{bc}}(e)<1$. So, we have,
	\begin{align*}
	\expect{Y}&\leqslant \sum_{e\in E_{H}^*(C,W_2\setminus D)} p(e)\leqslant  \rho_{\eps}\cdot \kappa_{\textsc{bc}}(C,W_2\setminus D)\leqslant \frac{\kappa^+_{\textsc{bc}}(C,W_2\setminus D)}{2^{\nicefrac{16}{\eps^2}}}.
	\end{align*}
	The last inequality follows from the fact that in $H$, $\kappa^+_{\textsc{bc}}(e)=\kappa_{\textsc{bc}}(e)\cdot \alpha_{\eps}$ for all $e\in H$. By definition of $H$ and \Cref{def:kappaplus}, for all edges $e\in H$, the corresponding original edge in $G$ is in $E_{L}(G,\kappa)$. We want to bound the following probability.
	\begin{align*}
	\prob{Y\geqslant \frac{\kappa^+_{\textsc{bc}}(C,W_2\setminus D)}{2^{\nicefrac{16}{\eps^2}}}+4\cdot\eps\mu(G)}
	\end{align*}
	Applying \Cref{lem:chernoff} with $\delta=\frac{4\cdot\eps\mu(G)\cdot 2^{\nicefrac{16}{\eps^2}}}{\kappa^+_{\textsc{bc}}(C,W_2\setminus D)}$, we have,
	\begin{align*}
	\prob{Y\geqslant \frac{\kappa^+_{\textsc{bc}}(C,W_2\setminus D)}{2^{\nicefrac{16}{\eps^2}}}+4\cdot\eps\mu(G)}&=\exp\paren{\eps\mu(G)-\eps\mu(G)\log\paren{1+\frac{4\cdot\eps\mu(G)\cdot 2^{\nicefrac{16}{\eps^2}}}{\kappa^{+}_{\textsc{bc}}(C,W_2\setminus D)}}}\\
	&\text{(Since }\kappa^{+}_{\textsc{bc}}(C,W_2\setminus D)\leqslant 4\cdot \mu(G)\text{ as proved below.)} \\
	&\leqslant \exp\paren{\eps\mu(G)-\eps\mu(G)\log\paren{1+\eps\cdot 2^{\nicefrac{16}{\eps^2}}}}\\
	&\text{(From \Cref{eqn:upper}, we have $\kappa^+_{\textsc{bc}}(C,W_2\setminus D)\leqslant 4\cdot\mu(G)$)}\\
	&\leqslant\exp\paren{\eps\mu(G)-\eps\mu(G)\log\paren{1+ 2^{\nicefrac{16}{\eps^2}}}}\\
	&\text{(Using the fact that }2^{\nicefrac{1}{\eps^2}}\geqslant \nicefrac{1}{\eps})\\
	&=\exp\paren{\eps\mu(G)-\nicefrac{16\mu(G)}{\eps}}. \\
	&\text{(Using the fact that }2^{\nicefrac{16}{\eps^2}}\leqslant 2^{\nicefrac{16}{\eps^2}}+1)
	\end{align*}
	To see why $\kappa^+_{\textsc{bc}}(C,W_2\setminus D)\leqslant 4\cdot\mu(G)$, consider \Cref{eqn:upper},
	\begin{align*}
	\kappa^+_{\textsc{bc}}(C,W_2\setminus D)&=\mu(H,\kappa^+_{\textsc{bc}})-\card{W_1}+\card{C}-\card{D}\\
	&\leqslant 2\cdot (1+\eps)\cdot\mu(G[W]\cap E_{L}(G,\kappa),\kappa^+_{\textsc{bc}})-\card{W_1}+\card{W_1}\\
	&\text{(Using \Cref{lem:folklore2}, the fact that }H=\textsc{bc}(G[W]\cap E_{L}(G,\kappa)),\text{ and $\kappa^+$ satisfies the hypothesis.})\\
	&\leqslant 4\cdot \mu(G).
	\end{align*}
	Taking a union bound over all $W$, which are at most $2^{\nicefrac{16\cdot \mu(G)}{\eps}}$ many, we have our bound (this is because by statement of the lemma, $\mu(G)\geqslant \nicefrac{\eps\cdot n}{16}$). 
\end{proof}

\subsection{Phase 3: Finding set $E^*$}

The algorithm \textsc{M-or-E*}() proceeds to Phase 3 if the matching found in $G_s$ in Phase 1 is small, and hence $\mu(G,\kappa)$ is too small. In this case, we need to find a set $E^*$ satisfying the properties of \Cref{lem:MorE}. In particular, we need to find a set $E^*$ with $\kappa(E^*)=O(\mu(G)\log n)$ such that for every large matching $M$, there are a lot of edges going through $E^*$. In order to do this, we rely on the properties of the dual variables associated with the matching problem. The algorithm \textsc{Static-Match}(), luckily for us, solves both the primal as well as the dual solution. We first begin by stating the Dual Matching Program, and then we state the properties of dual variables guaranteed by \textsc{Static-Match}().

\paragraph*{Dual Matching Program} The dual of the maximum cardinality matching linear program is the following (see \cite{Schrijver2003CombinatorialOP}).
\begin{mini*}
	{}{\sum_{v\in V} y(u)+\sum\limits_{B\subseteq V_{\text{odd}}}\frac{\card{B}-1}{2}z(B)}{}{}
	\addConstraint{yz(e)}{\geqslant 1, }{}{\quad \quad e\in E(G)}
	\addConstraint{y(u)}{\geqslant 0}{}{\quad \quad \forall u\in U}
	\addConstraint{z(B)}{\geqslant 0}{}{\quad \quad \forall B\subseteq V_{\text{odd}}}
\end{mini*}
where,
\begin{enumerate}
	\item We define $yz((u,v))\coloneqq y(u)+y(v)+\sum\limits_{\substack{B\in V_{\text{odd}},\\ (u,v)\in G[B]}} z(B)$ and, 
	\item We define $f(y,z)\coloneqq \sum_{v\in V} y(u)+\sum_{B\subseteq V_{\text{odd}}}\frac{\card{B}-1}{2}z(B)$.
\end{enumerate}
We now describe some properties of the \textsc{Static-Match}(). We state them without proof for now, and postpone the full proof to the \Cref{app:statmatch}. 

\begin{restatable}{lemma2}{statmatch}\label{lem:propstatmatch}
	There is an $O(\nicefrac{m}{\eps})$ time algorithm \textsc{Static-Match}() that takes as input a simple graph $G$ with $m$ edges and a parameter $\eps>0$, and returns a matching $M$, and dual vectors $\vec{y}$ and $\vec{z}$ that have the following properties.
	\begin{enumerate}[label=(\alph*)]
		\item \label{item:a}It returns an integral matching $M$ such that $\card{M}\geqslant (1-\eps)\cdot \mu(G)$.
		\item \label{item:b} A set $\Omega$ of laminar odd-sized sets, such that $\set{B\mid z(B)>0}\subseteq \Omega$.
		\item \label{item:c} For all odd-sized $B$  with $\card{B}\geqslant \nicefrac{1}{\eps}+1$, $z(B)=0$. 
		\item \label{item:d} Each $y(v)$ is a multiple of $\eps$ and $z(B)$ is a multiple of $\eps$. 
		\item \label{item:e} For every edge $e\in E$, $yz(e)\geqslant 1-\eps$. We say that such an edge $e$ is \emph{approximately covered} by $\vec{y}$ and $\vec{z}$. 
		\item \label{item:folklore}The value of the dual objective, $f(y,z)$ is at most $(1+\eps)\cdot \mu(G)$. 
	\end{enumerate}
\end{restatable}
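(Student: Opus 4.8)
The plan is to read cardinality matching as a unit-weight maximum-weight matching problem and to extract $(M,\vec y,\vec z)$ from a coarse-granularity run of Edmonds' primal--dual (blossom) algorithm, with the scaling algorithm of \cite{DP14} as the underlying engine. Concretely, I would clear denominators (weight $W=\lceil 1/\eps\rceil$ on every edge), run the blossom/Hungarian-forest algorithm but impose two restrictions: (i) the search in each phase only looks for augmenting paths of length at most $2k-1$ for $k=\Theta(1/\eps)$, and (ii) every dual quantity stays a multiple of $\eps$, every dual adjustment being performed in steps of $\eps$. A phase grows a Hungarian forest from all currently unmatched vertices, contracts blossoms as they appear, raises and lowers the labels $y(\cdot)$ and blossom values $z(\cdot)$ in $\eps$-steps, and either augments along a maximal vertex-disjoint family of admissible-length augmenting paths or freezes. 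With incremental-blossom data structures a phase costs $O(m)$, so $k$ phases cost $O(m/\eps)$, matching the claimed time; this is exactly the regime analysed in \cite{DP14}.

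Part \ref{item:a} is already \Cref{lem:static}, and also follows from the phase structure: after $k$ phases with no augmenting path of length $\le 2k-1$, the symmetric difference $M\triangle M^\star$ with a maximum matching $M^\star$ splits into alternating paths and cycles, each augmenting one using at least $k$ edges of $M$; since these are vertex-disjoint there are at most $|M|/k$ of them, so $|M^\star|\le(1+1/k)|M|$ and $|M|\ge(1-\eps)\mu(G)$. Part \ref{item:b}: laminarity of the family $\Omega$ of contracted blossoms (so of $\{B:z(B)>0\}\subseteq\Omega$) is a standing invariant of the blossom algorithm and is untouched by bounding the search depth. Part \ref{item:c}: every blossom ever contracted is formed along alternating walks of the Hungarian forest of total length $O(k)$, hence spans $O(k)=O(1/\eps)$ original vertices, so choosing the constant in $k$ appropriately forces $z(B)=0$ once $|B|\ge 1/\eps+1$. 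Part \ref{item:d} holds by construction, since the initialization and every adjustment use steps of $\eps$.

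For \ref{item:e} and \ref{item:folklore} the engine maintains, as a global invariant, approximate dual feasibility at the current granularity: $yz(e)\ge 1-\eps$ for \emph{every} $e\in E$, because an edge about to lose feasibility triggers a dual change and the $\eps$-granularity lets us undershoot by at most $\eps$; this is \ref{item:e}. Simultaneously the algorithm keeps approximate complementary slackness --- every matched edge near-tight ($yz(e)\le 1+\eps$), every vertex with $y(v)>0$ matched, and every blossom with $z(B)>0$ ``full'' in the sense $|M\cap B|=(|B|-1)/2$. Hence $f(\vec y,\vec z)$, which sums $y(v)$ over matched $v$ and $z(B)|M\cap B|$ over all $B$, equals $\sum_{e\in M}yz(e)\le(1+\eps)|M|\le(1+\eps)\mu(G)$, giving \ref{item:folklore}.

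The step I expect to be the real obstacle is making the general-graph blossom machinery yield \ref{item:b}--\ref{item:folklore} \emph{simultaneously} at $\eps$-granularity within the $O(m/\eps)$ budget. Bounding blossom size for \ref{item:c} forces the bounded-depth search, after which one must re-verify that the output matching is still $(1-\eps)$-optimal and that approximate feasibility \ref{item:e} still holds globally (in particular for edges never entered by any forest); and the $\eps$-rounding of the duals has to be reconciled with the ``full blossom'' and ``tight matched edge'' requirements underlying \ref{item:folklore}. Carrying this out cleanly amounts to tracking the Gabow--Tarjan / Duan--Pettie scaling potentials under the added depth constraint, which is why the detailed argument belongs in \Cref{app:statmatch}.
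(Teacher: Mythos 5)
Your route diverges from the paper's, and the step you yourself flag as the obstacle is exactly where it breaks. The paper never modifies the internals of the Duan--Pettie algorithm: it invokes \cite{DP14} black-box with parameter $\nicefrac{\eps}{3}$, observes that properties (a), (b), (d), (e), (f) are already guaranteed, and obtains (c) by a cheap post-processing step on the duals --- every $B\in\Omega$ with $z(B)>0$ and $\card{B}\geqslant \nicefrac{3}{\eps}+1$ is dissolved by adding $\nicefrac{z(B)}{2}$ to $y(v)$ for each $v\in B$ and setting $z(B)$ to $0$. Edge constraints can only gain from this, so (e) survives; laminarity and $\eps$-granularity survive; the dual objective increases by only $\nicefrac{z(B)}{2}$ per dissolved blossom, which aggregates to $O(\eps)\cdot\mu(G)$ precisely because only sets of size at least $\nicefrac{3}{\eps}+1$ are touched and $\sum_B \frac{\card{B}-1}{2}z(B)$ is bounded by the dual objective; and the whole pass costs $O(\nicefrac{m}{\eps})$ time.

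Your argument for (c), by contrast, rests on the claim that restricting the search to augmenting paths of length at most $2k-1$ with $k=\Theta(\nicefrac{1}{\eps})$ forces every contracted blossom to span $O(\nicefrac{1}{\eps})$ original vertices. That claim is false: nested contraction makes blossom size independent of search depth. Take a free vertex $r$ and petals $\set{r,a_i,b_i}$ where $a_ib_i$ is matched and $ra_i$, $rb_i$ are edges. The first petal forms a blossom of size $3$ containing $r$; once it is contracted, each subsequent edge $rb_i$ joins two even vertices of the same tree and merges another petal into the outer blossom, so after $\ell$ petals there is a single outer blossom on $2\ell+1$ original vertices even though every vertex lies at distance at most $2$ from $r$. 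Dual adjustments raise $z$ on outer blossoms, so such a large blossom can acquire $z(B)>0$, and your depth restriction does not prevent it. Hence (c) does not follow from bounding the path length; you would still need a dissolution step on the final duals like the paper's (at which point re-deriving the scaling machinery with a depth bound, and re-verifying (a), (e), (f) under that modification, buys you nothing over citing \cite{DP14} as a black box).
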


\subsubsection{Main Guarantees of Phase 3}

We now state the following helper lemma which will be instrumental in proving one of the two main properties of $E^*$, namely, that every large matching has a lot of edges passing through $E^*$.

\begin{lemma2}\label{lem:guaran2}
	Suppose $G$ is a graph, and let $H\subset G$ be a subgraph of $G$. Let $\vec{y},\vec{z}$ be the dual variables returned on execution of \textsc{Static-Match}() on $H$ and $\eps$. Let $E_{H}=\set{e\in G\mid yz(e)\geqslant 1-\eps}$. Let $M$ be any matching of $G$, then $E\setminus E_{H}$ contains at least $\card{M}-(1+\eps)^2\cdot \mu(H)$ edges of $M$.
\end{lemma2}
\begin{proof}
	Suppose we scale up the dual variables $\vec{y}$ and $\vec{z}$ by a factor of $1+\eps$. Then, $\vec{y}$ and $\vec{z}$ is a feasible solution for the dual matching program for the graph $E_{H}$. Thus, using weak duality, we have that $\mu(E_{H})\leqslant (1+\eps)^2\cdot\mu(H)$ (this inequality follows from \Cref{lem:propstatmatch}\ref{item:folklore}). Suppose $M$ is a matching of $G$, and suppose $E\setminus E_{H}$ contains fewer than $\card{M}-(1+\eps)^2\cdot \mu(H)$ edges of $M$, and therefore, fewer than $\card{M}-\mu(E_H)$ edges of $M$, then this implies that $\card{M}=\card{M\cap E_{H}}+\card{M\cap E\setminus E_H}<\mu(E_{H})+\card{M}-\mu(E_{H})=\card{M}$, which is a contradiction. Thus, $E\setminus E_H$ contains at least $\card{M}-\mu(E_H)\geqslant \card{M}-(1+\eps)^2\cdot \mu(H)$ edges of $M$. 
\end{proof}
We now state set $E^*$, and show that it has the property that $\kappa(E^*)=O(\mu(G)\log n)$.

\begin{lemma2}\label{lem:propE*}
	Let $G$ be a graph multi-graph such that $\mu(G)\geqslant \nicefrac{\eps\cdot n}{16}$, and let $\kappa$ be a capacity function on the edges of the graph. Suppose $G_s$ is the graph obtained by sampling every edge $e\in E$ with probability $p(e)=\kappa(e)\cdot \rho_{\eps}$. Let $\vec{y}, \vec{z}$ be the duals returned by \textsc{Static-Match}($G_s,\eps$). Let $E^*=\set{e\in E\mid yz(e)<1-\eps}$. Then, with high probability, $\kappa(E^*)=O(\mu(G)\log n)$. 
\end{lemma2}
\begin{proof}
	We consider the set $\mathcal{D}$ of assignments $\vec{y},\vec{z}$ to the vertices and odd components that satisfy the following properties. 
	\begin{enumerate}[label=(\alph*)]
		\item \label{item:one} For all $v\in V$, $y(v)$ is a multiple of $\eps$, and for all $B\subset V$, $z(B)$ is a multiple of $\eps$.
		\item Let $\Omega=\set{B\subset V\mid z(B)>0}$, then $\Omega$ is laminar. 
		\item\label{item:three} If $z(B)>0$ for some $B\subseteq V$, then $\card{B}\leqslant \nicefrac{1}{\eps}$. 
	\end{enumerate}
	Observe that,
	\begin{align*}
	\card{\mathcal{D}}&\leqslant \sum_{i=0}^{2n} {n^{\nicefrac{1}{\eps}}\choose i}\cdot \paren{\frac{1}{\eps}}^n\cdot \paren{\frac{1}{\eps}}^{2n}\\
	&\leqslant n\cdot {n^{\nicefrac{1}{\eps}}\choose 2n}\cdot \paren{\frac{1}{\eps}}^n\cdot \paren{\frac{1}{\eps}}^{2n}\\
	&\leqslant 2^{\paren{\nicefrac{4}{\eps}}\cdot n\log n}\\
	&\leqslant 2^{\paren{\nicefrac{16}{\eps^2}}\cdot \mu(G)\log n}\\
	&\text{(Since }\mu(G)\geqslant \nicefrac{\eps\cdot n}{16})
	\end{align*}
	This number follows from the following argument. Since $\Omega$ is laminar, it can contain at most $2n$ sets. Moreover, from \ref{item:three}, we deduce that these $2n$ sets are chosen from among $n^{\nicefrac{1}{\eps}}$ sets. Further, from \ref{item:one}, we deduce that every vertex can be assigned at most $\nicefrac{1}{\eps}$ values, and every $B\in \Omega$ can be assigned $\nicefrac{1}{\eps}$ values. Therefore, for a given choice of $\Omega$, there are at most $\paren{\nicefrac{1}{\eps}}^n\cdot \paren{\nicefrac{1}{2\eps}}^{2n}$ choices for $\vec{y}$ and $\vec{z}$. Moreover, the above number also upper bounds the number of possible duals that can be a returned by the algorithm \textsc{Static-Match}(), since the duals in $\mathcal{D}$ satisfy a subset of the properties satisfied by the duals returned by \textsc{Static-Match}(). \\ \\
	Now consider a fixed $\vec{y},\vec{z}$ that satisfies the above-mentioned properties, and let $E^*$ be the set of edges that are not approximately covered by $\vec{y},\vec{z}$. Note that for any $e\in E^*$, $\kappa(e)\leqslant \nicefrac{1}{\alpha_{\eps}}$. If not, then, $\kappa(e)=1$, then it would be sampled in $G_s$ with probability $1$, and be approximately covered by $\vec{y},\vec{z}$. Thus, for any $e\in E^*$, $p(e)=\kappa(e)\cdot \rho_{\eps}$. Now, suppose $\kappa(E^*)>17\mu(G)\log n$. Then, observe that if $\vec{y},\vec{z}$ is output by \textsc{Static-Match}() (denote this event by $\mathcal{E}_{y,z}$) then none of the edges in $E^*$ were sampled (denote this event by $\mathcal{E}_2$). Thus, we have,
	\begin{align*}
	\prob{\mathcal{E}_{y,z}}\leqslant \prob{\mathcal{E}_2}&\leqslant \prod_{e\in E^*}(1-p(e))\leqslant \exp\paren{-\sum_{e\in E^*}\kappa(e)\cdot \rho_{\eps}}\leqslant \exp\paren{-17\cdot 2^{\nicefrac{1}{\eps^2}}\cdot \mu(G)\log n}
	\end{align*}
	Now, observe that if we are able to upper bound $\prob{\bigcup\limits_{\vec{y},\vec{z}\in \mathcal{D}}\mathcal{E}_{y,z}}$, then our lemma follows. This upper bound follows by taking a union bound over the set $\mathcal{D}$.
	\begin{align*}
	\prob{\bigcup\limits_{\vec{y},\vec{z}\in \mathcal{D}}\mathcal{E}_{y,z}}&\leqslant \exp\paren{-17\cdot 2^{\nicefrac{1}{\eps^2}}\cdot \mu\log n}\cdot 2^{\paren{\nicefrac{16}{\eps^2}}\cdot \mu\log n}=O\paren{\exp\paren{-2^{\nicefrac{1}{\eps^2}\cdot \mu \log n}}}
	\end{align*}
\end{proof}
\section{Algorithm \textsc{M-or-E*}()}
\label{sec:more}
In this section, we give the main subroutine \textsc{M-or-E*}() (see \Cref{lem:MorE}). We first define a few terms, and state a known result about finding an approximate fractional matching. 

\begin{definition2}
	Recall \Cref{def:lowedges}, and consider an integral matching $M$, define $E^M_{L}(G,\kappa)=E_{L}(G,\kappa)\cap M$, and let $V_{M}^{L}$ be the endpoints of $E_{L}^M(G,\kappa)$.  
\end{definition2}

\begin{lemma2}\label{lem:fracmatch}
		Given a multigraph $G$ (possibly non-bipartite), with edge capacities $\kappa$ and $\eps\in (0,1)$, such that $\kappa(D(e))\leqslant \nicefrac{1}{\alpha_{\eps}}$ for all $e\in E$, then there is an algorithm \textsc{Frac-Match}() that takes as input $G$, $\kappa$ and $\eps$, and returns a fractional matching $\vec{x}$ such that $\sum_{e\in E}x(e)\geqslant (1-\eps)\cdot \mu(G,\kappa)$, obeying the capacities $\kappa$ and the odd set constraints. The runtime of this algorithm is $O(\nicefrac{m\cdot \log n}{\eps})$.
\end{lemma2}

\begin{proof}
	For the case of bipartite graphs, there is an algorithm that takes as input a graph $G$, an edge capacity function $\kappa$ and $\eps\in (0,1)$, and returns in $O(\nicefrac{m}{\eps})$ time, a $(1+\eps)$-approximate fractional matching of $G$, obeying capacities $\kappa$ (see \cite{BPT20}). We use this algorithm for a non-bipartite graph as follows: we run the algorithm on $\textsc{bc}(G,\kappa)$, to obtain a matching $\vec{x}$. To get a fractional matching in $G$ that obeys $\kappa$, we do the following: let $e\in E(G)$ and let $e',e''\in E(\textsc{bc}(G))$ be copies of $e$. We let $z(e)=\frac{x(e')+x(e'')}{2}$. Then, the matching $\vec{z}$ obeys $\kappa$ as well as the fractional matching constraints since $\vec{x}$ obeys them.
\end{proof}
For the purpose of the algorithm recall, definition of $\kappa^+$ in \Cref{def:kappaplus} given $\kappa$ and $\eps$. We now formalize the algorithm \textsc{M-or-E*}(). Recall that it takes as input a multigraph $G$ with $\mu(G)\geqslant \nicefrac{\eps\cdot n}{16}$.

\begin{algorithm}
	\caption{\textsc{M-or-E*}($G,\kappa,\eps,\mu$)}
	\label{alg:MorE}
	\begin{algorithmic}[1]
		\State Include each $e\in E(G)$ independently with probability $p(e)=\kappa(e)\cdot \rho_{\eps}$ into graph $G_s$. 
		\State Let $M$ and $\vec{y},\vec{z}$ be the output of $\textsc{Static-Match}(G_s,\eps)$. \Comment{Phase 1}
		\If{$\card{M}<\mu-6\eps\mu$}\Comment{Phase 3}
		\State Return $E^*=\set{e\in E(G)\mid \kappa(e)<1\text{ and }yz(e)<1-\eps}$.
		\Else \Comment{Phase 2}
		\State $M_{I}\leftarrow M\setminus E_{L}(G,\kappa)$
		\State $\vec{y}\leftarrow M_{I}^D$\Comment{Converting $M_I$ into a matching on a multigraph.}
		\State\label{line:fracmatch}$\vec{x}\leftarrow \textsc{Frac-Match}(G[V_{L}^M]\cap E_{L}(G,\kappa),\kappa^+,\eps)$
		\EndIf
		\State Return $\vec{y}+ \vec{x}$.
	\end{algorithmic}
\end{algorithm}
We now show \Cref{lem:MorE} holds, we first restate it.

\lemtrio*

\begin{proof}[Proof of \Cref{lem:MorE}]
	We first show the runtime of the algorithm. Graph $G_s$ can be computed in time $O(m)$. Using \Cref{lem:propstatmatch}, we conclude that we can compute $E^*$ in order $O(\nicefrac{m}{\eps})$ time by running \textsc{Static-Match}($G_s,\eps$) and \Cref{lem:fracmatch} implies that \Cref{line:fracmatch} takes $O(\nicefrac{m\cdot \log n}{\eps})$ time. \\ \\
We show \Cref{lem:MorE}\ref{item:aa}. First observe that $V_{L}^M$ and $V(M_I)$ are disjoint, and since $\vec{y}$ and $\vec{x}$ are fractional matchings, $\vec{z}$ is also a fractional matching. Note that $\card{M}\geqslant \mu-7\eps\mu$. Moreover, $\sum_{e\in E}x(e)\geqslant (1-\eps)\cdot \mu(G[V_L^M]\cap E_{L},\kappa^+)$. This follows from applying \Cref{lem:fracmatch} to $G[V_L^M]\cap E_{L}(G,\kappa)$ with capacity function $\kappa^+$. Recall \Cref{def:kappaplus} and \Cref{def:lowedges} to see that $\kappa^+(D(e))\leqslant \nicefrac{1}{\alpha_{\eps}}$ for $e\in G[V_L^M]\cap E_{L}(G,\kappa)$, thus satisfying the requirements of \Cref{lem:fracmatch}. Next, applying \Cref{cor:sampling}, we have, $\mu(G_{s}[V_{L}^M]\cap E_{L})\leqslant \mu(G[V_{L}^M]\cap E_{L},\kappa^+)+5\cdot \eps\mu(G)$. Thus, we have $\sum_{e\in E}x(e)\geqslant (1-\eps)\cdot \paren{\mu(G_{s}[V_L^M]\cap E_L)-5\eps\mu}\geqslant (1-\eps)\cdot \paren{\card{M\setminus M_I}-5\eps\mu}$. This is because $M\setminus M_I$ is a matching of $G_{s}[V_{L}^M]\cap E_{L}$. So, $\sum_{e\in E}z(e)\geqslant \card{M_I}+(1-\eps)\cdot (\card{M\setminus M_I}-5\eps\mu)\geqslant (1-\eps)\cdot (\mu-12\eps\mu)\geqslant \mu-13\eps\mu$.\\ \\
We now show \Cref{lem:MorE}\ref{item:aa}\ref{item:ai} and \ref{item:aii}. Consider any edge $e\in \text{supp}(\vec{z})$ with $\kappa(D(e))\leqslant \nicefrac{1}{\alpha_{\eps}^2}$, $e\in G[V_{L}^M]\cap E_{L}(G,\kappa)$. Thus, $e\in \text{supp}(\vec{x})$, and therefore, from \Cref{lem:fracmatch}, $z(e)=x(e)\leqslant \kappa^+(e)\leqslant \kappa(e)\cdot \alpha_{\eps}$ and $z(D(e)) =x(D(e))\leqslant \kappa(D(e))\cdot \alpha_{\eps}$. Similarly, for any $e\in \text{supp}(\vec{z})$ with $\kappa(D(e))>\nicefrac{1}{\alpha_{\eps}^2}$, $e\in \text{supp}(\vec{y})$. By definition of $\vec{y}$, $z(e)=y(e)=\nicefrac{\kappa(e)}{\kappa(D(e))}$ (recall \Cref{def:dist}) and $z(D(e))=1$. This proves our claim. \\ \\
Next, we show \Cref{lem:MorE}\ref{item:bi}. First recall from the assumption of \Cref{lem:MorE} that $\mu\geqslant (1-\eps)\cdot \mu(G)$. From this fact and \Cref{lem:propE*}, we can conclude that $\kappa(E^*)=O(\mu\log n)$ and that for all $e\in E^*$, $\kappa(e)<1$. Next, observe that $\mu(G_s)\leqslant (1+\eps)\cdot \card{M}\leqslant (1-6\eps)\cdot \mu$. Applying \Cref{lem:guaran2} with $H=G_s$, we have, that for any matching $M'$ of $G$, $\card{E^*\cap M'}\geqslant \card{M'}-(1+\eps)^2\cdot (1-6\eps)\cdot \mu$. If $\card{M'}\geqslant (1-3\eps)\cdot \mu$, then we have $\card{E^*\cap M'}\geqslant \eps\cdot \mu$. Finally, consider any pair of vertices $u,v$ and let $e',e''\in D(u,v)$. Then, either both $e',e''$ are both approximately covered by $\vec{y},\vec{z}$ or neither of them are. This implies that either $D(u,v)\subseteq E^*$ or $D(u,v)\cap E^*=\emptyset$.
\end{proof}

\appendix{}
\section{Missing Proofs}
\label{sec:missingproof}
We start by giving a formal proof of the following lemma. 

\lemnono*

To prove the above lemma, we state the following observation.

\begin{observation2}\label{obs:folklore3}
	Suppose a fractional matching $\vec{x}$ of $G$ satisfies the odd set constraints for all odd sets of size smaller than $\nicefrac{3}{\eps}+1$, then, the fractional matching $\vec{z}=\frac{\vec{x}}{1+\eps}$ satisfies all odd set constraints.  
\end{observation2}
\begin{proof}
	Suppose $\vec{x}$ is a fractional matching of $G$ that satisfies odd set constraints for all odd sets of size smaller than $\nicefrac{3}{\eps}+1$. Let $\vec{z}$ be a fractional matching obtained by scaling down $\vec{x}$ by $1+\eps$. Now, consider any odd set $B$ with $\card{B}\leqslant \nicefrac{3}{\eps}$, then $\vec{x}$, and therefore $\vec{z}$ satisfies the odd set constraint corresponding to this. So, we consider $B$ such that $\card{B}\geqslant \nicefrac{3}{\eps}+1$, then, 
	\begin{align*}
	\frac{\card{B}}{\card{B}-1}&=1+\frac{1}{\card{B}-1}\leqslant 1+\nicefrac{\eps}{3}
	\end{align*}
	The last inequality follows because of the fact that $\card{B}-1\geqslant \nicefrac{3}{\eps}$. Since $\vec{x}$ satisfies the fractional matching constraints, this implies: $\sum_{e\in G[B]} x(e)\leqslant \frac{\card{B}}{2}$. Thus, we have, $\sum_{e\in G[B]} z(e)\leqslant \sum_{e\in G[B]}\frac{x(e)}{1+\eps}\leqslant \frac{\card{B}}{2\cdot (1+\eps)}$. By the above argument, this is upper bounded by $\frac{\card{B}-1}{2}$. Thus, $\vec{z}$ satisfies all fractional matching constraints as well as the odd set constraints.
\end{proof}

We now state the following lemma. 

\begin{proof}[Proof of \Cref{lem:folklore2}]
Consider the maximum fractional matching $\vec{x}$ of $\textsc{bc}(G,\kappa)$, that obeys the capacity constraints $\kappa$. Let $\vec{z}$ be a fractional matching of $(G,\kappa)$ that is obtained from $\vec{x}$ as follows. Let $e\in E(G)$, and let $e'$, $e''$ be copies of $e$ in $\textsc{bc}(G,\kappa)$. Then, we let $z(e)=\frac{x(e')+x(e'')}{2}$. Note that $\vec{z}$ obeys fractional matching constraints, since $\vec{x}$ does. Additionally, observe that $z(e)\leqslant \kappa(e)$. Thus, for any odd set $B\subset V$ with $\card{B}\leqslant \nicefrac{3}{\eps}$, we have,
	\begin{align*}
	\sum_{e\in G[B]}z(e)&=\sum_{u,v\in B}z(D(u,v))\\
	&\leqslant \sum_{u,v\in B}\kappa(D(u,v))\\
	&\leqslant \nicefrac{1}{\alpha_{\eps}}\\
	&\text{(Since }\kappa(D(e))\leqslant \nicefrac{1}{\alpha_{\eps}})\\
	&\leqslant \sum_{u,v\in B}\nicefrac{\eps}{3}\\
	&\text{(Since }\nicefrac{1}{\alpha_{\eps}}\leqslant \nicefrac{\eps}{3} \text{ for $\eps<\nicefrac{1}{2}$)}\\
	&\leqslant \frac{\eps}{3}\cdot\frac{\card{B}\cdot \paren{\card{B}-1}}{2}\\
	&\leqslant \frac{\card{B}-1}{2}\\
	&\text{(Since }\card{B}\leqslant \nicefrac{3}{\eps})
	\end{align*}
	Thus, $\vec{z}$ satisfies the small odd set constraints, and we know from Observation \ref{obs:folklore3} that $\vec{y}=\frac{\vec{z}}{1+\eps}$ satisfies all odd set constraints in addition to the fractional matching constraints. Thus, we have, $(1+\eps)\cdot \sum_{e\in E(G)}y(e)=\sum_{e\in E(G)} z(e)=0.5\sum_{e\in \textsc{bc}(G)}x(e)$. Thus, if $\vec{x}$ is the optimal fractional matching of $\textsc{bc}(G)$, then we have, 
	\begin{align*}
	\mu(G,\kappa)\cdot (1+\eps)&\geqslant (1+\eps)\cdot \sum_{e\in E(G)}y(e)\\
	&=0.5\sum_{e\in \textsc{bc}(G)}x(e)\\
	&\text{(From the discussion above)}\\
	&=0.5\cdot \mu(\textsc{bc}(G), \kappa)\\
	&\text{(Since }\vec{x}\text{ is the maximum fractional matching of }(G,\kappa))
	\end{align*}
	This proves our claim. 
\end{proof}

\section{Reduction from Simple to Multigraphs}

In this section, we will prove the following lemma.

\lemmreduc*

In order to do that, we first start by proving the following simple observation.

\begin{observation2}\label{obs:nummatch}
	Let $G$ be any simple graph with maximum matching size $\mu(G)$, then $G$ contains at most $2^{O(\mu(G)\cdot \log n)}$ matchings of any size.
\end{observation2}
\begin{proof}
	Since the total number of edges is at most $n^2$, there are at most $n^2\choose j$ ways of choosing a matching of size $j$. Thus, we have, the total number of matchings possible is at most:
	\begin{align*}
	\sum_{j=1}^{\mu(G)} {n^2\choose j}\leqslant \sum_{j=1}^{\mu(G)}{2^{2\cdot j\log n}}\leqslant \mu(G)\cdot 2^{2\cdot \mu(G)\log n}\leqslant 2^{3\cdot \mu(G)\log n}
	\end{align*}
	This proves our claim.
\end{proof}
Now, in order to prove \Cref{lem:reduction}, we first give a procedure that takes as input a simple graph, and outputs a multigraph with $O(\nicefrac{\mu(G)}{\eps})$ vertices that preserves a fixed matching $M$ of $G$ approximately with probability $1-\exp\paren{-O\paren{\card{M}\cdot\eps^3}}$.

\begin{algorithm}[H]
	\algorithmicrequire{ Graph $G$, a sparsification parameter $\tau$, and a parameter $\eps>0$}\\
	\algorithmicensure{ A multigraph $\mathcal{G}(\mathcal{V},\mathcal{E})$ with $\card{\mathcal{V}}=\tau$}
	\caption{\textsc{Vertex-Red-Basic}($G,\tau,\eps$)}
	\begin{algorithmic}[1]
		\State Partition $V$ into $\tau$ bins, $\mathcal{V}\coloneqq (B_1,B_2\cdots, B_{\tau})$, by assigning every vertex to one of the $\tau$ bins uniformly at random. 
		\State For a vertex $u$, let $B(u)$ denote the bin chosen for $u$. For any edge $(u,v)\in E$, if $B(u)\neq B(v)$, then add an edge $e_{u,v}$ between $B(u)$ and $B(v)$.
		\State Return the multigraph $\mathcal{G}(\mathcal{V},\mathcal{E})$.
	\end{algorithmic}
	\label{alg:vertexred}
\end{algorithm}

\begin{lemma2}\label{lem:subred2}
	Let $G$ be a simple graph, let $\eps\in (0,1)$, and let $\tau\geqslant \frac{4\cdot \mu(G)}{\eps}$. Let $\mathcal{G}=\textsc{Vertex-Red-Basic}(G,\tau,\eps)$, and let $M$ be any fixed matching of $G$. Then with probability $1-2^{-\frac{\card{M}\cdot \eps^3}{32}}$, there exists a matching $\mathcal{M}$ in $\mathcal{G}$ such that if $\mathcal{M}\subset M$, and $\card{\mathcal{M}}\geqslant (1-\eps)\cdot \card{M}$.
\end{lemma2}
\begin{proof}
	Let $\delta=\nicefrac{\eps}{4}$, and we define $t\coloneqq 2\card{M}$. Since $\eps<1$, this implies that $\delta<\nicefrac{1}{2}$. Recall that we have $\tau=\frac{4\mu(G)}{\eps}$ bins, and we combine these bins arbitrarily so that we end up with $\frac{t}{\delta}$ groups. Call these groups $Z_1,\cdots, Z_{\nicefrac{t}{\delta}}$. Note that every vertex $v$ lands in bin $Z_i$ with probability $\nicefrac{\delta}{t}$. Call a group $Z_i$ bad if it doesn't contain even one of the $2\card{M}$ vertices of $M$. Associate a random variable $\textsf{X}_i$ with $Z_{i}$ that takes value $1$ if $Z_i$ is bad, and zero otherwise. Let $\textsf{X}=\sum_{i=1}^{\nicefrac{t}{\delta}}\textsf{X}_i$ denote the number of bad groups. We have, 
	\begin{align*}
	\prob{\textsf{X}_i=1}=\paren{1-\frac{\delta}{t}}^t=e^{-\delta}\leqslant 1-\delta+\nicefrac{\delta^2}{2}
	\end{align*}
	Thus, we have, $\expect{\textsf{X}}\leqslant \nicefrac{t}{\delta}\paren{1-\delta+\nicefrac{\delta^2}{2}}$. Note that $\textsf{X}$ is a sum of negatively correlated variables. This is because if $Z_i$ is empty, then $Z_j$ has an increased likelihood of being not empty for $i\neq j$. So, using \Cref{lem:chernoff}, we have,
	\begin{align*}
	\prob{\textsf{X}\geqslant \paren{1+\delta^2}\cdot \frac{t}{\delta}\cdot \paren{1-\delta+\nicefrac{\delta^2}{2}}}&\leqslant \exp\paren{-\delta^4\cdot \frac{t}{\delta}\cdot \paren{1-\delta+\frac{\delta^2}{2}}}\\
	&\leqslant \exp\paren{-\delta^3\cdot t+\delta^4\cdot t}\\
	&\leqslant \exp\paren{\frac{-\delta^3}{2}\cdot t}\\
	&\text{(Since $\delta<\frac{1}{2}$)}
	\end{align*}
	Thus, with probability at least $1-2^{-\frac{\card{M}\cdot \eps^3}{128}}$, $\textsf{X}\leqslant \paren{\frac{t}{\delta}-t+2\delta t}$. Thus, with probability at least $1-2^{-\frac{\card{M}\cdot \eps^3}{128}}$, we have that $t-2\delta \cdot t$ groups are good, which means that they contain at least one vertex of $M$. Now, we need to show that if at least $t-2\delta \cdot t$ bins are good, then the edges of $M$ form a matching $\mathcal{M}$ of $\mathcal{G}$ such that $\card{\mathcal{M}}\geqslant \card{M}\cdot(1-\eps)$. For every bin, we fix one vertex of $M$, and remove the rest. Since $t-2\delta \cdot t$ of the bins are good, this implies that we lost at most $2\delta \cdot t$ vertices of the matching $M$. Consequently, we deleted at most $2\delta \cdot t$ edges of $M$. The remaining edges have their endpoints in different bins, and therefore, they form a matching in $\mathcal{G}$. The size of this matching is $\card{M}-2\delta \cdot t$, which is equal to $\card{M}-\eps \card{M}$. Thus, we have our claim. 
\end{proof}

\begin{lemma2}\label{lem:subred1}
	Let \textsc{Vertex-Red}() be an algorithm that does independent runs of \textsc{Vertex-Red-Basic}(). Let $H_1,H_2\cdots, H_{\lambda}$ be multigraphs on independent runs of \textsc{Vertex-Red-Basic}() on input $G,\tau\geqslant \frac{4\cdot \mu(G)}{\eps}$ and $\eps\in (0,\nicefrac{1}{2})$, for $\lambda \geqslant \frac{1600\cdot \log n}{\eps^4(1-\eps)}$. Then, with probability at least $1-\exp(-O(\nicefrac{\mu(G)\cdot \log n}{\eps}))$, for every matching $M$ with $\card{M}\geqslant (1-\eps)\cdot \mu(G)$, there is a matching $M'$ in some $H_i$ such that $M'\subset M$, and $\card{M'}\geqslant (1-\eps)\card{M}$. 
\end{lemma2}
\begin{proof}
	Consider a fixed matching $M$, of size at least $(1-\eps)\cdot \mu(G)$, then with probability at least $1-\exp\paren{-\frac{\mu(G)\cdot (1-\eps)\cdot \eps^3}{128}}$, a fixed $H_i$ contains a matching $M'\subset M$ with $\card{M'}\geqslant (1-\eps)\card{M}$. This is implied by \Cref{lem:subred2}. Since each of the $H_j$'s are independent runs of $\textsc{Vertex-Red}()$, with probability at least $1-\exp\paren{-\frac{12\cdot\mu(G)\cdot \log n}{\eps}}$, some $H_i$ contains a matching $M'\subset M$ with $\card{M'}\geqslant (1-\eps)\card{M}$. Taking a union bound over all possible matchings of size at least $(1-\eps)\cdot \mu(G)$, from Observation \ref{obs:nummatch}, we have that with probability at least $1-\exp\paren{-\frac{9\cdot \mu(G)\cdot \log n}{\eps}}$, for every matching $M$ with $\card{M}\geqslant (1-\eps)\cdot \mu(G)$, there is a matching $M'$ in some $H_i$ such that $M'\subset M$ and $\card{M'}\geqslant (1-\eps)\cdot \card{M}$.
\end{proof}

We now show the following. 

\begin{lemma2}\label{lem:mainvertexred}
	Let $G$ be a simple graph, and let $H_1,\cdots,H_{\lambda}$ be the output of independent runs of \textsc{Static-Match}() with $G$ and $\eps$ as input, where $\lambda\geqslant \frac{1600\cdot \log n}{\eps^4\cdot (1-\eps)}$. Consider an adversary that deletes edges from $G$. We simulate this deletion process in $H_i$'s, that is, if $(u,v)$ is deleted from $G$, then $e_{u,v}$ (if present) is deleted from each of the $H_i$'s. Let $\mu_{i}^t$ denote the size of the maximum matching of $H_i$ after $t$ deletions, and similarly, let $\mu^{t}$ denote the size of the maximum matching of $G$ after $t$ deletions. Suppose $\mu^t>(1-\eps)\mu^0$, then, $\mu^t_{i}>(1-2\eps)\mu^0$ for some $i\in [\lambda]$.
\end{lemma2}
\begin{proof}
	Let $M$ be a matching in $G$ satisfying the statement of the lemma. That is, suppose at time $t$, $\card{M}>(1-\eps)\mu^0$. Since the adversary is only performing deletions, $\card{M}>(1-\eps)\mu^0$ initially, when no deletions had been performed. So, by \Cref{lem:subred1}, before any deletions are performed, there is a matching $M'$ in one of the $H_i$'s such that $M'\subset M$, and $\card{M'}\geqslant (1-\eps)\cdot \card{M}>(1-2\eps+\eps^2)\mu^0$. Since all of the edges of $M$ survive at time $t$, this is true of $M'$ as well. Thus, $\mu^t_{i}>(1-2\eps)\mu^0$ for some $i\in [\lambda]$. This proves our claim. 
\end{proof}

We now show that \Cref{lem:reduction} follows from the lemmas proved in this section.

\begin{proof}[Proof of \Cref{lem:reduction}]
	Consider \Cref{lem:reduction}\ref{item:first}. This is implied by the properties of \textsc{Vertex-Red}(), \Cref{lem:subred1}, and \Cref{lem:subred2}. Property \Cref{lem:reduction}\ref{item:third} is on the other hand implied by the contrapositive of \Cref{lem:mainvertexred}. 
\end{proof}

\section{Properties of \textsc{Static-Match}()}
\label{app:statmatch}

In this section we will show \Cref{lem:propstatmatch}, which we first restate.

\statmatch*
We first note that the properties \ref{item:a}-\ref{item:b} and \ref{item:d}-\ref{item:folklore} are evident in Property 3.1 and the proof of Lemma 2.3 of \cite{DP14}. We now show that \ref{item:c} also holds.

\begin{proof}[Proof of \Cref{lem:propstatmatch}]
	Let $\textsc{Basic-Static-Match}()$ be the algorithm satisfying of \cite{DP14} satisfying \ref{item:a}-\ref{item:b} and \ref{item:d}-\ref{item:folklore}. We now show how to obtain \textsc{Static-Match}() from \textsc{Basic-Static-Match}(). We run \textsc{Basic-Static-Match}() with input $G$ and $\delta=\nicefrac{\eps}{3}$. Thus, we get a matching $M$ of size at least $(1-\nicefrac{\eps}{3})\cdot \mu(G)$. Moreover, the duals $\vec{y}$ and $\vec{z}$ output by the algorithm have $yz(V)\leqslant (1+\nicefrac{\eps}{3})\cdot \mu(G)$. We consider the following procedure: for every $B\in \Omega$ with $z(B)>0$ and $\card{B}\geqslant \frac{3}{\eps}+1$, we increase $y(v)$ by $\nicefrac{z(B)}{2}$ for every $v\in B$, and we decrease $z(B)$ to $0$. Thus, each $y(v)$ is still a multiple of $\eps$ and each $z(B)$ is still a multiple of $\eps$. We conclude that \ref{item:d} still holds. Moreover, this transformation keeps the value of the dual constraint for every edge the same. Thus, we still satisfy \ref{item:e}. We update $\Omega$ by removing $B$ from it. The set $\Omega$ still remains laminar. Thus, \ref{item:b} is still satisfied. We first observe that the time taken to do this is linear in the sum of the sizes of the odd sets $B$ with $z(B)>0$. Note that since $yz(V)$ sums to at most $(1+\nicefrac{\eps}{3})\mu(G)$, and non-zero $z(B)$ have value at least $\nicefrac{\eps}{3}$, this implies that the sums of the sizes of the odd sets is at most $\nicefrac{3}{\eps}\cdot (1+\nicefrac{\eps}{3})\cdot \mu(G)$. Thus, the time taken for the procedure is $O(\nicefrac{m}{\eps})$. Next, we observe that the value of $yz(V)$ changes by at most $\sum_{B:\card{B}\geqslant \nicefrac{3}{\eps}+1}\frac{z(B)}{2}$. This value is at most $\nicefrac{\eps}{3}\cdot (1+\nicefrac{\eps}{3})\cdot \mu(G)$, as shown by the following calculation.
	
	\begin{equation*}
	\begin{split}
		\paren{\frac{3}{\eps}}\cdot \sum_{B:\card{B}\geqslant \nicefrac{3}{\eps}+1}\frac{z(B)}{2}\leqslant \sum_{B:\card{B}\geqslant \nicefrac{3}{\eps}+1}\paren{\frac{\card{B}-1}{2}}\cdot z(B)&\leqslant (1+\nicefrac{\eps}{3})\cdot \mu(G)
	\end{split}
	\end{equation*}
	
	Thus, the new $yz(V)$ has value at most $(1+\nicefrac{\eps}{3})^2\cdot \mu(G)$. This implies that \ref{item:folklore} holds. Finally, since every blossom $B$ of size at least $\nicefrac{3}{\eps}+1$ has $z(B)=0$, thus, \ref{item:c} holds. 
\end{proof}

\section{Rounding Fractional Matchings}
\label{sec:rounding}

In this section, we will prove \Cref{thm:sparse}. More concretely, we give state the procedure \textsc{Sparsification}() that takes as input a fractional matching $\vec{x}$ of a simple graph $G$, and outputs a graph $H$, of size $\Tilde{O}(\mu(G))$. If $x(e)\leqslant \eps^6$ for all $e\in E(G)$, then, $H$ contains an integral matching of size at least $(1-10\eps)\cdot \sum_{e\in E}x(e)$ in its support. The proof of this theorem is implicit in the work of \cite{Wajc2020}, but we describe it here for completeness. We first state the algorithm, and then describe some of its properties. The algorithm uses a dynamic edge coloring algorithm as a subroutine. The update time of the subroutine is $O(\log n)$ in the worst case. The sparsification algorithm takes as input a fractional matching $\vec{x}$ and a parameter $\eps>0$. Then, it classifies $E(G)$ into classes as follows: $E_i=\set{e\mid x(e)\in [(1+\eps)^{-i}, (1+\eps)^{-i+1})}$. Note that we only consider edges $e\in E(G)$ with $x(e)\geqslant \paren{\nicefrac{\eps}{n}}^2$, since the total contribution of these edges to fractional matching is at most $\eps^2$, and we can afford to ignore them if we want to compute a $(1+\eps)$ approximation. We now state the algorithm. 

\begin{algorithm}
	\caption{\textsc{Sparsification}($\vec{x},\eps$)}
	\begin{algorithmic}[1]
		\State $d\leftarrow \frac{4\cdot \log\paren{\nicefrac{2}{\eps}}}{\eps^2}$
		\For{$i\in \set{1,2,\cdots, 2\log_{1+\eps}(\nicefrac{n}{\eps})}$}
		\State Compute a $3\lceil (1+\eps)^i\rceil$-edge colouring $\Phi_i$ of $E_i$.
		\State Let $S_i$ be a sample of $3\cdot \min\set{\lceil d\rceil,\lceil (1+\eps)^i\rceil}$ colours without replacement in $\Phi_i$. \label{line:blah}
		\State Return $K=(V,\cup_{i}\cup_{M\in S_i}M)$
		\EndFor
	\end{algorithmic}
\end{algorithm}

We state the guarantees of the edge coloring subroutine. 

\begin{observation2}
	The size of $K$ output by \textsc{Sparsification}($\vec{x},\eps$) is,
	\begin{align*}
	\card{E(H)}=O\paren{\frac{\log (\nicefrac{n}{\eps})}{\eps}\cdot d\cdot \mu(\textup{supp}(H))}
	\end{align*}
\end{observation2}

\begin{lemma2}\label{lem:edgecolour}\cite{BDHN18}
There is a deterministic dynamic algorithm that maintains a $2\Delta-1$ edge coloring of a graph $G$ in $O(\log n)$ worst case update time, where $\Delta$ is the maximum degree of the graph $G$. 
\end{lemma2}

\begin{observation2}\label{obs:sampling}
Let $\eps\in (0,\nicefrac{1}{2})$ and suppose the input to \textsc{Sparsification}($\vec{x},\eps$) is a matching $\vec{x}$ with $x(e)\leqslant \eps^6$, then, $\card{S_i}=3\cdot \lceil d\rceil$. 
\end{observation2}

To show that $K$ contains a matching of size at least $(1-\eps)\cdot \sum_{e\in E(G)}x(e)$ in its support, we will show a random fractional matching $\vec{y}$ in $K$ that sends flow at most $\eps$ through each of its edges. Moreover, $\expect{\sum_{e\in E}y(e)}\geqslant (1-6\eps)\cdot \sum_{e\in E(G)} x(e)$. This will show the existence of a large matching that sends flow at most $\eps$ through each of its edges. To show this, we give some properties of the algorithm.

\begin{lemma2}\label{lem:probbounds}
Let $\eps\in (0,\nicefrac{1}{2})$ and let $\vec{x}$ be the input to \textsc{Sparsification}($\vec{x},\eps$) such that $x(e)\leqslant \eps^6$ for all $e\in E(G)$. Then, for every edge $e$, $\prob{e\in K}\in \bracket{\nicefrac{x(e)\cdot d}{(1+\eps)^2},x(e)\cdot d\cdot (1+\eps)}$.
\end{lemma2}
\begin{proof}
Since $\vec{x}$ has the property that for every $e\in E$, $x(e)\leqslant \eps^6$, this implies that we sample $3\cdot \lceil d\rceil$ from $\Phi_i$ colors for each $i$ (from Observation \ref{obs:sampling}). Thus, if we consider an edge $e\in E_i$, then, we have,
\begin{align*}
\prob{e\in K}=\frac{\lceil d\rceil}{\lceil(1+\eps)^i\rceil}\leqslant \frac{d\cdot (1+\eps)}{(1+\eps)^i}\leqslant d\cdot (1+\eps)\cdot x(e)
\end{align*}
The last inequality follows from the fact that $x(e)\in \bracket{(1+\eps)^{-i},(1+\eps)^{-i+1}}$. Moreover, we have,
\begin{align*}
	\frac{\lceil d\rceil}{\lceil (1+\eps)^i\rceil}\geqslant \frac{d}{(1+\eps)^i+1}
	\geqslant \frac{d}{(1+\eps)^{i+1}}
	\geqslant \frac{d\cdot x(e)}{(1+\eps)^2}
	\end{align*}
The first inequality follows from the fact that $\lceil d\rceil \geqslant d$, and $\lceil (1+\eps)^i\rceil\leqslant (1+\eps)^i+1$. The second inequality follows from the following reasoning.
\begin{align*}
(1+\eps)^{-i}\leqslant x(e)\leqslant \eps^6\leqslant \nicefrac{1}{d}\leqslant \eps
\end{align*}
Thus, $\nicefrac{1}{\eps}\leqslant (1+\eps)^i$, and $1\leqslant \eps\cdot (1+\eps)^i$, so the second inequality follows. The last inequality follows from the fact that $x(e)\in \bracket{(1+\eps)^{-i},(1+\eps)^{-i+1}}$.
\end{proof}

For an edge $e\in E(G)$, we define $X_e$ to be an indicator random variable that takes value $1$ if $e\in K$, and $0$ otherwise. 

\begin{observation2}\label{obs:negatcorr}
	For a vertex $v$, the variables $\set{X_e\mid e\text{ incident on }v}$ are negatively associated. 
\end{observation2}
\begin{proof}
	If the edges $e$ and $e'$ incident on $v$ belong in $E_i$ and $E_j$ where $i\neq j$, then $X_e$ and $X_{e'}$ are independent. On the other hand, if they belong in the same $E_i$, then recall we did an edge colouring on $E_i$, so $e'$ and $e$ got different colours. So, if the colour corresponding to $e$ is picked into $K$, then this reduces the probability of the colour corresponding to $e'$ being picked into $K$.
\end{proof}

\begin{lemma2}\label{cor:cond}
	From Observation \ref{obs:negatcorr} and \Cref{lem:probbounds}, we can conclude that for edges $e$ and $e'$ incident on $v$, 
	\begin{align*}
	\prob{X_e=1\mid X_{e'}=1}\leqslant \prob{X_e=1}\leqslant x(e)\cdot d\cdot (1+\eps)
	\end{align*}
\end{lemma2}

\begin{lemma2}
	For any vertex $v$ and edge $e'$ incident on $v$, the variables $\set{[X_e\mid X_{e'}]\mid e\text{ incident on }v}$ are negatively associated.
\end{lemma2}

\begin{lemma2}[Concentration Bound]\label{lem:concentrate}
	Let $X$ be the sum of negatively associated random variables $X_1,\cdots, X_k$ with $X_i\in [0,M]$ for each $i\in [k]$. Then, for $\sigma^2=\sum_{i=1}^k \var{X_i}$ and all $a>0$, 
	\begin{align*}
	\prob{X>\expect{X}+a}\leqslant \exp\paren{\frac{-a^2}{2\paren{\sigma^2+\nicefrac{a\cdot M}{3}}}}
\end{align*}
\end{lemma2}

Following theorem directly implies \Cref{thm:sparse}. 

\begin{theorem2}\label{thm:largeintegral}
Let $K$ be the subgraph of $G$ output by \textsc{Sparsification}(), when run on the matching $\vec{x}$ with parameters $\eps\in (0,\nicefrac{1}{2})$. If $x(e)\leqslant \eps^6$ for all $e\in E$, then $K$ supports a fractional matching $\vec{y}$ such that $y(e)\leqslant \eps$, and $\sum_{e\in E}y(e)\geqslant (1-6\eps)\cdot \sum_{e\in E}x(e)$.
\end{theorem2}
\begin{proof}
	To show the theorem, we will describe a process that simulates \textsc{Sparsification}($\vec{x},\eps$), and outputs a random matching $\vec{y}$ such that $y(e)\leqslant \eps$ for every $e\in E$. Additionally, $\expect{\sum_{e\in E}y(e)}\geq(1-6\eps)\sum_{e\in E}x(e)$. As an intermediate step, let $z(e)=\frac{(1-4\eps)}{d}\cdot X_e$. So, we have,
	\begin{align*}
	\expect{z(e)}&\geqslant \expect{z(e)\mid X_e=1}\cdot \prob{X_e=1}\geqslant \frac{1-4\eps}{d}\cdot \frac{x(e)\cdot d}{(1+\eps)^2}\geqslant (1-6\eps)\cdot x(e)
	\end{align*}
	
The second to last inequality follows from \Cref{lem:probbounds}. Next define $\vec{y}$ as follows:
\begin{align*}
y(e)=\begin{cases}
0 \text{ if for one of the endpoints }v\text{ of }e,\  \sum_{e'\in v}z(e')>1\\
z(e) \text{ otherwise.}
\end{cases}
\end{align*}
Essentially, our procedure is creating matching $\vec{z}$ as follows: it assigns value $\frac{1-4\eps}{d}$ to $z(e)$ if $e$ was included in $K$, and $0$ otherwise. The value $y(e)$ is the same $z(e)$ in all cases except when $e$ is picked into $K$, but at one of the endpoints $v$, $\sum_{e'\ni v}z(e')>1$ , that is, the fractional matching constraint is violated at $v$. We show that it is unlikely that an edge (when picked) has one of its endpoints violated. Let $e'$ be an edge incident on $v$, we to bound the probability that $z(e')\neq y(e')$. 
\begin{align*}
\expect{\sum_{e\in v}z(e)\mid X_{e'}=1}&\leqslant \frac{1-4\eps}{d}+\expect{\sum_{e'\neq e, e\in v}z(e)\mid X_{e'}}\\
&\leqslant \eps+\sum_{e\in v}x(e)\cdot (1+\eps)\cdot d\cdot\paren{\frac{1-4\eps}{d}}\\
&\text{(Since $\nicefrac{1}{d}\leqslant \eps$ and from \Cref{cor:cond})}\\
&\leqslant (1-\eps)
\end{align*}
Note that at any end point of $e'$, conditioned on $e'$ being sampled, the expected sum of $z(e)$'s at that endpoint is upper bounded by $(1-\eps)$. Now, $z(e)$ is assigned value $0$ only if the sum of the $z(e)$'s deviates from the expected value by $\eps$. To see this, we want to compute $\var{[z(e)\mid X_{e'}]}$, where $e$ and $e'$ share an end point. Note that $[z(e)\mid X_{e'}]$ takes value $\frac{1-4\eps}{d}$ with probability $\prob{X_e\mid X_{e'}}$, otherwise it takes value $0$.
\begin{align*}
\var{\bracket{z(e)\mid X_{e'}}}&\leqslant \expect{\bracket{z(e)\mid X_{e'}}^2}\\
&\leqslant\paren{\frac{1-4\eps}{d}}^2\cdot \prob{X_e\mid X_{e'}}\\
&\leqslant \paren{\frac{1-4\eps}{d}}^2\cdot x(e)\cdot d\cdot (1+\eps)\\
&\text{(From \Cref{cor:cond})}\\
&\leqslant \frac{x(e)}{d}
\end{align*}
This implies that $\sum_{e\in v}\var{\bracket{z(e)\mid X_{e'}}}\leqslant \frac{1}{d}$. So, we want to compute the probability that the sum of the random variables $\set{\bracket{z(e)\mid X_{e'}}}$ deviates from the expected value by $\eps$. Applying \Cref{lem:concentrate}, we have,
\begin{align*}
\prob{\sum_{e\in v}\bracket{z(e)\mid X_{e'}}\geqslant \expect{\sum_{e\in v}\bracket{z(e)\mid X_{e'}}}+\eps}&\leqslant \exp\paren{\frac{-\eps^2}{2\paren{\frac{1}{d}+\frac{\eps}{3\cdot d}}}}\\
&\paren{\text{Since }z(e)\in \bracket{0,\nicefrac{1-4\eps}{d}}}\\
&\leqslant \exp\paren{-\eps^2\cdot 0.25\cdot d}\\
&\text{(Since }\eps\in (0,1))\\
&\leqslant \frac{\eps}{2}\\
&\text{(Since }d=\frac{4\cdot \log(\nicefrac{2}{\eps})}{\eps^2})
\end{align*}
Taking union bound over both endpoints, we know that $\prob{y(e)=z(e)\mid X_e=1}\geqslant (1-\eps)$. Thus, we have:
\begin{align*}
\expect{y(e)}&=\paren{\frac{1-4\eps}{d}}\cdot \prob{y(e)=z(e)}\\
					&=\paren{\frac{1-4\eps}{d}}\cdot \prob{y(e)=z(e)\mid X_e=1}\prob{X_e=1}\\
					&\geqslant \paren{\frac{1-4\eps}{d}}\cdot (1-\eps)\cdot \frac{x(e)\cdot d}{(1+\eps)^2}\\
					&\geqslant (1-7\eps)\cdot x(e)
\end{align*}
Thus, $\expect{\sum_{e\in E}y(e)}\geqslant (1-7\eps)\cdot \sum_{e\in E}x(e)$. Moreover, for all $e\in E$, $y(e)\leqslant \eps^6$. This proves our claim. 
\end{proof}

We now restate the lemma, and then show its proof. 

\lemuno*
\begin{proof}
Note that \ref{lem:sparseb} is implied by \Cref{thm:largeintegral}, and the fact that any fractional matching $\vec{y}$ with $y(e)\leqslant \eps$ for all $e\in E$, satisfies odd set constraints for all odd sets of size at most $\nicefrac{1}{\eps}$. To see \ref{lem:sparsea}, note that deleting $e$ from supp($\vec{x}$) corresponds to just deleting $e$ from $G_i$, and since our edge coloring algorithm is able to handle edge insertions and deletions in $\Tilde{O}_{\eps}(1)$ time, such updates can be handled in $\Tilde{O}_{\eps}(1)$ update time (see \Cref{lem:edgecolour}). Finally, if an update reduces $x(e)$ for some $e\in E$, then this corresponds to deleting $e$ from some $G_i$ and adding it to $G_j$ for some $j<i$. Thus, the edge colouring algorithms running on $G_i$ and $G_j$ have to handle an edge insertion and deletion respectively, and this can be done in $\Tilde{O}_{\eps}(1)$ time (see \Cref{lem:edgecolour}). 
\end{proof}

\newpage
\bibliographystyle{alpha}
\bibliography{references.bib}

\newcommand{\etalchar}[1]{$^{#1}$}
\begin{thebibliography}{EKMS11}

\bibitem[AG11]{AG11}
Kook~Jin Ahn and Sudipto Guha.
\newblock Linear programming in the semi-streaming model with application to
  the maximum matching problem.
\newblock In {\em Proceedings of the 38th International Conference on Automata,
  Languages and Programming - Volume Part II}, ICALP'11, page 526–538,
  Berlin, Heidelberg, 2011. Springer-Verlag.

\bibitem[AKL19]{AKL19}
Sepehr Assadi, Sanjeev Khanna, and Yang Li.
\newblock The stochastic matching problem with (very) few queries.
\newblock {\em ACM Transactions on Economics and Computation (TEAC)},
  7(3):1--19, 2019.

\bibitem[AV20a]{AV20}
Nima Anari and Vijay~V. Vazirani.
\newblock Matching is as easy as the decision problem, in the nc model.
\newblock In {\em ITCS}, 2020.

\bibitem[AV20b]{NV20}
Nima Anari and Vijay~V. Vazirani.
\newblock Planar graph perfect matching is in nc.
\newblock {\em J. ACM}, 67(4), may 2020.

\bibitem[BCHN18]{BDHN18}
Sayan Bhattacharya, Deeparnab Chakrabarty, Monika Henzinger, and Danupon
  Nanongkai.
\newblock Dynamic algorithms for graph coloring.
\newblock In {\em Proceedings of the Twenty-Ninth Annual ACM-SIAM Symposium on
  Discrete Algorithms}, pages 1--20. SIAM, 2018.

\bibitem[BGS20]{BPT20}
Aaron Bernstein, Maximilian~Probst Gutenberg, and Thatchaphol Saranurak.
\newblock Deterministic decremental reachability, scc, and shortest paths via
  directed expanders and congestion balancing.
\newblock In {\em 2020 IEEE 61st Annual Symposium on Foundations of Computer
  Science (FOCS)}, pages 1123--1134. IEEE, 2020.

\bibitem[BHI15]{BHI15}
Sayan Bhattacharya, Monika Henzinger, and Giuseppe~F. Italiano.
\newblock Deterministic fully dynamic data structures for vertex cover and
  matching.
\newblock In {\em Proceedings of the Twenty-Sixth Annual ACM-SIAM Symposium on
  Discrete Algorithms}, SODA '15, page 785–804, USA, 2015. Society for
  Industrial and Applied Mathematics.

\bibitem[BHN16]{BHN16}
Sayan Bhattacharya, Monika Henzinger, and Danupon Nanongkai.
\newblock New deterministic approximation algorithms for fully dynamic
  matching.
\newblock In {\em Proceedings of the Forty-Eighth Annual ACM Symposium on
  Theory of Computing}, STOC '16, page 398–411, New York, NY, USA, 2016.
  Association for Computing Machinery.

\bibitem[BK21]{BK21}
Sayan Bhattacharya and Peter Kiss.
\newblock Deterministic rounding of dynamic fractional matchings.
\newblock In Nikhil Bansal, Emanuela Merelli, and James Worrell, editors, {\em
  48th International Colloquium on Automata, Languages, and Programming,
  {ICALP} 2021, July 12-16, 2021, Glasgow, Scotland (Virtual Conference)},
  volume 198 of {\em LIPIcs}, pages 27:1--27:14. Schloss Dagstuhl -
  Leibniz-Zentrum f{\"{u}}r Informatik, 2021.

\bibitem[BK22]{BK22}
Soheil Behnezhad and Sanjeev Khanna.
\newblock {\em New Trade-Offs for Fully Dynamic Matching via Hierarchical
  EDCS}, pages 3529--3566.
\newblock SIAM, 2022.

\bibitem[BLM20]{BLM20}
Soheil Behnezhad, Jakub Lacki, and Vahab Mirrokni.
\newblock Fully dynamic matching: Beating 2-approximation in
  $\delta^{\epsilon}$ update time.
\newblock In {\em Proceedings of the Fourteenth Annual ACM-SIAM Symposium on
  Discrete Algorithms}, pages 2492--2508. SIAM, 2020.

\bibitem[BLSZ14]{BLSZ14}
Bartlomiej Bosek, Dariusz Leniowski, Piotr Sankowski, and Anna Zych.
\newblock Online bipartite matching in offline time.
\newblock In {\em 2014 IEEE 55th Annual Symposium on Foundations of Computer
  Science}, pages 384--393, 2014.

\bibitem[BS15]{BS15}
Aaron Bernstein and Cliff Stein.
\newblock Fully dynamic matching in bipartite graphs.
\newblock In Magn{\'{u}}s~M. Halld{\'{o}}rsson, Kazuo Iwama, Naoki Kobayashi,
  and Bettina Speckmann, editors, {\em Automata, Languages, and Programming -
  42nd International Colloquium, {ICALP} 2015, Kyoto, Japan, July 6-10, 2015,
  Proceedings, Part {I}}, volume 9134 of {\em Lecture Notes in Computer
  Science}, pages 167--179. Springer, 2015.

\bibitem[BS16]{BS16}
Aaron Bernstein and Cliff Stein.
\newblock Faster fully dynamic matchings with small approximation ratios.
\newblock In Robert Krauthgamer, editor, {\em Proceedings of the Twenty-Seventh
  Annual {ACM-SIAM} Symposium on Discrete Algorithms, {SODA} 2016, Arlington,
  VA, USA, January 10-12, 2016}, pages 692--711. {SIAM}, 2016.

\bibitem[CCE{\etalchar{+}}16]{CCEHMMV16}
Rajesh Chitnis, Graham Cormode, Hossein Esfandiari, MohammadTaghi Hajiaghayi,
  Andrew McGregor, Morteza Monemizadeh, and Sofya Vorotnikova.
\newblock Kernelization via sampling with applications to finding matchings and
  related problems in dynamic graph streams.
\newblock In {\em Proceedings of the twenty-seventh annual ACM-SIAM symposium
  on Discrete algorithms}, pages 1326--1344. SIAM, 2016.

\bibitem[Dah16]{Dahlgaard16}
S{\o}ren Dahlgaard.
\newblock On the hardness of partially dynamic graph problems and connections
  to diameter.
\newblock {\em CoRR}, abs/1602.06705, 2016.

\bibitem[DP14]{DP14}
Ran Duan and Seth Pettie.
\newblock Linear-time approximation for maximum weight matching.
\newblock {\em Journal of the ACM (JACM)}, 61(1):1--23, 2014.

\bibitem[EKMS11]{EKMS11}
Sebastian Eggert, Lasse Kliemann, Peter Munstermann, and Anand Srivastav.
\newblock Bipartite matching in the semi-streaming model.
\newblock {\em Algorithmica}, 63:490--508, 2011.

\bibitem[FGT16]{FGT16}
Stephen Fenner, Rohit Gurjar, and Thomas Thierauf.
\newblock Bipartite perfect matching is in quasi-nc.
\newblock In {\em Proceedings of the Forty-Eighth Annual ACM Symposium on
  Theory of Computing}, STOC '16, page 754–763, New York, NY, USA, 2016.
  Association for Computing Machinery.

\bibitem[FMU21]{MMU22}
Manuela Fischer, Slobodan Mitrovic, and Jara Uitto.
\newblock Deterministic (1+{\(\epsilon\)})-approximate maximum matching with
  poly(1/{\(\epsilon\)}) passes in the semi-streaming model.
\newblock {\em CoRR}, abs/2106.04179, 2021.

\bibitem[GLS{\etalchar{+}}19]{GLSSS19}
Fabrizio Grandoni, Stefano Leonardi, Piotr Sankowski, Chris Schwiegelshohn, and
  Shay Solomon.
\newblock {(1} + {\(\epsilon\)})-approximate incremental matching in constant
  deterministic amortized time.
\newblock In Timothy~M. Chan, editor, {\em Proceedings of the Thirtieth Annual
  {ACM-SIAM} Symposium on Discrete Algorithms, {SODA} 2019, San Diego,
  California, USA, January 6-9, 2019}, pages 1886--1898. {SIAM}, 2019.

\bibitem[GP13]{GP13}
Manoj Gupta and Richard Peng.
\newblock Fully dynamic (1+ e)-approximate matchings.
\newblock {\em 2013 IEEE 54th Annual Symposium on Foundations of Computer
  Science}, pages 548--557, 2013.

\bibitem[Gup14]{Gupta2014}
Manoj Gupta.
\newblock Maintaining approximate maximum matching in an incremental bipartite
  graph in polylogarithmic update time.
\newblock In {\em FSTTCS}, 2014.

\bibitem[HK73]{HK73}
John~E. Hopcroft and Richard~M. Karp.
\newblock An n5/2 algorithm for maximum matchings in bipartite graphs.
\newblock {\em SIAM J. Comput.}, 2:225--231, 1973.

\bibitem[HKNS15]{HKNS15}
Monika Henzinger, Sebastian Krinninger, Danupon Nanongkai, and Thatchaphol
  Saranurak.
\newblock Unifying and strengthening hardness for dynamic problems via the
  online matrix-vector multiplication conjecture.
\newblock In {\em Proceedings of the forty-seventh annual ACM symposium on
  Theory of computing}, pages 21--30, 2015.

\bibitem[Kar73]{Karzanov1973}
A.~Karzanov.
\newblock On finding a maximum flow in a network with special structure and
  some applications.
\newblock {\em Matematicheskie Voprosy Upravleniya Proizvodstvom (L.A.
  Lyusternik, ed.), Moscow State Univ. Press, Moscow, 1973, Issue 5, pp.
  81–94, in Russian.}, 1973.

\bibitem[KPP16]{KPP16}
Tsvi Kopelowitz, Seth Pettie, and Ely Porat.
\newblock Higher lower bounds from the 3sum conjecture.
\newblock In {\em Proceedings of the twenty-seventh annual ACM-SIAM symposium
  on Discrete algorithms}, pages 1272--1287. SIAM, 2016.

\bibitem[MV80]{MV80}
Silvio Micali and Vijay~V. Vazirani.
\newblock An o(sqrt({\(\vert\)}v{\(\vert\)}) {\(\vert\)}e{\(\vert\)}) algorithm
  for finding maximum matching in general graphs.
\newblock In {\em 21st Annual Symposium on Foundations of Computer Science,
  Syracuse, New York, USA, 13-15 October 1980}, pages 17--27. {IEEE} Computer
  Society, 1980.

\bibitem[MV00]{MV00}
Meena Mahajan and Kasturi~R. Varadarajan.
\newblock A new nc-algorithm for finding a perfect matching in bipartite planar
  and small genus graphs (extended abstract).
\newblock In {\em STOC '00}, 2000.

\bibitem[RSW22]{RSW22}
Mohammad Roghani, Amin Saberi, and David Wajc.
\newblock {Beating the Folklore Algorithm for Dynamic Matching}.
\newblock In Mark Braverman, editor, {\em 13th Innovations in Theoretical
  Computer Science Conference (ITCS 2022)}, volume 215 of {\em Leibniz
  International Proceedings in Informatics (LIPIcs)}, pages 111:1--111:23,
  Dagstuhl, Germany, 2022. Schloss Dagstuhl -- Leibniz-Zentrum f{\"u}r
  Informatik.

\bibitem[Sch03]{Schrijver2003CombinatorialOP}
Alexander Schrijver.
\newblock Combinatorial optimization. polyhedra and efficiency.
\newblock 2003.

\bibitem[ST17]{ST17}
Ola Svensson and Jakub Tarnawski.
\newblock The matching problem in general graphs is in quasi-nc.
\newblock {\em 2017 IEEE 58th Annual Symposium on Foundations of Computer
  Science (FOCS)}, pages 696--707, 2017.

\bibitem[Waj20]{Wajc2020}
David Wajc.
\newblock Rounding dynamic matchings against an adaptive adversary.
\newblock In {\em Proceedings of the 52nd Annual ACM SIGACT Symposium on Theory
  of Computing}, pages 194--207, 2020.

\end{thebibliography}
\end{document}